%% LyX 2.2.2 created this file.  For more info, see http://www.lyx.org/.
%% Do not edit unless you really know what you are doing.
\documentclass[english]{article}
\usepackage[T1]{fontenc}
\usepackage[latin9]{inputenc}
\usepackage{color}
\usepackage{float}
\usepackage{amsmath}
\usepackage{amsthm}
\usepackage{graphicx}
\usepackage{setspace}
\usepackage{algorithmic}
\usepackage{authblk}
\onehalfspacing
\setlength{\textheight}{9.5in}
\setlength{\textwidth}{17cm}
\setlength{\topmargin}{-2cm}
\setlength{\marginparwidth}{0cm}
\setlength{\marginparpush}{0cm}
\setlength{\oddsidemargin}{-0.8cm}
\setlength{\evensidemargin}{-0.8cm}
\makeatletter
%%%%%%%%%%%%%%%%%%%%%%%%%%%%%% Textclass specific LaTeX commands.
\numberwithin{equation}{section}
\numberwithin{figure}{section}
\newtheoremstyle{dotless}{}{}{\itshape}{}{\bfseries}{}{ }{}
  \theoremstyle{plain}
  \newtheorem*{thm*}{\protect\theoremname}
  \theoremstyle{dotless}
  \newtheorem*{thm-}{\protect\theoremname}
\theoremstyle{plain}
\newtheorem{thm}{\protect\theoremname}
  \theoremstyle{plain}
  \newtheorem{lem}[thm]{\protect\lemmaname}
  \theoremstyle{plain}
  \newtheorem*{lem*}{\protect\lemmaname}
  \theoremstyle{dotless}
  \newtheorem*{lem-}{\protect\lemmaname}
  \theoremstyle{plain}
  \newtheorem{cor}[thm]{\protect\corollaryname}
  \theoremstyle{remark}
  \newtheorem*{rem*}{\protect\remarkname}
  \theoremstyle{remark}
  \newtheorem{rem}[thm]{\protect\remarkname}
  \theoremstyle{plain}
  \newtheorem{ques}[]{Question}
  \theoremstyle{remark}  
  \newtheorem*{claim*}{\protect\claimname}
  \theoremstyle{remark}
  \newtheorem{claim}[thm]{\protect\claimname}
  \theoremstyle{definition}
  \newtheorem{defn}[thm]{\protect\definitionname}
  \theoremstyle{definition}
  \newtheorem{definition}[thm]{\protect\definitionname}
\floatstyle{ruled}
\newfloat{algorithm}{tbp}{loa}
\providecommand{\algorithmname}{Algorithm}
\floatname{algorithm}{\protect\algorithmname}

%%%%%%%%%%%%%%%%%%%%%%%%%%%%%% User specified LaTeX commands.
\usepackage{lmodern}
\usepackage{footnote}
\makesavenoteenv{tabular}
\newcommand{\ket}[1]{|#1\rangle}

\makeatother

\usepackage{babel}
  \providecommand{\claimname}{Claim}
  \providecommand{\corollaryname}{Corollary}
  \providecommand{\definitionname}{Definition}
  \providecommand{\lemmaname}{Lemma}
  \providecommand{\remarkname}{Remark}
  \providecommand{\theoremname}{Theorem}
\providecommand{\theoremname}{Theorem}

\begin{document}

\author[1]{Dorit Aharonov} \author[1]{Maor Ganz} \author[2]{Lo\"ick Magnin} \affil[1]{The Hebrew University} \affil[2]{Pure Storage} 

\title{Dining Philosophers, Leader Election and Ring Size problems, in the
quantum setting }

%\author{Dorit Aharonov\footnote{\label{HU} The Hebrew University}, Maor Ganz \footnotemark[\ref{HU}], Lo\"ick Magnin}
%\rightaddress{The Hebrew University}

\maketitle
\begin{abstract}
We provide the first quantum (exact) protocol for the 
Dining Philosophers problem (DP), a central problem in distributed 
algorithms. It is well known that the problem cannot be solved 
exactly in the classical setting. 
We then use our DP protocol to provide a new quantum protocol 
for the tightly related problem of 
exact leader election (LE) on a ring, improving significantly in both 
time and memory complexity over the known LE protocol 
by Tani et. al. \cite{Tani}. 
To do this, we show that in some sense the exact DP and exact LE problems 
are equivalent; interestingly, in the classical non-exact setting 
they are not. Hopefully, the results will lead to exact  
quantum protocols for other important distributed algorithmic questions;   
in particular, we discuss interesting connections to the ring size problem, 
as well as to a physically motivated question of breaking symmetry in 
1D translationally invariant systems.  
\end{abstract}

\section{Introduction}
\subsection{The dining philosophers problem\label{subsec:Classical-dining-philosophers}}

The dining philosophers (DP) problem, 
one of the best known problems in distributed algorithms, is 
a simple special case of a general resource-allocation
problem. 
It was first introduced
by Dijkstra \cite{Dijkstra}.
It is defined as follows: A group of $n$ philosophers
are sitting (and thinking) around a circular table in a Chinese restaurant.
Between each pair of philosophers there is a chopstick (a total of
$n$). As time passes by, a philosopher might get hungry. In order
for a hungry philosopher to eat (he remains hungry until that happens),
he must hold both chopsticks (to his right, and to his left). A philosopher
can only pick up one chopstick at a time, and obviously cannot pick
up a chopstick which is already in the hand of a neighbor. The only
communication allowed, is message sending between adjacent philosophers
(neighbors), or via a shared register between neighbors. 
(It is still possible to send a message to all philosophers,
by sending it to your (say, right) neighbor, and having each philosopher
send the message he receives (from his left neighbor) to his (right)
neighbor). Our goal is to find an algorithm (each of the philosophers
is identical and runs the same algorithm - the processors do not have
unique identifiers that can be used by the algorithm - this is called
the {\it anonymous} or {\it symmetric} setting)
s.t. every hungry philosopher will eventually eat, and to try and
minimize the communication complexity, as well as the running time, 
and the memory that each philosopher uses. 
When we discuss complexity of the dining philosophers, we will
refer to the maximum over the amount of resources used, 
from the moment any philosopher gets hungry, until
one of the philosophers gets to eat. 
See Section \ref{subsec:Distributed-system} for exact definitions.

We consider only the version of the problem in which $n$, 
the number of philosophers, is known
in advance to all philosophers, or at least they know an upper bound
$N$ for it (In the open questions section we will discuss 
the version of the problem when $n$ is completely
unknown).

It was shown in \cite{key-2} that no deterministic classical algorithm
can solve the DP problem, even if we only want to
ensure that \emph{one} hungry philosopher will eventually eat (the
requirement that at least one hungry philosopher will eventually eat
is called \textquotedbl{}Deadlock free\textquotedbl{}, whereas the
stronger condition that \emph{every} hungry philosopher will eventually
eat is called \textquotedbl{}Lockout free\textquotedbl{}).
However \cite{key-2} showed a simple classical\emph{ randomized}
algorithm that solves the (harder) lockout free 
problem with probability $1$ (i.e. almost
surely; the probability of cases in which a deadlock is created is
$0$, but there are such cases). 
The algorithm uses only $O(1)$ memory per philosopher. 

To the best of our knowledge, this problem  
was not 
investigated in the quantum setting before. 

\subsection{Fair leader election problem\label{subsec:Fair-leader-election}}
It turns out that the DP question is tightly related to another 
famous problem in distributed algorithms, namely, the problem
of leader election (LE). In this problem, we have a set of $n$ identical
parties, who want to elect a leader among themselves. Again each party 
runs the same algorithm. 

When $n$ is known, 
a randomized algorithm for the LE problem exists \cite{IR}; 
In fact, this is true also if an upper bound on $N$ is known,
s.t. $n\le N<2n$. We note that 
a randomized LE protocol is allowed to never end or to not elect a
leader with some probability (preferably as small as possible), but
is strictly not allowed to elect more than one leader, in any execution, not
even an infinite execution which occurs with probability $0$. 

A deterministic algorithm always ends in finite time and results in
the election of a single leader. It is easy to show that LE, like
the DP problem, cannot be done classically in a deterministic way,
because LE implies DP (as we will later show), or directly as was
proved in \cite{Angluin:1980:LGP:800141.804655}.

In a very intriguing paper, \cite{Tani} showed 
that a quantum deterministic solution
exists for the LE problem.  The algorithm of \cite{Tani} involves
$n-1$ phases, where after $i$ phases - at least $i$ of the participants
have gotten eliminated from the election. The algorithm uses 
$O\left(n^{2}\right)$ synchronous rounds and total 
quantum communication complexity 
of $O\left(n^{3}\right)$.
This algorithm assumes knowledge of $n$, and in fact even any upper bound
$N$ on $n$ suffices, in which case the complexity is given in terms of
this bound.
The algorithm works also in the asynchronous case, though the 
analysis in \cite{Tani} is given only for the synchronous case.  

We mention that when $n$ is completely unknown to the players, the situation 
is very different: No algorithm is known for exact or probabilistic LE 
in the quantum setting in this case, and 
in the classical setting, it is known that no algorithm for the problem 
can exist (see Section \ref{defLE}). 
%\dnote{think: what is known when the bound is larger than 2n?}
%\mnote{It seems impossible when the bound is bigger than 2n. 
%The same argument of multiplying the ring applies - leading to false result
%by the algorithm}

\subsection{Results\label{subsec:Results}}

Our first result provides the first quantum protocol for 
an exact solution to the dining philosophers problem: 
\begin{thm}
{\bfseries Existence of an exact DP quantum protocol}
\label{thm:Exist_a_solution} There
exists a deterministic quantum protocol, lockout free, truly distributed,
anonymous solution to the DP problem, in the setting in which $n$
or any finite upper bound on it, $N$, is known. 
\end{thm}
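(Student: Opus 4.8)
The plan is to prove the reduction ``exact LE $\Rightarrow$ exact DP'' with only $O(n)$ additive overhead in both time and communication, and to combine it with the known exact quantum leader-election protocol of \cite{Tani}, which runs in exactly the information setting of the statement (knowledge of $n$, or merely of an upper bound $N$), works asynchronously, is anonymous and deterministic, uses only local communication, and terminates with certainty. Feeding that protocol through the reduction then yields a deterministic, lockout-free, truly distributed, anonymous quantum DP protocol, establishing the theorem. (A more hands-on quantum symmetry-breaking construction that does not route through full LE would presumably give sharper quantitative bounds, but the reduction is the most transparent route to the bare existence claim.)

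The first point the reduction must address is that DP is event-driven --- nothing should happen while all philosophers are thinking, and hunger arrives at arbitrary, mutually unsynchronized times --- whereas LE presumes all $n$ parties start together. I would use a wake-up token: when a philosopher first becomes hungry, if it is not already ``awake'' it marks itself awake, starts running the LE sub-protocol, and forwards a wake-up token to its right neighbour; every philosopher receiving the token for the first time does the same and forwards it once, so the token circulates the ring once and then dies. Within $O(n)$ rounds all $n$ philosophers are executing one common instance of LE. Concurrent hunger events are harmless, since every wake-up token triggers the \emph{same} sub-protocol and each philosopher joins it at most once; a philosopher becoming hungry only later simply continues running LE and registers its request when LE completes. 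The staggered start times (spread over $O(n)$ rounds) are tolerated by the asynchronous version of LE.

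When LE terminates a unique leader exists with certainty, and it now plays coordinator. It sends a counter around the ring to assign the distinct labels $1,\dots,n$, and then runs a trivial deterministic scheduler: it gathers hunger requests relayed around the ring, and grants permission to eat to one requesting philosopher at a time, in first-come-first-served order, waiting for acknowledgement that the philosopher has eaten and released both chopsticks before issuing the next grant. Since no two philosophers ever hold a grant simultaneously, no two neighbours ever contend for a shared chopstick; since the philosopher that opened the round is hungry, some philosopher eats within $O(n)$ further rounds; and since every request is served after a bounded number of earlier ones, every philosopher that ever becomes hungry eventually eats --- lockout-freedom. (If one prefers not to re-run LE on every burst of hunger, the elected leader can simply be retained between bursts.) Counting communication rounds, the cost from the first hunger event until some philosopher eats is $O(n)$ for the wake-up plus the round complexity of the LE phase plus $O(n)$ for labelling and the first grant; the communication is dominated by the LE phase; the memory per philosopher is that of the LE phase plus $O(\log N)$ classical bits.

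The main obstacle, I expect, is not any single step but the careful interleaving of the perpetually-running, asynchronous DP dynamics with an LE sub-protocol whose correctness is most naturally argued synchronously: one must check that the wake-up mechanism always terminates and never spawns two disjoint LE instances, that philosophers becoming hungry mid-round are handled consistently, that the scheduler cannot deadlock, and --- most delicately --- that the \emph{exactness} of the LE subroutine (zero error, guaranteed finite termination, never two leaders) is preserved when it is embedded in this larger, event-driven protocol. All quantum content is confined to, and imported unchanged from, that subroutine; everything the reduction adds is classical distributed-algorithms bookkeeping.
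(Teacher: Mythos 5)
Your proposal is correct and would establish the existence claim, but it takes a genuinely different route from the paper after the leader is elected. Both arguments use the exact quantum LE of \cite{Tani} as a black box and a hunger-triggered wake-up message that circulates the ring so that all $n$ philosophers join one common run of the subroutine. The paper, however, never uses the leader as a coordinator: it only extracts from LE the fact that the symmetry is broken (leader vs.\ everyone else, recorded in a one-bit group label $g_j$), and then runs a purely local chopstick discipline --- $0$-group philosophers lift their left chopstick first, $1$-group philosophers their right --- whose deadlock-freedom follows from a short induction (an all-waiting configuration would force every philosopher into the same group, contradicting symmetry breaking), with lockout-freedom obtained by adding the courteous condition of \cite{key-2}. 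Your reduction instead keeps the elected leader alive as an explicit FCFS scheduler that labels the ring and serializes eating grants. What each buys: your serialization makes lockout-freedom essentially trivial to verify and is perfectly adequate for the bare existence theorem; the paper's version is lighter weight (only $O(1)$ classical memory and no ongoing coordination after the one-shot symmetry break, whereas your leader must store a request queue and labels, which can exceed the $O(\log N)$ bits you budget per philosopher --- immaterial for existence, but worth noting) and, more importantly, is modular over \emph{any} symmetry-breaking primitive rather than full LE. That modularity (Lemma \ref{lem:Symmetry_breaking}) is what the paper later exploits to get the efficient DP protocol of Theorem \ref{thm:efficient solution} from a single round of \cite{Tani}'s symmetry-breaking step; a coordinator-based reduction would not plug into that improvement, since a nontrivial bipartition alone does not give you a unique scheduler.
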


For the definition of the terms in the 
above Theorem, see Section \ref{subsec:Distributed-system}.
We provide this protocol, as well as all other protocols 
in this paper,   
in the {\it asynchronous} setting of distributed algorithms (See 
Section \ref{subsec:Distributed-system} for Definitions). 
Of course, the asynchronous model is more general than the 
synchronous mode: every algorithm that works in the asynchronous 
model, will also work in the synchronous model. Most of 
the real-world environments are in fact asynchronous (such as the internet). 
We will state the complexities of our protocols in both the asynchronous 
as well as the synchronous model; the latter, mainly
for the sake of comparisons with \cite{Tani}'s QLE result, which is analyzed 
in the synchronous scenario.

The proof of Theorem \ref{thm:Exist_a_solution} is based on the known quantum
protocol for the problem of fair leader election by Tani et. al \cite{Tani}, 
together with  a simple classical reduction from LE to DP: 

\begin{lem}
{\bfseries LE-to-DP}\label{lem:LE2DP} The existence of a LE protocol implies
that of a DP - inheriting its properties {[}exact / random{]}, with
an addition $O\left(n\right)$ time, $O\left(1\right)$ classical memory, 
and $O\left(1\right)$ classical communication complexity.
\end{lem}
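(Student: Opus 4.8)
The plan is to use the leader produced by the assumed protocol $\mathcal{A}$ to break the ring's symmetry, and then to run a single circulating token as a mutual-exclusion mechanism for the chopsticks. First I would observe that the ring already carries a consistent orientation: each philosopher distinguishes its left and right chopstick, and the chopstick to a philosopher's right is the left chopstick of the philosopher sitting to its right, so these labels fix a global cyclic direction and every philosopher has a well-defined successor. Every philosopher then runs the same program. At initialization it runs $\mathcal{A}$ to completion; when $\mathcal{A}$ terminates, the philosopher that was elected leader (the unique one, in the exact case) creates a single constant-size classical message, the \emph{token}, and sends it to its successor. From then on the rule is identical for everyone: on receiving the token, if the philosopher is currently hungry it picks up its left and right chopsticks, eats, puts both chopsticks down, and only then forwards the token to its successor; if it is not hungry it forwards the token at once. (If one insists that the DP clock start only when some philosopher first gets hungry, that philosopher may first circulate an $O(1)$-size wake-up wave that triggers $\mathcal{A}$, at the price of $O(n)$ extra time; I would prefer the ``run $\mathcal{A}$ at initialization'' variant, since it makes the composition cleanly one-shot and avoids reasoning about several philosophers triggering $\mathcal{A}$ concurrently.) The protocol remains anonymous and truly distributed: all philosophers run identical code, and the leader's distinguished role is an output of the anonymous subroutine $\mathcal{A}$, not an externally supplied identifier.

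For correctness, the key invariant is that at most one token ever exists. In the exact case this is immediate --- $\mathcal{A}$ elects exactly one leader, who creates exactly one token, and the forwarding rule conserves it. In the randomized case it follows from the defining property of a randomized LE protocol, namely that two leaders are never elected in \emph{any} execution, including infinite ones that occur with probability $0$; hence two tokens are never created along such executions either. Now note that a philosopher touches a chopstick only while it holds the token and is hungry, i.e.\ strictly between receiving the token and forwarding it. Therefore, whenever a philosopher $v$ reaches for its two chopsticks, neither neighbour holds the token, so neither neighbour is touching the chopstick it shares with $v$; both pick-ups succeed immediately, $v$ eats, and no philosopher is ever left holding one chopstick while waiting for the other. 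This yields safety and freedom from the mutual-blocking deadlock. For liveness: once the token exists it circulates forever --- every philosopher forwards it, and each eating episode is finite --- so the token returns to every philosopher within $O(n)$ time. Hence any philosopher that becomes hungry eats within $O(n)$ time afterwards, which is lockout-freedom; in particular, once $\mathcal{A}$ has finished, some hungry philosopher eats within $O(n)$ time of the first philosopher getting hungry.

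It remains to account for the overhead and the inheritance of properties. Beyond running $\mathcal{A}$, each philosopher keeps only $O(1)$ extra classical bits of memory (whether it currently holds the token, whether $\mathcal{A}$ made it the leader, and the hungry/eating flags DP needs anyway); the only added communication is the token itself, a single constant-size classical message with exactly one copy in transit at any time, so the communication overhead is $O(1)$; and the extra time is $O(n)$, as shown. If $\mathcal{A}$ is deterministic and exact, the constructed DP protocol is deterministic and lockout-free, i.e.\ exact. If $\mathcal{A}$ is randomized and elects a leader with probability $1$, then with probability $1$ a token is created and the protocol is lockout-free; the event, of probability $0$, in which $\mathcal{A}$ fails to elect a leader maps precisely to a DP execution that deadlocks for lack of a token --- matching the classical randomized guarantee --- and, crucially, never to a chopstick conflict, exactly because two leaders, hence two tokens, never occur.

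I expect the main obstacle to be less any calculation than getting two modelling points right. First, pinning down \emph{when} and \emph{how} $\mathcal{A}$ is invoked in the anonymous asynchronous setting, so that ``$\mathcal{A}$, then token-ring'' is a genuine one-shot composition with a clean ``terminated, here is the output'' interface at every node; running $\mathcal{A}$ at initialization (or using a single wake-up wave) is what makes this work. Second, checking that the reduction transports \emph{safety} and not merely liveness: the whole argument rests on ``at most one token'', which is precisely the LE guarantee that two leaders are never elected even along executions of probability $0$, and this is what lets the DP protocol inherit the randomized guarantee in its strong, conflict-free form rather than a weakened one.
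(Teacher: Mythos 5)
Your proof is correct in substance, but it takes a genuinely different route from the paper. The paper uses the elected leader only to \emph{break symmetry}: the leader forms one group and everyone else the other (Definition \ref{def:SB}), and Lemma \ref{lem:Symmetry_breaking} then turns any such two-group labelling into a DP protocol by having the $0$-group lift the left chopstick first and the $1$-group the right one; deadlock-freedom follows from an induction around the ring, and lockout-freedom from adding the courteous condition of \cite{key-2}. You instead have the leader inject a single perpetually circulating token and let only the token holder touch chopsticks, i.e.\ a global mutual-exclusion ring. Your argument is more elementary --- safety and liveness are nearly immediate, and you get an explicit $O(n)$ latency for every hungry philosopher, a stronger statement than the lemma needs --- but it serializes all eating through one token and keeps the ring sending messages forever even when nobody is hungry, whereas the paper's protocol is quiescent after the one-time SB phase and lets far-apart philosophers eat concurrently. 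More importantly for the paper's architecture, the reduction is deliberately factored through the symmetry-breaking interface, because Lemma \ref{lem:Symmetry_breaking} is reused in Theorem \ref{thm:efficient solution}, where the expensive QLE-based SB is replaced by a cheap one-round SB protocol; your token construction does not pass through that interface. Two smaller points: within the measured window (first hunger to first eating) your token may be forwarded up to $n$ times, so the added classical communication is $O(n)$ bits in total ($O(1)$ per philosopher), not $O(1)$ overall --- though the paper's own construction likewise spends $O(n)$ bits on the doSB wake-up wave, so both match the stated bound only under a per-philosopher reading; and to make the $O(n)$ time bound airtight in the asynchronous model you should have idle and waiting philosophers enter the sleep mode, as the paper does, so the scheduler cannot pad the measured window with useless atomic actions.
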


(See Section \ref{subsec:Distributed-system} for definitions of how 
complexity is measured). 
We are not aware of where this reduction was written before;
The proof is rather simple, and it must have been previously known, 
though maybe, due to the existence 
of the randomized algorithm for DP by \cite{key-2}, such a reduction
was not previously needed. 
We will prove it in Section \ref{sec:Proof-of-lemma1}. 

Lemma \ref{lem:LE2DP} implies that one can use the exact 
quantum leader election 
algorithm of Tani et. al \cite{Tani} 
(which we denote by QLE) in order to solve
deterministically the DP problem. Unfortunately, this solution for
the DP problem inherits its parameters from the QLE solution; the
quantum memory of each of the parties is linear in $n$, 
and the communication complexity is $O\left(n^{3}\right)$).
In the synchronous setting,  
($O\left(n^{2}\right)$ rounds are required. 

We can prove that a much more efficient solution to the DP 
problem exists: 

\begin{thm}
{\bfseries Efficient exact DP quantum protocol}\label{thm:efficient solution} 
There exists a deterministic quantum protocol, which is a lockout free,
truly distributed, anonymous solution to the DP problem, when $n$
(or a bound on it) is known, and uses $O\left(1\right)$ 
quantum memory, $O\left(n\right)$ total quantum bits communication complexity
and if $n$ is known uses $O\left(\log n\right)$ classical memory per 
philosopher, 
$O\left(n^2\right)$ time complexity and $O\left(n^2\right)$ classical bits communication complexity (both quantum and classical) 
over all parties, or $O\left(\log N\right)$
classical memory, $O\left(N^2\cdot n\right)$ time complexity 
and $O(N^2\cdot n)$ classical bits communication complexity in case only
a bound $n\leq N$ is known.
\end{thm}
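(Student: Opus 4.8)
The plan is to build a much more memory-efficient exact DP protocol by using QLE only as a lightweight subroutine that produces a single random bit shared in a structured way, rather than running it to completion to elect a full leader (which costs $\Theta(n)$ quantum memory and $\Theta(n^3)$ communication). The key observation is that the reduction of Lemma~\ref{lem:LE2DP} uses a leader only to break the ring's symmetry once; what we actually need is a way to consistently orient or ``cut'' the ring so that deadlock cannot occur. I would therefore isolate the minimal quantum primitive QLE provides on an anonymous ring --- namely the ability to generate, with certainty, a globally consistent random object (e.g.\ a uniformly random function on the parties, or equivalently enough shared randomness to pick out one edge/vertex) using only $O(1)$ qubits per party per round and $O(1)$ qubits of communication per edge per round --- and invoke it repeatedly.

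First I would recall the structure of \cite{Tani}'s algorithm: in each of its phases, parties prepare a small entangled state across the ring, measure, and use the (classically broadcast) measurement outcomes to consistently eliminate candidates; the quantum memory per party per phase is $O(1)$, and it is only the \emph{accumulation} of phase-data in classical registers that grows. So the first step is to re-examine QLE and argue that a \emph{single} phase of it, combined with $O(\log n)$ bits of classical bookkeeping (to hold counters, a hop-count, and the current candidate bit), suffices to implement a ``symmetry-breaking step'': with positive probability it strictly reduces the number of symmetric classes, and with certainty it never creates an inconsistent global state. Then I would wrap this in the DP loop: each hungry philosopher, instead of grabbing chopsticks greedily, runs a round of the primitive to obtain a consistent orientation of its local edges (``I defer on my heavier side''), attempts to eat, and if blocked, re-runs the primitive; a standard potential/round argument shows some hungry philosopher eats within $O(n)$ local rounds of a successful symmetry-break, and the expected (here: guaranteed-eventual, since the protocol is exact) number of primitive invocations needed is controlled. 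Summing costs: $O(1)$ quantum memory, $O(1)$ quantum communication per edge times $O(n)$ edges times a bounded number of rounds gives $O(n)$ total quantum communication; the classical side needs counters of size $O(\log n)$ and, because each symmetry-breaking round touches the whole ring, $O(n)$ classical messages per round, over $O(n)$ rounds, i.e.\ $O(n^2)$ classical communication and $O(n^2)$ time.

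For the case where only an upper bound $N\ge n$ is known, the subtlety is that the quantum primitive's consistency guarantee and the termination detection both depend on circulating a token around the ring and counting hops; with only $N$ known, a party cannot be sure a token has gone all the way around until $N$ hops have passed, so every ``round'' of the ring costs $O(N)$ rather than $O(n)$, and a faithful accounting multiplies time and classical communication by an extra factor of $N$ (giving $O(N^2 n)$) while the classical memory becomes $O(\log N)$. Finally, I would check the two properties that make this an \emph{exact} DP solution: (i) \textbf{safety} --- in no execution, not even a probability-zero one, do two neighbors simultaneously believe they may grab the shared chopstick, which follows because every chopstick assignment is derived from a globally consistent orientation, exactly as in the deterministic Lemma~\ref{lem:LE2DP} reduction; and (ii) \textbf{lockout-freedom} --- every hungry philosopher eventually eats, which follows because each re-invocation of the primitive has a fixed positive success probability of breaking more symmetry, so with probability $1$ symmetry is eventually broken enough that the deterministic chopstick rule lets the next hungry philosopher in line eat, and the asynchronous adversary cannot starve a particular philosopher because of the $O(n)$-round fairness argument above.

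The main obstacle I expect is the careful extraction of the ``single-round symmetry-breaking primitive'' from \cite{Tani}'s multi-phase QLE while keeping quantum memory at $O(1)$: naively, QLE's later phases condition on earlier measurement outcomes, so one must show that only an $O(\log n)$-bit \emph{classical} summary (not the quantum state) needs to persist between rounds, and that re-running a fresh single phase on the reduced instance is legitimate and still makes progress. The second, more routine but still delicate, obstacle is the asynchronous analysis --- proving the $O(n)$-rounds-to-progress bound and the $O(n^2)$ (resp.\ $O(N^2 n)$) totals against a worst-case asynchronous scheduler, which requires a clean definition of a ``round'' in the asynchronous model and a token-based synchronization layer whose overhead is absorbed into the stated bounds.
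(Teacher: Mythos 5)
Your high-level plan --- isolate a single $O(1)$-qubit phase of Tani et al.'s QLE as a symmetry-breaking primitive, keep only $O(\log n)$ classical bookkeeping, and then plug it into the Lemma~\ref{lem:LE2DP}/Lemma~\ref{lem:Symmetry_breaking}-style DP reduction --- is indeed the paper's approach. But there is a genuine gap at the heart of your argument: you treat the single-phase primitive as something that breaks symmetry only \emph{with positive probability}, and you obtain lockout-freedom by re-invoking it until, ``with probability $1$,'' enough symmetry is broken. That is not an exact protocol: a repeat-until-success loop gives a probability-$1$ guarantee with unbounded worst-case time, which is exactly what the classical randomized algorithm of \cite{key-2} already achieves, and it does not satisfy the theorem's requirement of a deterministic solution with worst-case $O(n^2)$ time. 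The missing idea is the deterministic treatment of the ``bad'' branch. In the paper's protocol each philosopher shares one half of $|00\rangle+|11\rangle$ with a neighbor, coherently compares the two qubits he holds, and the comparison bits are broadcast and counted classically. If any comparison says ``different,'' the global state has collapsed to a superposition of non-constant strings and local measurement breaks symmetry in every execution. If all comparisons say ``equal,'' the parties know the joint state is the cat state, locally map it to $\left(|0^{n}\rangle+|1^{n}\rangle\right)\otimes|0^{n}\rangle$, and apply Tani's magic unitary $U_n$, which (Claim~\ref{thm:tani}) produces a state with \emph{zero} support on $|0^{n}\rangle,|1^{n}\rangle$; the subsequent measurement therefore breaks symmetry with certainty. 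This correction step, not repetition, is what makes one phase exact, and deadlock-freedom then follows from the deterministic rule ``group $0$ lifts left first, group $1$ lifts right first,'' with the courteous condition supplying lockout-freedom --- no potential argument over repeated randomized rounds is needed or available.

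Your accounting for the bound-$N$ case also misidentifies the source of the extra factor. It is not that a token must travel $N$ hops before a round can be certified; it is that the magic unitary $U_N$ is only guaranteed to work when $N$ equals the number of parties, so for $n<N$ the symmetry-breaking attempt can deterministically fail. The paper detects this failure by a classical counting phase (all group bits equal) and retries with bound $N-1$, so up to $O(N)$ attempts may be needed before the bound reaches $n$, which is what yields $O(N^{2}\cdot n)$ time and classical communication while keeping quantum communication and memory unchanged. Without this detect-and-decrement loop (or some substitute), your protocol in the bound-only case has no deterministic guarantee of ever breaking symmetry at all.
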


We use this solution to improve the known quantum algorithm
\cite{Tani} for exact LE when $n$ (or a bound on it) is known.  
To do this for the case of known $n$, we prove that in this case, 
one can also 
deduce an exact LE protocol
from an exact DP protocol; moreover, the reduction is actually classical:
\begin{lem}
{\bfseries DP-to-LE in the exact case}\label{lem:DPtoLE} Given
a protocol that solves the DP problem deterministically, when $n$
is known, one can solve the exact LE problem on a
ring.
\end{lem}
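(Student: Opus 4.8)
The plan is to use the given DP protocol as a black box and wrap it in a purely classical reduction that ``thins out'' the set of philosophers round by round until a single one remains, who is then declared the leader. Knowledge of $n$ is used only for counting: in each round the surviving philosophers count themselves by circulating a token around the ring (they know $n$, so they know when the token has come back), and the protocol stops as soon as this count is $1$. Since we add only a constant number of ring traversals and $O(\log n)$ bits of state per philosopher on top of the DP calls, the resulting LE protocol inherits the complexity of the DP protocol up to low-order terms.

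A single round goes as follows. Every currently active philosopher becomes hungry and runs the DP protocol, while philosophers that have already dropped out act as passive relays that just forward messages. The instant a philosopher $i$ holds both of its chopsticks for the first time (``eats''), we freeze its execution, leaving it holding both chopsticks, and call $i$ a \emph{survivor}; let $S$ be the set of survivors of the round. Then $S$ is nonempty: up to the very first eating event nothing has been modified, so that prefix is a legal execution of the DP protocol and, by deadlock-freeness, some hungry philosopher does eat. And $S$ is an independent set of the ring, hence a proper subset: once $i$ is frozen holding its two chopsticks, neither neighbour of $i$ can ever acquire the chopstick it shares with $i$, so neither can become a survivor; thus $1\le |S|\le \lfloor n/2\rfloor$. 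Finally, each survivor launches a ``you lose'' sweep in both directions along the ring; a non-survivor reached by a sweep permanently drops out, becomes a relay, and forwards the sweep, which halts when it meets a survivor or another sweep. After this, every non-survivor is a relay and the survivors form a smaller virtual ring, consecutive survivors being joined by the (possibly empty) relay chains between them.

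Now each survivor circulates a counting token around the whole ring: everyone forwards it and survivors also increment it, so after $n$ hops it returns to its originator carrying $k=|S|$. If $k=1$, that unique survivor knows it is the leader and announces this with one more trip around the ring. Otherwise the survivors recurse on their virtual ring, whose exact size $k\ge 2$ they now know, so the DP protocol can again be invoked on it. Because $k\le\lfloor(\text{current size})/2\rfloor$, after $O(\log n)$ rounds the count reaches $1$ and the protocol halts with exactly one leader; rings of size $1$ or $2$ are immediate --- on a ring of size $2$ both chopsticks are shared by the two philosophers, so eating is automatically serialized and the first to eat is the unique survivor.

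The delicate part, and where I expect the real work to lie, is verifying correctness in the \emph{anonymous asynchronous} model against a worst-case scheduler: that $S$ is always nonempty and a strict independent subset however events are interleaved --- the point being that the DP protocol's liveness guarantee is used only on the unmodified prefix up to the first eat, while the independence of $S$ is purely a statement about which philosopher holds which chopstick and survives even if the DP protocol misbehaves once we freeze some philosophers and relay others; that every loser is reached by some survivor's sweep, so the virtual ring is well-formed and neighbouring survivors can communicate through their relay chains; that the counting tokens come back to the right philosophers with the right values despite asynchrony; and that the per-philosopher bookkeeping (round number and current ring size) keeps the recursion finite so that exactly one philosopher outputs ``leader''. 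Together with Theorem~\ref{thm:efficient solution}, this yields the claimed efficient exact LE protocol, improving on \cite{Tani}.
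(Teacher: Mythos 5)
Your outline follows the paper's strategy almost exactly: iterate the DP protocol on the set of still-eligible parties, let eliminated parties act as relays, keep the eaters (who hold on to their chopsticks) as the survivors of the round, use knowledge of $n$ to count survivors by a full ring traversal, and stop after $O(\log n)$ halvings. The nonemptiness argument (the prefix up to the first eat is a legal DP execution, so deadlock-freeness applies) is also the paper's argument, via a one-to-one mapping of configurations.

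The genuine gap is in the sentence ``the survivors recurse on their virtual ring \ldots so the DP protocol can again be invoked on it.'' Your halving argument is a statement about \emph{physical} chopsticks: two adjacent philosophers share one, so survivors form an independent set of the physical ring. That works in round one, but after round one the survivors are pairwise non-adjacent in the physical ring, so no two of them share any physical chopstick; if each survivor simply runs DP again lifting its own two physical chopsticks (with the losers merely ``passive relays that just forward messages,'' as you stipulate), then \emph{all} survivors can eat again and the count never decreases --- this is exactly the even-positions counterexample the paper raises. To make the recursion work you must realize, for each pair of consecutive survivors, a single shared exclusive resource (a ``virtual chopstick''), i.e.\ exactly $k$ chopsticks for $k$ active parties, with mutual exclusion enforced despite asynchrony. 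This is where the paper's actual work lies: it declares only the chopsticks immediately to the right of eligible philosophers ``valid,'' has each eligible philosopher lift its right valid chopstick itself, and delegates the left lift through the chain of eliminated philosophers, each of whom performs the atomic physical lift on behalf of exactly one ``master'' and relays back success or failure; atomicity of the physical lift then supplies mutual exclusion on the virtual edge, and the same valid-chopstick structure is what makes the ``prefix up to the first eat'' a faithful simulation of a genuine $k$-party DP instance, so deadlock-freeness and the halving bound both apply in every round. Your list of ``delicate parts'' mentions only message-passing connectivity of the virtual ring, which is not sufficient; without specifying the chopstick simulation the inductive step of your proof does not go through. (A smaller point: the paper has \emph{every} philosopher, not just survivors, participate in the count so that all parties, relays included, agree on the remaining number of eligible parties; your survivor-only tokens can be made to work but you should say who learns the count and how relays know the protocol has ended.)
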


This is done by applying the DP procedure logarithmically many times.
Applying this lemma using our own efficient protocol of DP, from 
theorem \ref{thm:efficient solution}, we derive a much improved
solution to the exact LE problem: 

\begin{thm}
{\bfseries A new and more efficient quantum protocol for exact LE on a ring}\label{cor:better LE}
Algorithm \ref{DP2LEAlgo} is a deterministic quantum LE algorithm 
on a ring of a known size $n$ with $O\left(n^2\log n\right)$
time, $O\left(1\right)$ quantum memory and 
$O\left(\log n\right)$ classical memory per philosopher, 
and total classical communication complexity of $O\left(n^{2}\log n\right)$,
and quantum communication complexity of $O\left(n\log n\right)$. 

If only a bound $N$ on $n$ is known, then the algorithm uses instead 
$O\left(N^2\cdot n\right)$ time complexity,
 $O\left(1\right)$ quantum memory and 
$O\left(\log N\right)$ classical memory per philosopher,
and total quantum bit communication complexity of $O\left(N^2\right)$  and classical bits communication
complexity of $O\left(N^2\cdot n\right)$.
\end{thm}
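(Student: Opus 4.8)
The plan is to analyze Algorithm \ref{DP2LEAlgo}, which by Lemma \ref{lem:DPtoLE} reduces exact LE on a ring to repeated invocations of an exact DP protocol. The structure of the reduction is the standard ``tournament'' idea: one maintains a set of \emph{candidate} leaders (initially all $n$ philosophers), and in each round uses the DP protocol to single out at least one candidate — the philosopher that gets to eat — and then uses that success to \emph{eliminate} a constant fraction of the remaining candidates (so that after $O(\log n)$ rounds exactly one candidate remains). I would first spell out exactly how a single ``eat'' event is turned into the elimination of a constant fraction: the philosopher who eats is distinguished, can announce itself around the ring (using $O(n)$ classical communication and $O(\log n)$ memory to carry a counter), so the candidates can be re-indexed $1,\dots,k$ relative to this distinguished vertex, and then, say, every candidate in the second half of the cyclic order drops out. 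This breaks the symmetry just enough to make progress while preserving the invariant that the surviving candidates still sit on a ring (the full ring of $n$ philosophers, with non-candidates acting as relays) so that the DP protocol can be re-run on the next sub-instance. Crucially one must check the invariant that \emph{at least one} candidate always survives each elimination step, so the process terminates with exactly one leader and never zero.

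With the reduction's correctness in place, the bulk of the theorem is a complexity accounting, and I would carry it out resource by resource. There are $O(\log n)$ rounds; each round invokes the efficient DP protocol of Theorem \ref{thm:efficient solution} once, which costs $O(n^2)$ time, $O(n^2)$ total classical communication, $O(n)$ total quantum communication, $O(1)$ quantum memory and $O(\log n)$ classical memory, plus the $O(n)$-time, $O(n)$-classical-bit, $O(\log n)$-memory bookkeeping of the announce-and-eliminate step, which is dominated by the DP cost. Multiplying the per-round costs by the $O(\log n)$ rounds gives $O(n^2 \log n)$ time, $O(n^2 \log n)$ total classical communication, and $O(n \log n)$ total quantum communication; the memory bounds do not accumulate (they are reused across rounds), staying at $O(1)$ quantum and $O(\log n)$ classical per philosopher. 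The same tallying with $N$ in place of $n$, using the ``bound only'' branch of Theorem \ref{thm:efficient solution} (where each DP call costs $O(N^2 \cdot n)$ time) and noting that with only an upper bound one cannot detect when a single candidate remains and must therefore run a fixed $\Theta(N)$ rounds rather than $O(\log n)$ — which replaces the logarithmic factor by an $N$ factor but still only $O(N^2)$ quantum communication after multiplication since the quantum cost per call is $O(N)$ — yields the second set of bounds.

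The one genuinely delicate point — the step I expect to be the main obstacle — is the \emph{asynchrony} of the model together with the need to re-invoke DP cleanly on a shrinking candidate set. In an asynchronous ring with no identifiers, after one philosopher eats and announces itself, the other philosophers must all agree that a round has ended and the next DP invocation has begun, \emph{without} a global clock; and when non-candidates become mere relays, the DP protocol run by the candidates must still behave as a correct DP instance on ``their'' ring while the relayed messages traverse the passive philosophers. I would handle this by having the distinguished (eating) philosopher's announcement serve as the synchronization token that both closes the current round and resets every philosopher's local DP state, and by arguing that a passive relay is indistinguishable, from the candidates' viewpoint, from a direct edge with a longer (but still finite) message delay — which is exactly the kind of delay an asynchronous DP protocol already tolerates. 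Making this precise requires care with the definitions of Section \ref{subsec:Distributed-system}, but it introduces no asymptotic overhead, so the stated bounds are unaffected; everything else is the routine multiplication sketched above.
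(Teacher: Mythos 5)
There is a genuine gap in the core of your reduction. You build each round around a \emph{single} distinguished philosopher, ``the philosopher that gets to eat,'' who announces itself around the ring so that the candidates can re-index relative to it and the ``second half'' can drop out. But an exact DP protocol does not produce a unique eater: when all candidates are hungry, many philosophers may acquire both chopsticks (in the worst case about half of them) before any termination information propagates, and deadlock-freedom only guarantees \emph{at least one} eater. If there were a mechanism that always identified one unique eater known to everyone, leader election would already be solved in a single round and the $O(\log n)$ tournament would be pointless; with several eaters, your re-indexing and ``second half drops out'' rule is ill-defined (different philosophers would use different origins) and can eliminate inconsistent sets. The paper's elimination rule is different and is what makes the halving work: the survivors of a round are exactly the philosophers who ate, and the factor-$2$ reduction comes from a chopstick-counting argument --- the round is run as a simulated protocol $DP'$ (Algorithms \ref{DPPrimeAlgoPrime}--\ref{DPPrimeAlgoEliminate}) on the virtual ring of eligible parties, with eliminated philosophers acting as relays and only the chopsticks to the right of eligible philosophers declared valid, so there are exactly $L$ chopsticks for $L$ candidates and at most $\lfloor L/2\rfloor$ can hold two of them (Claim \ref{cl:DP2LE}). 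You do mention the relay idea, but without the ``valid chopstick'' bookkeeping and the eaters-survive rule, your round neither guarantees the halving nor a well-defined survivor set. Round synchronization and the survivor count are then obtained by each philosopher counting exactly $n$ eligibility bits (possible because $n$ is known), not by a token from a unique eater.

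The bounded case is also not correct as sketched. Running ``a fixed $\Theta(N)$ rounds'' with each DP call costing $O(N^2\cdot n)$ time gives $O(N^3 n)$, not the claimed $O(N^2\cdot n)$; the paper's accounting is amortized --- a full-cost symmetry-breaking attempt can \emph{fail} (bound too large) at most $N-n$ times over the whole execution, and each success halves the candidate set, which is what yields $O(N^2\cdot n)$ time and $O(N^2)$ quantum communication. More importantly, fixing the number of rounds does not address the actual obstruction: when a single eligible party remains but does not know it, a black-box DP run can deadlock (the lone candidate is effectively trying to lift the same unique valid chopstick twice), so the algorithm may never terminate no matter how many rounds are scheduled, and there is also no rule telling the survivor it is the leader. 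This is precisely why the paper does \emph{not} prove a generic DP-to-LE reduction for this case, but instead modifies its own protocol (Algorithms \ref{DPPrimeAlgoEligibleB} and \ref{DP2LEAlgoB}): an eligible party that fails a lift eliminates itself, the philosophers exchange and take the minimum of their counts $h_j$ as the new bound, and the last survivor detects its status when the internal SB bound decreases to $1$ and then declares itself leader. Your proposal treats this branch as routine re-tallying, which both misses the termination/deadlock issue and does not reproduce the stated complexities.
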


To prove the second part of the theorem, when only a bound on $n$ is known, 
we use the same idea as in the proof of Lemma \ref{lem:DPtoLE} except 
with our specific DP protocol; we do not know how to prove lemma \ref{lem:DPtoLE} in this case. 

In the synchronous model when $n$ is known, our protocol 
uses $O\left(n\cdot \log n\right)$ rounds, 
$O(1)$ quantum and $O(log(n))$ classical memory, 
$O(n^2 log(n)$ classical communication and $O(n \log(n))$ quantum communication.  
%or $O\left(N\cdot \log n\right)$ rounds when only a bound $n\leq N$ is known.
In comparison, \cite{Tani} 
used $O\left(n^2\right)$ rounds, 
$O\left(n^3\right)$ quantum communication complexity, and 
$O\left(n\right)$ quantum memory per party, 
when $n$ is known. 
%If only a bound $n\leq N$ is known, 
%then the bounds are given in terms of $N$. 
Our algorithm thus provides a significant advantage in all parameters.  

It is interesting to 
note that Theorem \ref{cor:better LE} holds only in the quantum, exact world; 
It shows that when considering the exact versions of the problems (as one
can do in the quantum setting) in the case when $n$
is known, DP and LE are in fact equivalent. 
Classically, this is not true. In particular, randomized
DP cannot be used to imply randomized LE, in the uniform scenario
(though the other direction does hold by lemma \ref{lem:LE2DP}).
This is because \cite{key-2} 
solves the DP problem probabilistically and uniformly 
(the parties do not need to know $n$), 
however \cite{IR} (Theorem \ref{thm1 - no det}) shows  
that there is no uniform classical (not even randomized) protocol 
for the LE on a ring (see next section, Theorem \ref{thm 2: No random LE} 
for more on this). 
It is possible that an implication of DP to LE does exist when $n$
is known, but we are not aware of such a proof (and
suspect it is impossible, since it seems difficult to prevent 
the strictly forbidden situation of electing two leaders, in the randomized 
case which allows errors). 

%What happens in the quantum world? It seems that many of the arguments
%used in this setting for the classical case, break down in the quantum
%model. 

Table \ref{tableKnown} summarizes what we know about
these two problems when we know the ring size
$n$, or a bound on it. 
The above mentioned result (Theorem \ref{cor:better LE}) 
imply that on a ring of a known size $n$,
DP is equivalent to LE. 
The content
of the table refers to the size of the memory each party requires
in the protocol. 

\begin{table}
\caption{What is known when $n$ is known}

\label{tableKnown}\includegraphics[bb=2cm 8cm 720bp 400bp,scale=0.6]{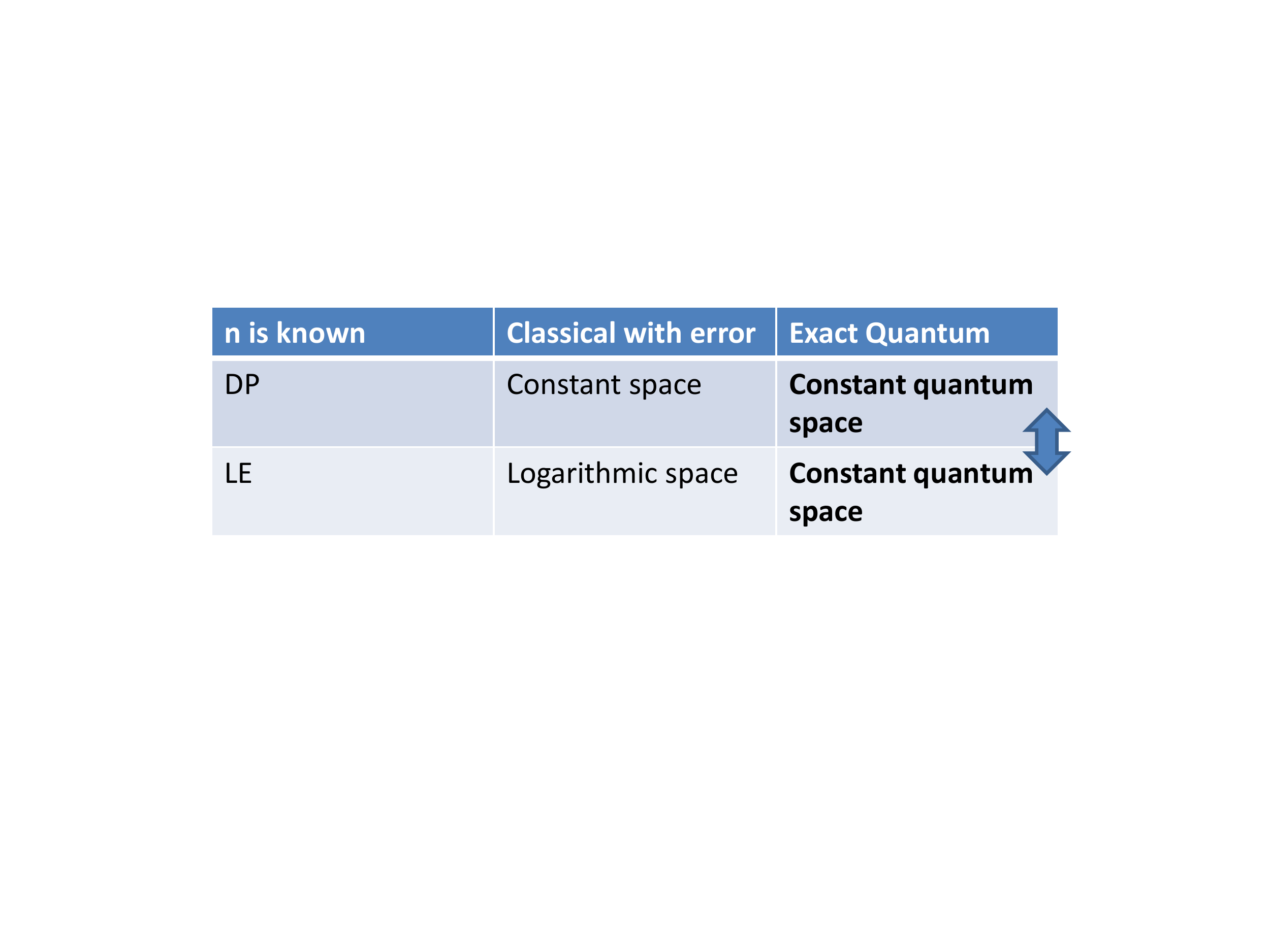}
\end{table}

In our algorithms, we make use of a constant quantum space, and logarithmic
classical space. Improving this to constant total space per party,
is an interesting open question; see Sections \ref{subsec:The-ring-size} and 
\ref{sec:Open-questions}) for connection to other open problems. 

\subsection{Proofs overview} 
The proof of Lemma \ref{lem:LE2DP} is rather straight forward: 
given a leader, there is a natural division of the group into two sub-groups -- 
the leader, and all the rest. We refer to this as {\it breaking the symmetry}. 

\begin{definition}{\bf Symmetry Breaking} \label{def:SB}
A protocol is said to be a \emph{symmetry breaking  protocol}, if  
from an initial configuration where all processors are in the same state, 
and all shared and local variables initialize to the same value, including a special local bit $g$ initialize to zero (group bit), 
they end up the protocol, where at least one processor has value $1$ in his group bit,
and at least one processor has value $0$ in his group bit.
\end{definition}

Given a symmetry breaking protocol, 
we will be able to use the
following lemma to complete the solution:
\begin{lem}
\label{lem:Symmetry_breaking}
Given any symmetry breaking protocol SB, there is a deterministic protocol,
lockout free, truly distributive, anonymous solution to the exact DP problem,
 with overhead (w.r.t one call of SB) of $O\left(1\right)$ classical memory per philosopher, 
$O\left(n\right)$ total classical communication complexity 
 and $O\left(n\right)$ time complexity, s.t. there
are no lockouts (and thus also no deadlocks). 
\end{lem}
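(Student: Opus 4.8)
The plan is to turn a single symmetry-breaking call into a full DP solution by using the group bit $g$ as a fixed, once-and-for-all tie-breaker between the two resulting groups, and then running a simple token/priority discipline on the ring so that hungry philosophers eat without deadlock or lockout. First I would have every philosopher invoke the protocol SB once; by Definition \ref{def:SB} this leaves at least one philosopher with $g=1$ and at least one with $g=0$. I will show this partition already suffices to eliminate the fundamental symmetry obstruction: a philosopher with $g=1$ is given (static) priority over a philosopher with $g=0$ for the chopstick lying between them. In particular, on any edge of the ring where the two endpoints have different $g$-values, the contention is resolved deterministically in favor of the $g=1$ endpoint; the only remaining contention is on edges where both endpoints have the same $g$-value.

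Next I would handle the intra-group contention. The key observation is that because there is at least one $g=1$ vertex and at least one $g=0$ vertex, the ring is cut by the $g$-boundary into a nonzero number of maximal monochromatic arcs, each of which is a \emph{path} (not a cycle), so on each such arc there is a well-defined ``rightmost'' philosopher — the one whose right neighbor has the other value — who acts as a local leader for that arc. I would then run a standard hygienic/arc-local protocol: within a monochromatic arc one designates a virtual token that travels along the (acyclic) arc, and a philosopher may pick up a contested chopstick only when it holds the relevant token; since the arc is a line, the usual argument (no cyclic waiting, so no deadlock; the token makes a full sweep in $O(n)$ steps, so no lockout) applies. Combining the inter-group priority rule with the intra-arc token rule, one checks that the global wait-for graph is acyclic at all times, giving deadlock-freedom, and that every hungry philosopher acquires both chopsticks within $O(n)$ time after becoming hungry, giving lockout-freedom.

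For the complexity bookkeeping I would argue as follows. Beyond the one call to SB, each philosopher stores only: its bit $g$, a constant-size state for the token protocol, and a flag for whether it currently holds each chopstick — all $O(1)$ classical memory. The token sweep around any monochromatic arc, and the one round of comparing $g$-values with the two neighbors, cost $O(1)$ messages per edge and hence $O(n)$ total classical communication, and the token completes a sweep in $O(n)$ time; this matches the claimed $O(1)$ memory, $O(n)$ communication and $O(n)$ time overhead. Since SB, the priority rule, and the token discipline are all deterministic and identical at every node, the resulting protocol is deterministic, anonymous, and truly distributed, and it inherits nothing probabilistic — so it is an exact solution.

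I expect the main obstacle to be the deadlock/lockout correctness argument in the fully asynchronous model: one must be careful that a philosopher does not deadlock by holding one chopstick while waiting on a neighbor across a $g$-boundary, and in particular that the static priority rule does not itself create a new waiting cycle once combined with the token mechanism. The cleanest way I see to discharge this is to impose a single global priority order on edges — lexicographic in $(g\text{-type of edge},\ \text{token position})$ — and verify that the wait-for graph respects this order, so no cycle can form; the $O(n)$ lockout bound then follows because an edge can be ``blocked against'' a given hungry philosopher for at most one token sweep.
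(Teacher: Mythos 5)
Your construction diverges from the paper's (the paper simply lets the group bit decide which chopstick each philosopher tries to lift \emph{first}, proves deadlock-freedom by an induction along the ring, and then adds the courteous condition of \cite{key-2} to upgrade deadlock-freedom to lockout-freedom), and the divergence is where the gap lies: your priority rule on the $g$-boundary edges is \emph{static}, and a static priority cannot give lockout-freedom against the malicious scheduler allowed in this model. Concretely, take a boundary edge between $A$ with $g=1$ and $B$ with $g=0$. The scheduler may make $A$ hungry again immediately every time $A$ finishes eating; under "contention resolved deterministically in favor of the $g=1$ endpoint," $B$ never acquires the shared chopstick, so $B$ is locked out. The token discipline does not rescue this, because the boundary chopstick is, in your scheme, governed by the priority rule and not by either arc's token; so the claim "every hungry philosopher acquires both chopsticks within $O(n)$ time" is asserted but false as stated. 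What is missing is precisely a dynamic element that demotes a philosopher after it eats while a neighbor is still hungry --- this is exactly the role of the courteous condition ("if a neighbor hasn't eaten since the last time I ate, I don't lift any chopstick") in the paper's proof, or equivalently of the clean/dirty-fork priority flipping in the hygienic (Chandy--Misra) protocol that your sketch gestures at but does not actually implement.

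A secondary, smaller issue: the deadlock-freedom argument for the hybrid scheme (token inside monochromatic arcs plus $g$-priority across boundaries) is only asserted via "the wait-for graph is acyclic at all times"; since a token holder can hold an intra-arc chopstick while waiting on a boundary chopstick whose owner in turn waits inside the neighboring arc, acyclicity needs an explicit argument (e.g., exhibiting a global acyclic orientation, which your symmetry-broken $g$-values plus the ring's left/right orientation can indeed provide). By contrast, the paper's route avoids all of this machinery: with "group $0$ lifts left first, group $1$ lifts right first," a hypothetical deadlock forces every philosopher into the same group, contradicting symmetry breaking, and the costs are immediately $O(1)$ memory, $O(n)$ messages and $O(n)$ time since complexity is only measured until the first philosopher eats. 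Your complexity bookkeeping is fine for that measure; the correctness of lockout-freedom is the genuine gap to repair.
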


Lemma \ref{lem:Symmetry_breaking} shows that 
breaking the symmetry can immediately be used to imply 
a DP solution, essentially by letting all philosophers in one group 
pick up their right chopstick first, whereas all the others will pick up their 
left chopstick first; it is not difficult to show that the resulting protocol 
avoids deadlocks; then, by adding what is called ``the courteous condition''
from \cite{key-2},  
which does not allow a philosopher to eat twice while his neighbor is still
hungry, one also gets a lockout free protocol. 

The symmetry breaking is the most
difficult and resource-consuming part in our protocol
(where by resource we mean here both time and
communication; in fact the communication becomes classical after the
symmetry breaking).
Therefore the complexity of the solution is asymptotically
equal to the complexity of the symmetry breaking part. 

The proof of Theorem \ref{thm:Exist_a_solution} follows, by using the 
LE protocol by Tani et. al \cite{Tani}, which implies a rather 
expensive DP. 

To improve the parameters of the DP protocol, and 
prove Theorem \ref{thm:efficient solution}, we make use 
of an improved procedure to \emph{break the symmetry}, 
which essentially picks one round in the LE protocol of \cite{Tani}
and transforms it to a protocol which only uses  
communication between neighbors. 
This symmetry breaking phase includes all the philosophers, and can be done
whenever a first philosopher becomes hungry. 

To derive the proof of a new and more efficient protocol for exact 
LE (Theorem \ref{cor:better LE}), 
we make use of logarithmically many rounds 
of our new and more 
efficient DP protocol from Theorem \ref{thm:efficient solution}, 
applying the fact that an exact DP protocol allows 
reducing the number of parties eligible to be elected as a leader 
by a factor of $2$ at least, and that the exactness of the protocol 
guarantees that we will never have a situation in which two leaders 
are elected.  
Some technical effort is required to handle the fact that after reducing 
the number of eligible parties in the first application of the DP, 
a direct application of another round of DP might not reduce 
this number any further; we solve this by defining an 
algorithm DP' which is an asynchronous {\it simulation} of the original DP 
algorithm running only on the {\it remaining} eligible parties, 
using the parties who are no longer eligible to pass messages back and 
forth between eligible parties. This somewhat more technically involved proof 
is given in Section \ref{sec:Leader-election-via}. 
We note that we prove the general lemma (Lemma \ref{lem:DPtoLE}) 
of going from exact DP to exact LE only in the case of $n$ known; 
in case only a bound on it is known, we provide the reduction from our specific DP protocol to an exact LE protocol, based on similar ideas, but we 
do not know whether the more general claim holds.

\subsection{When $n$ is unknown, and the ring size (RS) 
problem\label{subsec:The-ring-size}}
As might be evident from the above, 
the question of whether $n$ or a bound on it is known is crucial 
for the exact LE problem, and thus, by Lemma \ref{lem:LE2DP}, also 
for the exact DP problem.   

As it turns out, when nothing is known about $n$, 
then \cite{IR} proved that 
even a probabilistic solution
to the LE problem does not exist (in the classical setting). 
This impossibility result holds in the strongest sense: 
even if the algorithm only needs to guarantee successful termination
(ending with one leader) in a {\it single} execution 
(See Section \ref{defLE} Theorem \ref{thm 2: No random LE} 
for exact statement). 

It is thus very interesting to ask what is the status of the 
exact LE and exact DP problems in the uniform setting, 
namely when $n$ is {\it unknown}, 
in the quantum model.  

A tightly related question is the \emph{ring size problem}. 
It is defined to be the problem
of finding $n$, the size of the ring (under the same anonymous conditions).
Classically, the ring size also cannot be deterministically determined - 
even if we have an upper bound on $n$ beforehand 
(see \cite{Attiya:1988:CAR:48014.48247},
\cite{IR} Theorem $4.2$ for proof), but \cite{IR} (Theorem $4.3$)
showed that it can be found with small error probability if we know
a bound $N$ on the ring size, s.t. $n\le N<2n$. 
(We mention in passing that \cite{IR} (Theorem $4.4$) also proved
that the ring size problem can be solved when no bound on $n$
is known beforehand, with arbitrarily small error, in a weaker 
model called message termination).    

One could hope to use the probabilistic 
algorithm for counting $n$ with small error, and then use the LE
protocol with error, assuming knowledge of $n$ (\cite{IR}). But
the problem is that an error in the estimation of $n$ might lead to 
the strictly forbidden scenario in which two leaders are chosen. 
Indeed, by Theorem \ref{thm 2: No random LE}
we know LE to be impossible even in the probabilistic setting, if
no bound is known on $n$, so we cannot hope that a probabilistic 
algorithm for RS can be used for LE in the case of unknown $n$. 

We thus ask: 

\begin{ques} \label{QExact}
Does there exist an {\it exact} protocol for the ring size 
problem, namely for finding the size of the ring 
$n$ (or for providing a bound on $n$), in the quantum setting? 
\end{ques}

If such an algorithm exists, then this will solve the exact LE 
without knowing $n$ in advance, by \cite{Tani}'s
algorithm. By Lemma \ref{lem:LE2DP}, this will also imply a solution to 
the DP in the exact setting, when $n$ is unknown. 

Table \ref{tableUnknown} describes what is known about the three problems
when nothing is known about $n$. Like the previous table, this table
contains a column for the classical setting
with error (left) and the exact quantum (right) setting. 
In this unknown $n$ setting, given a leader,
one can find the ring size by sending a counting message; by \cite{Tani}, 
when $n$ is known, a solution of 
the exact LE problem follows. Hence the exact RS and exact 
LE problems are equivalent, namely, a solution to one implies a solution 
to the other (with possible overheads), 
and each of them implies a solution to
the DP problem, by Lemma \ref{lem:LE2DP}.  

\begin{table}
\caption{What is known, when $n$ is unknown}

\label{tableUnknown}\includegraphics[bb=2cm 8cm 720bp 540bp,scale=0.6,page=2]{tables}
\end{table}

We describe in Section \ref{sec:Open-questions}
another intriguing open question which connects the question regarding
the RS in the quantum setting, to an interesting question 
about the ability of translationally
invariant physical systems of constant dimensional particles, to break
symmetry deterministically.

\subsection{Paper overview}

Section \ref{sec:Definitions-and-concepts} gives formal and complete
definitions of the problems and related concepts.

In Section \ref{sec:Proof-of-lemma1}, we prove Theorem \ref{thm:Exist_a_solution}
by proving lemma \ref{lem:LE2DP}, lemma \ref{lem:Symmetry_breaking} 
and applying it on the \cite{Tani}
LE protocol as a black box.

In Section \ref{sec:Idea-of-proof}, we describe the ideas for the
efficient DP solution, and then prove Theorem \ref{thm:efficient solution}. 

In Section \ref{sec:Leader-election-via}, we prove Lemma \ref{lem:DPtoLE} 
by showing that an exact 
DP algorithm can be used to derive an exact LE algorithm in case $n$ is known, 
and then use it to prove Theorem \ref{cor:better LE} in that case; 
the case in which only a bound on $n$ is known is proved after that, but 
specifically with our DP protocol rather than using any DP protocol.  

Section \ref{sec:Open-questions} proposes related open question, 
in particular, problems related to the finding the ring size, 
the problem of exact symmetry
breaking in local physical systems, and other possible exact quantum 
distributed algorithms. 

In Appendix \ref{sec:The-magic-unitaries} we describe the magic unitary
due to \cite{Tani} which they used in their QLE, and which is
required for our protocol in section \ref{sec:Efficient-solution}.

\section{Definitions and basic background\label{sec:Definitions-and-concepts}}
\subsection{Distributed system\label{subsec:Distributed-system}}

The definitions of the classical model of 
distributed algorithms are based on the books: \cite{Atiya,Lynch}.

A \emph{distributed system} is a collection of individual computing
devices (processors), in some topology graph, where 
a processor is identified with a particular node in the topology graph.
Two processors can communicate directly only if there is an edge between
them in the topology graph (neighbors).
Communication is done either via communication channels between neighbors, 
or via shared memory registers (memory can be shared only 
between two neighboring nodes). 
We call the system \emph{truly distributed}
if there is no central memory. 
A \emph{distributive algorithm}
consists of $n$ processors, each of them is modeled as a state machine.
In this paper we will always handle
the graph of $n$ vertices on a ring.

In the \emph{synchronous} model, processors execute in ``lockstep'':
The execution is partitioned into rounds, and in each round, every
processor can send a message to each of its neighbors, 
the messages are delivered,
and every processor computes based on the messages just received. 
When analyzing time complexity in the synchronous model, 
we will count the number of rounds.

In the \emph{asynchronous} setting (which will be the main model 
of our interest in this paper),  
the system is accompanied with a (possibly malicious) 
scheduler which decides which processor 
will execute his next atomic action,
and when a message is transmitted; one can envision this as if the 
scheduler has one token which he gives one of the players at any 
given time.  
By atomic action, we mean a basic local action, which cannot
be interrupted (if a processor starts doing the action, he 
will finish it before the token moves to anyone else). 
A few examples for an atomic action are: lifting a chopstick, 
sending a message, 
or increasing a local counter, etc (atomic actions are defined for 
each problem separately, and as in any model of computation, 
their precise definition usually does not matter that much). 
In the asynchronous case there is no fixed upper
bound on how long it takes for a message to be delivered, or how many atomic 
actions can be executed by the parties,
between two consecutive steps of one processor. 
An execution is said to be \emph{admissible} 
if each processor gets the token infinitely many times. 
Also we require that every 
message sent is delivered in some finite time (time here means 
number of atomic actions in the system). 
We allow a processor to enter into a \emph{sleep mode}, and then
the scheduler is not allowed to give that processor the token 
(this prevents the scheduler from "wasting time").
A sleeping processor can exit the sleep mode by 
some trigger (such as a wake-up message from a neighbor). 
It is common to use the sleep-mode, whenever a processor waits for something else to happen, and he cannot continue 
without it.
The requirement for an infinite number of computation events models
the fact that processors do not fail. 

The system is called \emph{anonymous}, if every processor in the system
has the same state machine, including the same initial state (the
processors are identical and do not have unique identifiers that can
be used by the algorithm, they run the same algorithm, and all their
variables have the same initial values).

In this paper we will always work with anonymous, \emph{truly distributed}
ring systems, where nodes 
can communicate via shared memory between neighboring sites,
but there is no central memory. 

\begin{defn}{\bf Complexity measures} 
We will be interested in three complexity measures of the distributed
algorithms: the communication complexity, the memory complexity (both quantum
and classical) and the time complexity. 
We will concentrate on worst-case performance. 
The beginning and termination depends on the problem at hand; 
In the DP problem the time is measured from the moment one of
the philosopher gets hungry, until some philosopher gets to eat.
\begin{itemize}
\item The \emph{time complexity} of the algorithm, is the maximum over all 
admissible executions, of the {\it total}  
number of atomic actions from beginning until termination. 
During this time we count every atomic action of any processor, and any
message sending, as one unit of time.

\item The \emph{communication complexity} of an algorithm is the 
maximum, over
all admissible executions of the algorithm, of the total number of
bits of messages sent. 
We will partition this to classical and quantum bits. 
\item The \emph{memory complexity}, is the size of memory (measured in bits/qubits) required {\it per} processor (again, we partition
to classical and quantum memory).
\end{itemize}
\end{defn}

Note that in synchronous systems the time complexity is 
usually measured differently, 
by counting rounds. 

Also note, that while we count message sending as one atomic action regardless 
of their length, the messages in our final protocols will all be of 
length $O\left(1\right)$, and so it doesn't really matter if we had counted 
sending one bit/qubit as one time step. 

We will only be concerned about \emph{honest} systems (no failures
or malicious processors).

\subsection{Dining philosophers}
The DP problem is defined as follows. 
There are $n$ philosophers (processors) seated around a table, usually
thinking. Between each pair of philosophers is a single chopstick
(resource). From time to time, any philosopher might become hungry
and attempt to eat. In order to eat, the philosopher needs exclusive
use of the two chopsticks to his sides. 
After eating the philosopher relinquishes
the two chopsticks and resumes thinking.

The timing of when a philosopher gets hungry is arbitrary and can
be determined by a malicious scheduler. But once he gets hungry, he
remains so until he eats.

As mentioned before, each philosopher runs the same algorithm (all
are identical - no I.Ds\emph{ - anonymous} condition. Note that the
philosophers have directions, they know left and right neighbors)
and all the philosophers are honest (i.e. they must follow the algorithm). 
A more detailed definition of the DP problem, in the language of 
distributed algorithms, can be found in \cite{Lynch} (Chapter 11).

There are two major problems that an algorithm can encounter:
\begin{enumerate}
\item \emph{Deadlock} (general starvation). This is the case where there
is at least one hungry philosopher, but no philosopher will ever eat.
For example if the algorithm is ``lift first left chopstick, then
lift right'' (if you can't lift the chopstick you are supposed to
lift, just keep waiting), we can end up with all philosophers having
their left chopstick in their possession, and none can eat. This will
happen if the scheduler makes all philosophers hungry at the same
time, which is the worst case scenario.
\item \emph{Lockout}. This is the case when there is some hungry philosopher
that will never eat.
\end{enumerate}
We want an algorithm which ensures that none of these problems can
occur. Of course, the existence of a deadlock implies that nobody
eats, so it implies a lockout. Hence, a \emph{lockout-free} condition
(i.e. the system is such that every philosopher eventually gets to
eat) implies \emph{deadlock-free }as well. It is therefore easier
to achieve the no deadlock condition. 
However, it is often possible to implement a simple condition, called 
the courteous condition, to make a deadlock free protocol into a 
lockout free one, as we will see in lemma \ref{lem:Symmetry_breaking}. 
Thus, the hard part is finding a deadlock-free algorithm, or making it 
efficient. 

In the DP problem, we measure the complexity (of time, memory and communication)
from the moment that any of the philosophers gets hungry, until one
philosopher gets to eat.

\begin{thm}
\label{thm1 - no det} {\cite{key-2} (Theorem 1)}. There is no deterministic,
deadlock free, truly distributed, anonymous solution to the dining
philosophers problem (in the classical setting).
\end{thm}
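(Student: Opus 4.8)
\textbf{Proof proposal for Theorem \ref{thm1 - no det}.}
The plan is to argue by contradiction via a symmetry argument, the standard tool for impossibility results in the anonymous ring model. Suppose there is a deterministic, deadlock-free, truly distributed, anonymous protocol $\mathcal{A}$ for the DP problem on a ring of $n$ philosophers (for some $n\ge 2$; we may even take $n$ as small as $2$, or take it fixed). Since the protocol is anonymous, every philosopher runs the same state machine with the same initial state, and all shared registers between neighbors initialize to the same value. I would first set up the precise notion of a \emph{symmetric configuration}: a global configuration is symmetric if it is invariant under the rotation of the ring by one position (equivalently, all philosophers are in identical local states and all chopsticks/registers are in identical states, up to the natural cyclic relabeling). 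The initial configuration, with every philosopher hungry, is symmetric.

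The core of the argument is to exhibit an admissible execution of $\mathcal{A}$ that preserves symmetry forever, and therefore in which no philosopher ever eats — contradicting deadlock-freedom. Here I would invoke the power of the asynchronous scheduler (which is adversarial): I would have the scheduler run in \emph{rounds of $n$ atomic actions}, giving the token to philosopher $1$, then $2$, \dots, then $n$, and within one such round let each philosopher execute the \emph{same} atomic action (the one dictated by the common state machine from the current, symmetric local state). Because the configuration before the round is symmetric, each philosopher is in the same local state and sees the same local environment, so each attempts the same action; and because the scheduler executes them one at a time in cyclic order, after all $n$ of them act the configuration is again symmetric (rotation-invariant). The one subtlety is contention for a shared chopstick: if the common action is ``pick up the left chopstick,'' then by symmetry every philosopher succeeds (each grabs a distinct chopstick), and symmetry is maintained; if the common action is ``pick up the right chopstick'' after already holding the left, then \emph{no} philosopher can succeed (each chopstick is already held by the left neighbor), so every philosopher fails identically and symmetry is again maintained — this is exactly the deadlock. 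More generally, whatever the common action is, it either succeeds for all or fails for all, symmetrically, so the invariant ``the configuration is symmetric'' is preserved inductively over rounds.

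To finish, I would check that this scheduled execution is \emph{admissible}: every philosopher gets the token infinitely often (it does, once per round), and every message sent is eventually delivered — and here, since in a symmetric configuration either everyone sends the same message to the same-side neighbor (in which case the scheduler delivers all $n$ of them in the next sub-round, keeping symmetry) or no one sends anything, message delivery can be folded into the round structure without breaking symmetry. In this admissible execution the configuration is symmetric at all times, so if ever one philosopher were in an ``eating'' state, all $n$ would be eating simultaneously, which is impossible because adjacent philosophers share a chopstick and cannot both hold it. Hence no philosopher ever eats, while at least one (in fact every) philosopher is hungry throughout — a deadlock — contradicting the assumed deadlock-freedom of $\mathcal{A}$.

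\textbf{Main obstacle.} The delicate point is handling the \emph{granularity of atomic actions and message passing} so that the scheduler really can keep the configuration exactly symmetric. One must be careful that a single atomic action does not, e.g., both read a shared register and write it in a way that lets two neighbors ``see different things,'' and that the asynchronous communication model does not force a message to be delivered at a moment that breaks the lockstep. The fix is to lean on the freedom in the asynchronous model: the scheduler (being adversarial) is allowed to interleave the $n$ philosophers in strict cyclic order and to postpone deliveries until a full symmetric sub-round can be completed, and the admissibility conditions (token infinitely often, messages eventually delivered) are still met. Formalizing ``the configuration after a full cyclic round is again rotation-invariant,'' given the paper's atomic-action conventions from Section \ref{subsec:Distributed-system}, is the one place where care is needed; everything else is a short induction.
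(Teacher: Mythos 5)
Your proposal is correct and follows essentially the same route as the paper, which only sketches the argument it attributes to \cite{key-2}: a symmetric configuration at the start of a round remains symmetric at the end of the round under an adversarial lockstep schedule, so the symmetry can never be broken and some philosopher stays hungry forever. Your lockstep-round construction, the case analysis for chopstick contention, and your flagged caveat about atomic-action granularity (which is settled by the paper's convention that an atomic action touches a single resource, e.g.\ one chopstick lift or one message send) fill in exactly the details that sketch leaves implicit, with the only minor imprecision being that symmetry holds at round boundaries rather than literally ``at all times,'' which your own all-succeed-or-all-fail observation already covers.
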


The proof uses a claim that since the configuration is symmetric w.r.t
all philosophers at the beginning of a round, then the configuration
will again be symmetric at the end of the round. Hence the symmetry
cannot be broken.

For the probabilistic case, \cite{key-2} prove:  
\begin{thm}
{\cite{key-2} (Theorem 5)}. In the probabilistic classical asynchroneuous 
setting, there is a solution for the DP problem which requires 
$O\left(1\right)$ memory
for each philosopher, $O\left(n\right)$ time, and $O\left(n\right)$
 total classical communication complexity. 
\end{thm}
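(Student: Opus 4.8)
The plan is to reconstruct the classical randomized ``free philosophers'' protocol of Lehmann and Rabin \cite{key-2} and verify its three complexity bounds. The protocol is as follows: a thinking philosopher does nothing; when he becomes hungry he (i) flips a fair coin to choose a side $s\in\{\mathrm{left},\mathrm{right}\}$; (ii) waits, in sleep mode, until the chopstick on side $s$ is free, and then picks it up; (iii) inspects the chopstick on the opposite side --- if it is free he picks it up, eats, then puts both chopsticks down and returns to thinking; if it is held by the neighbor, he puts down the chopstick he is holding and returns to step (i) (so he re-flips). The status of each chopstick, and later the courteous bit, is kept in the shared register between the two philosophers adjacent to it, so all communication is $O(1)$-bit reads/writes between neighbors, and each philosopher's local state is the last coin outcome plus a constant-size program counter, i.e.\ $O(1)$ memory per philosopher.

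First I would establish \emph{deadlock freedom with probability $1$}. Following \cite{key-2}, I would classify the global configurations reachable once at least one philosopher is hungry according to which philosophers hold which chopsticks and which are about to flip, and prove a \emph{progress lemma}: from any such configuration, against any adversarial asynchronous scheduler, there are constants $c>0$ and $b$ such that with probability at least $c$ some philosopher eats within the next $b$ atomic actions of the hungry region. The crux is that whenever two hungry neighbors compete for the chopstick between them, their independent coin flips give a constant probability that one of them chooses the other side and frees a resource, so some local ``winner'' obtains both chopsticks. Iterating the progress lemma, the probability that no philosopher ever eats is $0$; this is exactly what the deterministic impossibility of Theorem \ref{thm1 - no det} forces us to obtain only almost surely.

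Next I would upgrade deadlock freedom to \emph{lockout freedom} by adding the \textbf{courteous condition} of \cite{key-2}: a philosopher who has just eaten will not attempt to pick up a chopstick again while the neighbor on that side is hungry; this is signaled by one extra bit per shared register, still $O(1)$ memory and $O(1)$-bit messages. I would check (a) courtesy never creates a new deadlock, since at any moment there is always a hungry philosopher who is not blocked by courtesy, so the progress lemma still applies; and (b) a fixed hungry philosopher cannot be bypassed infinitely often, because each time a neighbor eats, courtesy forces that neighbor to let our philosopher go before eating again, so within finitely many applications of the progress lemma our philosopher wins a competition and eats with probability $1$.

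Finally I would bound time and communication. The progress lemma guarantees that each hungry philosopher eats after an expected constant number of competition rounds, each round costing $O(1)$ atomic actions and $O(1)$-bit messages per involved edge; since at most $n$ philosophers and $n$ edges can be involved before the first philosopher eats, the expected time complexity and expected total classical communication complexity, measured from the moment the first philosopher gets hungry until some philosopher eats, are both $O(n)$. I expect the main obstacle to be the progress lemma itself: making the case analysis over local configurations airtight against a fully adversarial asynchronous scheduler --- which may freeze philosophers mid-protocol, reorder message deliveries, and choose when coins are flipped relative to other events --- is the delicate core of \cite{key-2}'s argument, and verifying that the courteous condition meshes correctly with this analysis (without itself reintroducing starvation) is the second point requiring care.
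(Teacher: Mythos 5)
Your proposal follows essentially the same route as the paper: the paper's proof is only a brief sketch of the Lehmann--Rabin coin-flipping protocol (flip, wait for the chosen chopstick, try the other, drop and re-flip on failure, with atomic actions), deferring deadlock/lockout freedom and the complexity bounds to \cite{key-2}, and your reconstruction --- progress lemma for probability-$1$ deadlock freedom, the courteous condition for lockout freedom, and the $O(1)$ memory, $O(n)$ time and communication accounting --- is exactly the argument of \cite{key-2} that the paper is citing. The only nuance is that your time and communication bounds are stated in expectation, which is the appropriate reading of the theorem since the protocol only terminates with probability $1$; otherwise your write-up is consistent with the paper's sketch, just more detailed.
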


\begin{proof} {\bf Sketch:} 
The idea of the algorithm is the following: Flip a random coin, if
it is head then wait until the left chopstick is free, and lift it.
If it is tail, then wait until the right chopstick is free, and lift
it. If you already have one chopstick, but the second one which you
try to lift is occupied by your neighbor, then drop your chopstick,
and repeat (from the random flip). We assume, that the actions are
\emph{atomic}; i.e. a single operation, and only one action can happen in the system at any 
given time. As a result,
it is not possible for two philosophers to lift the same stick at
the same time. (If two players attempt to perform an atomic action at the same time, 
the scheduler decides who goes first). For an extensive discussion on atomic actions,
see \cite{Lamport1990} and \cite{Lynch}.\end{proof}

Throughout this paper, we differentiate between two types of solutions
(for all problems presented): the \emph{randomized} solution, which
can solve the problem with some probability (could even be with probability
$1$), and the \emph{exact} (or \emph{deterministic}) solution, which
\emph{always} (at any execution) solves the problem in finite time.
The \cite{key-2} simple randomized solution, though probabilistic,
actually guarantees that a hungry philosopher will eventually eat
with probability $1$. This implies that all hungry philosophers will
eat with probability 1, by using the \emph{courteous condition} of
\cite{key-2}, as we explain in Section \ref{sec:Proof-of-lemma1}.
Note that for this algorithm, no knowledge about $n$ or a bound on
$n$ is assumed.

\subsection{Fair Leader Election}\label{defLE}
As mentioned before, this is the problem of electing a leader 
in an anonymous, truly
distributed system, among $n$ identical (honest) parties. 
We require that at the end of the protocol a single party will be elected; 
the algorithm may fail to elect any leader, but 
should never end with two parties elected as leaders.  
By \cite{key-2}, LE cannot be done classically in a 
deterministic way (i.e. the algorithm
will always end with exactly one leader elected and not fail).  

However, if $n$ or a bound $N$ on $n$ is known, such that 
$n\le N< 2n$,  
a randomized algorithm does exist, by \cite{IR}. 
It is based on the following idea: each player randomly
chooses a number in some finite range which depends on $n$, and the
one with the biggest number gets elected. The players use their knowledge
about $n$ to make sure that there is only a single biggest number.

The following theorem states that
knowing $n$ (or at least having some partial knowledge of $n$) is
necessary for electing a leader in an \emph{anonymous} ring, even
for a randomized algorithm, in the classical setting. 
 (For the definition
of synchronous, please see section \ref{subsec:Distributed-system}):
\begin{thm}
\label{thm 2: No random LE}
\cite{IR} (see also \cite{Atiya} Theorem
$14.3$). If $n$ is unknown, then there is no randomized synchronous (and thus also no asynchronous) algorithm
for the LE problem in an anonymous ring that terminates
in even a single execution for a single ring size.
\end{thm}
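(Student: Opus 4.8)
\textbf{Proof proposal for Theorem \ref{thm 2: No random LE}.}

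The plan is to argue by contradiction using a symmetry/covering argument: I would suppose that some randomized algorithm $A$ terminates, in at least one execution, on some ring of size $n$ with exactly one leader elected, and then show that $A$ must fail on a strictly larger ring, thereby contradicting the requirement that the algorithm never elects more than one leader in \emph{any} execution (not even one of probability $0$). First I would fix the ring size $n$ and the purported terminating execution $e$ on the ring $R_n$; since $A$ is anonymous and the ring is symmetric, I would keep track of the sequence of random choices made by each of the $n$ processors along $e$, together with the scheduler's choices (in the synchronous model the scheduler is trivial, which is why proving it there suffices). The key object is a larger ring $R_{kn}$ for a suitable integer $k\ge 2$ obtained by concatenating $k$ copies of the ``local history'' pattern that appears in $e$.

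The core of the argument is the standard \emph{lifting lemma} for anonymous rings: if we run $A$ on $R_{kn}$ and feed processor $j$ in block $i$ (for $0\le i<k$, $0\le j<n$) exactly the same sequence of coin flips that processor $j$ received in $e$, then by induction on time (rounds, in the synchronous model) the state of processor $(i,j)$ in $R_{kn}$ is identical at every round to the state of processor $j$ in $e$ on $R_n$. This is because a processor's next state depends only on its current state, its own coin flip, and the messages arriving from its two neighbors, and the ring structure makes block $i$ locally indistinguishable from $R_n$: the neighbors of $(i,j)$ carry the same states as the neighbors of $j$ did. Hence the computation on $R_{kn}$ is, round by round, $k$ disjoint copies of $e$. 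The main obstacle — the step I expect to require the most care — is handling the \emph{asynchronous} scheduler: there I cannot simply copy a round structure, so the cleanest route is to invoke the theorem only for the synchronous model and then observe, as the excerpt already notes, that an asynchronous algorithm in particular yields a synchronous one, so impossibility in the synchronous model immediately gives impossibility in the asynchronous model. (Alternatively, one lifts a fixed admissible schedule on $R_n$ to the ``$k$-fold periodic'' schedule on $R_{kn}$ and runs the same induction on atomic actions, but the periodicity bookkeeping is fiddly and unnecessary given the synchronous reduction.)

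To finish: since the lifted execution on $R_{kn}$ has positive probability (its probability is the product of the same coin-flip probabilities, one independent copy per block, hence a positive number), and since in $e$ processor $j_0$ declared itself leader, in the lifted execution \emph{every} block's copy of $j_0$ declares itself leader, so $R_{kn}$ terminates with $k\ge 2$ leaders. This is a terminating execution, of positive probability, electing more than one leader — precisely the outcome a correct (randomized) LE algorithm must exclude in all executions. Contradiction. Therefore no randomized synchronous LE algorithm for anonymous rings of unknown size can terminate even once on any single ring size, and the asynchronous case follows a fortiori. I would remark that the same covering construction with $k=2$ already suffices, and that the argument does not use the $n\le N<2n$ bound — it is exactly the absence of any bound that lets us form $R_{2n}$ while the algorithm, being uniform, behaves identically on each block.
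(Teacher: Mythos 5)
Your proposal is correct and follows essentially the same route as the paper, which sketches exactly this argument: take a successful (terminating) execution $\alpha$ on the $n$-ring and ``double'' it to a legal execution on a $2n$-ring in which both copies of the elected processor become leaders, contradicting the strict prohibition on electing two leaders in any execution. Your additional care about lifting coin flips block by block and reducing the asynchronous case to the synchronous one is consistent with (and fills in details of) the paper's cited proof from \cite{IR}.
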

The proof idea is that one can take a successful execution $\alpha$
of an algorithm on an anonymous ring with $n$ processors $p_{1},\ldots,p_{n}$,
and create a legal execution $\beta$ on an anonymous ring with $2n$
processors (in which $p_{i}\equiv p_{n+i}$), which is a ``doubling''
of $\alpha$, yielding two leaders, which is strictly forbidden. This
impossibility result on uniform LE implies that in this setting (probabilistic,
$n$ unknown) the LE problem is strictly harder than the DP problem
and does not follow from it directly. Note that the theorem also holds
if the processors have some bound on $n$ but do not fully know $n$
(see \cite{YamashitaK96} for proof).

\subsection{Quantum distributed algorithms}
See \cite{D08}
for a survey on quantum distributed algorithms, in particular, 
the quantum exact LE algorithm.   
A distributed algorithm in the quantum setting is defined as a straight 
forward generalization of the classical setting. 
We define it first in the asynchronous model. Each party holds a 
finite Hilbert space, of some number of qubits, 
which correspond to his quantum memory; 
he also holds a classical register which corresponds to his classical memory. 
Communication is done 
by sending qubits and/or bits to a neighbor. 
The parties are allowed, upon given the token by the scheduler, to 
apply a quantum gate, which can be unitary or involve a measurement, 
or they can also send a message (quantum or classical) to a neighbor. 
Each one of these actions is an atomic action. 
As in the classical setting, in the asynchronous scenario, the (possibly 
malicious) scheduler decides who gets to apply 
the next atomic action.  

The setting for the DP problem needs to also describe the chopsticks; 
We model this by adding a Hilbert space of dimension $3$ between every pair 
of philosophers, where the $3$ states correspond to where the chopstick 
currently is - $|L\rangle, |R\rangle$ correspond to the left or right 
neighbor holding the chopstick, respectively, whereas $|M\rangle$ means the 
chopstick is in the middle. Here we consider only situations in which the 
locations of the chopsticks are always well defined; we model this 
by requiring that after each atomic action, the chopsticks location
registers are measured in the computational basis, so they are in fact 
guaranteed to be classical registers throughout the protocol.

We define the LE problem in the quantum setting, exactly as in \cite{Tani}: 
a one qubit register for each party indicates at the end of the protocol 
whether he was elected to be a leader or not. 

We avoid using a shared memory, and thus the definition of 
the synchronous model is  
straightforward, and is essentially just like that of the asynchroneuous 
model except without a scheduler - each round, the parties get 
to apply unitary or a  measurement on their own register, and then 
send however many qubits or 
bits to their neighbors.

\subsection{The exact quantum algorithm for LE} 
We describe the protocol of  \cite{Tani} providing a quantum 
deterministic solution
for the LE problem. We denote this algorithm by QLE. 
The QLE algorithm works in the synchronous model, though 
as \cite{Tani} remark, the same algorithm can easily seen to work also 
in the asynchronous model. 
A short description of the algorithm of Tani et. al \cite{Tani} 
is as follows:(taken from \cite{Tani} 1.3.1). 

\textbf{Tani et al protocol for exact Quantum LE:} 
Assume for a moment that the size of the ring is $n$, known to all parties 
(we will remove this assumption shortly). 
Each 
party prepares the state $\frac{\left|0\rangle+\right|1\rangle}{\sqrt{2}}$
in one-qubit register ,$R_{0}^{j}$. The parties can then collaborate
to check (coherently) if all eligible parties have the same
content ($0$ or $1$) in their registers $R_{0}^{j}$. Each of the
parties stores the result of this check into another one-qubit
register $q^{j}$, followed by inversion of the computation and communication
performed for this checking, in order to erase garbage. After each party 
measures
$q^{j}$, exactly one of the two cases is chosen by the laws of quantum
mechanics:

\begin{enumerate} 
\item  the first case is that the $R_{0}^{j}$ qubits of all 
parties are in a superposition of classical
strings, where in each string not all parties have the same bit. 
\item The second case is that the qubits are in a state that superposes
the complement situations, i.e., cat-state $\frac{\left(\mid0\rangle^{n}+
\mid1\rangle^{n}\right)}{\sqrt{2}}$. 
\end{enumerate} 

In the first case, every party measures his $R_{0}$ register
and gets a classical bit. The parties that got $1$ remain in the game, 
the rest are eliminated; this always reduces the number of the
parties eligible to be elected in the next round. 
In the second case, however, all parties would get
the same bit if they measured their $R_{0}$ register. 
To overcome this, \cite{Tani}
introduce the ``magic unitary'', $U_n$ 
(see Appendix
\ref{sec:The-magic-unitaries} for description) that when applied on 
each of the qubits, transforms the 
state $\frac{1}{\sqrt{2}}\left(\ket{0^{n}}+\ket{1^{n}}\right)$
to a state with zero support on the space spanned by the all
$0$ and all $1$ states. Such a state, when measured, allows at least one 
but not all of the parties to be eliminated.

This way, the number of eligible parties is reduced by at least one, 
but not all parties are eliminated. 

The algorithm consists of $N$ identical phases, where $N$ is a bound on 
$n$ the size of the ring.  
In the $i$th phase the players execute the above algorithm among 
the eligible parties, with the magic unitary $U_{N-i}$. 
If $N-i$ is equal to the number of eligible parties, then at least one 
party gets eliminated. 
The parties that were not eliminated, are thus called the eligible parties 
in the next round. 
The protocol guarantees that after $i$ phases, 
there are at most $N-i$ eligible parties
remaining (they send in the end messages that announce they stayed
eligible to their neighbors. 
Note that if $n<N$, then in the first $N-n$ phases (and
in fact in any phase $i$ where the number of eligible parties is bigger than 
$N-i$), it is possible that no one will
be eliminated.  

\begin{thm}
{\cite{Tani} (Theorem 1 adapted)} 
Let $|E|$ and $D$ be the number of edges and the maximum degree
of the underlying graph, respectively. Given the number $n$ of parties,  
there is a truly distributed quantum algorithm in the synchronous model, 
which exactly 
elects a unique leader in $\Theta(n^2)$ rounds. 
The total quantum communication complexity over all parties is 
$\Theta(|E|n^2)$. 
\end{thm}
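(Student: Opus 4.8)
The plan is to prove three things about the $N$-phase protocol above (with $N=n$ when the ring size is known exactly): its exactness, the $\Theta(n^2)$ round bound, and the $\Theta(|E|n^2)$ communication bound. Write $m_i$ for the number of parties still eligible at the start of phase $i$. I would first pin down the behaviour inside one phase. Each eligible party begins phase $i$ with $R_0^j$ in $(\ket0+\ket1)/\sqrt2$, so the joint eligible register is in the uniform superposition; the coherent consistency check writes the predicate ``all eligible parties hold the same bit'' into each $q^j$, and inverting the check (computation and communication alike) returns the $R_0^j$ registers to the projection of the uniform superposition onto the subspace selected by the measured value of $q^j$, with the communication registers disentangled. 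Thus after $q^j$ is measured the eligible $R_0^j$'s are either (case 1) in a superposition supported only on strings that contain both a $0$ and a $1$, or (case 2) in the $m_i$-qubit cat state $\tfrac1{\sqrt2}(\ket{0^{m_i}}+\ket{1^{m_i}})$. In case 1 every measurement outcome has a $0$ and a $1$, so at least one party is eliminated and at least one survives. In case 2 I invoke the properties of the magic unitary $U_{N-i}$ from Appendix~\ref{sec:The-magic-unitaries}: when $N-i=m_i$, $U_{N-i}^{\otimes m_i}\tfrac1{\sqrt2}(\ket{0^{m_i}}+\ket{1^{m_i}})$ has zero amplitude on $\ket{0^{m_i}}$ and $\ket{1^{m_i}}$, so again at least one but not all parties leave, and for any parameter the all-$0$ outcome is avoided, so eligibility never empties.

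I would then run the global invariant. The sequence $m_1=n\ge m_2\ge\cdots$ is non-increasing (a phase can only eliminate) and never drops below $1$. The magic-unitary parameters used across the phases form a decreasing sequence that passes through every integer at most $n$, and each time this parameter equals the current $m_i$ the previous paragraph forces $m_{i+1}<m_i$; hence $m_i$ cannot remain constant at any value above $1$. A short bookkeeping argument --- accounting for the first $N-n$ ``slack'' phases, in which the parameter exceeds $n$ and nothing need happen --- then yields $m_i\le\max(1,N-i+1)$, so the eligible count reaches $1$ within $N-1$ phases. At that point the unique eligible party detects this (from the end-of-phase status messages) and declares itself leader; the garbage-erasing inversions guarantee every other party outputs $\ket0$. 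Since the eligible set only shrinks and a leader is named only when it has size exactly $1$, the protocol ends every execution with exactly one leader and never with two --- this is exactness.

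For the resource bounds in the synchronous model there are $N-1=\Theta(n)$ phases. Inside a phase, the consistency check and its inversion must relay a bit among all eligible parties of an anonymous ring, which is done by routing messages around the ring; as consecutive eligible parties can be $\Theta(n)$ apart this needs $\Theta(n)$ rounds, and since the routing keeps up to $\Theta(|E|)$ channels busy over those rounds it costs $\Theta(|E|\cdot n)$ qubits of communication per phase (the magic unitary and the status messages contribute only $O(|E|)$ more). Multiplying by the number of phases gives the upper bounds $O(n^2)$ rounds and $O(|E|n^2)$ communication. The matching lower bounds follow from a run in which the measurement outcomes repeatedly leave $m_i$ almost unchanged, so that $\Omega(n)$ phases are forced, each of which must move information across distance $\Omega(n)$ on the ring and hence uses $\Omega(n)$ rounds and $\Omega(|E|n)$ communication. (The asynchronous version is analogous, with rounds replaced by counts of atomic actions.)

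The step I expect to be the main obstacle is the phase-local correctness of the coherent consistency check: proving that, carried out under anonymity --- so that it is a symmetric traversal of the ring that names no party --- and in the presence of a superposition over branches, the check-then-uncheck procedure records the correct predicate in every $q^j$ and restores the $R_0^j$ registers exactly, with the communication registers left disentangled. Tied to this is the bookkeeping that selects the right magic unitary per phase so that, once the $N-n$ slack phases are past, every phase with $m_i=N-i$ makes genuine progress while the all-$0$ outcome stays impossible; getting the indexing right and controlling the behaviour of $U_k$ on register sizes different from $k$ is the delicate part, and is exactly what Appendix~\ref{sec:The-magic-unitaries} is for.
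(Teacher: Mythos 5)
You are reconstructing a theorem that this paper does not actually prove: it is imported from \cite{Tani}, and the paper only gives an informal sketch of the protocol plus the magic-unitary property in Appendix \ref{sec:The-magic-unitaries}, with the proof explicitly deferred to \cite{Tani}. So the comparison here is between your reconstruction and that sketch, and your overall skeleton (per-phase case analysis, the invariant that the eligible count is at most $\max(1,N-i)$ after phase $i$, $\Theta(n)$ rounds and $\Theta(|E|n)$ qubits per phase over $\Theta(n)$ phases) matches it.

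There is, however, a genuine gap at the exact point you flag as delicate, and your proposed resolution of it is wrong. You assert that ``for any parameter the all-$0$ outcome is avoided, so eligibility never empties.'' The zero-support property of the magic unitaries (Claim \ref{thm:tani}) is stated, and is true, only when the unitary's parameter equals the actual number of eligible parties sharing the cat state. For a mismatched parameter it fails: with the even-case unitary $U_k$ applied to $\frac{1}{\sqrt2}(\ket{0^{m}}+\ket{1^{m}})$, the amplitude on $\ket{0^{m}}$ is proportional to $1+e^{-im\pi/k}$, which vanishes only when $m/k$ is an odd integer; for generic $m\neq k$ both constant strings survive with nonzero amplitude. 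Since your protocol eliminates parties directly according to the post-unitary measurement outcome, a mismatched phase (which occurs whenever the eligible count has dropped faster than one per phase, even with $N=n$) can produce the all-$0$ string and eliminate everyone, destroying exactness. The actual protocol avoids this not through a property of $U_k$ but through its control structure: elimination is only ever performed on a string that has been verified (by a consistency check among the eligible parties, which they can afford to repeat after applying the unitary, or equivalently because with $n$ known the end-of-phase status messages let every party maintain the exact eligible count and hence always use the matched unitary) to contain both a $0$ and a $1$; if the string is constant, no one is eliminated in that phase and the bookkeeping still guarantees progress in every phase where the parameter matches. Your round and communication accounting survives this fix unchanged, but as written your exactness argument rests on a false claim about $U_k$. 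A secondary, minor point: the $\Theta$ lower bounds need no adversarial run of measurement outcomes --- each phase costs $\Theta(n)$ rounds and $\Theta(|E|n)$ qubits deterministically because the consistency check must traverse the ring, and the number of phases is fixed at $N-1$.
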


Using it on a ring, we get that $|E|=n$ and $D=2$, getting $O(n^2)$ 
rounds, and total quantum communication complexity 
of $O(n^3)$ for the synchronous scenario. 

\section{Proof of Theorem \ref{thm:Exist_a_solution}\label{sec:Proof-of-lemma1}}
We now prove that one can derive a DP protocol from an LE one; to do this, 
we prove lemma \ref{lem:Symmetry_breaking}, which states that symmetry 
breaking (As in Definition \ref{def:SB}) implies a DP algorithm. 
We start by stating a weaker lemma, which proves this 
with the demand of no deadlocks instead of no lockouts. 
 
\begin{lem} \label{lem:exist}
Given a deterministic symmetry breaking (SB) protocol on a ring, 
Algorithm \ref{SB2DP}  is a deterministic deadlock free, truly distributive, uniform, 
anonymous solution to
the DP problem, with overhead of $O\left(1\right)$ classical memory, 
$O\left(n\right)$ time complexity, and $O(n)$ communication 
complexity, over the SB protocol. 
\end{lem}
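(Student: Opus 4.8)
The plan is to exhibit Algorithm~\ref{SB2DP} explicitly and then check deadlock-freedom and the three overhead bounds. The algorithm has a one-time \emph{symmetry-breaking stage} followed by a repeatable \emph{dining stage}. In the symmetry-breaking stage, the first philosopher to become hungry sends a wake-up token around the ring; each philosopher forwards the token once and ignores later copies, so several philosophers becoming hungry at roughly the same time cause no harm, and once woken every philosopher joins one common execution of the given SB protocol. Crucially, a philosopher never touches a chopstick before SB has terminated for him; when it has, he holds a final group bit $g\in\{0,1\}$, and by the defining property of SB at least one philosopher ends with $g=0$ and at least one with $g=1$. In the dining stage, whenever a philosopher is hungry he acts on his stored $g$: if $g=0$ he waits for (sleeping until notified, if necessary) and picks up his left chopstick, then his right chopstick; if $g=1$ he does the same in the opposite order; once he holds both he eats, then puts both down and sends a wake-up to any neighbor that may be sleeping on one of these two chopsticks. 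The group bit is computed once and reused across all later hunger episodes. The extra per-processor state is $g$ together with a constant number of control flags, i.e.\ $O(1)$ classical memory; the wake-up propagation and the post-meal notifications are $O(1)$-length messages, $O(n)$ of them in the window from ``some philosopher becomes hungry'' to ``some philosopher eats'', so $O(n)$ communication over SB; and the $O(n)$ time bound is argued below. None of this overhead uses $n$, so the protocol is uniform and inherits any dependence on $n$ only from SB; it is manifestly deterministic, anonymous, and truly distributed, since it uses only neighbor-shared chopstick registers and neighbor-to-neighbor messages.

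For deadlock-freedom I will argue with the \emph{wait-for graph} on philosophers: put a directed edge $P\to Q$ whenever $P$ is hungry and currently blocked on a chopstick held by its ring-neighbor $Q$. A chopstick on which a philosopher is blocked is held by someone (otherwise he would take it), and in Algorithm~\ref{SB2DP} only hungry philosophers ever hold chopsticks, so every such edge points to a hungry neighbor; hence the relevant graph lives on the set $H$ of hungry philosophers and each blocked vertex of $H$ has out-degree $\ge 1$. Suppose, for contradiction, an execution in which some philosopher stays hungry forever while nobody eats. Since SB terminates by hypothesis, we may pass to the part of the execution after SB is done; from then on each philosopher performs only $O(1)$ atomic actions before he eats or goes to sleep, and no chopstick is released, so the execution reaches a quiescent configuration in which every vertex of $H$ is blocked, and the wait-for graph on $H$ therefore has all out-degrees $\ge 1$, hence contains a directed cycle. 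But every edge of this graph joins ring-neighbors, and the ring-adjacency graph induced on any proper subset of the philosophers is a disjoint union of paths and so acyclic; thus $H=\{P_1,\dots,P_n\}$ and the cycle is the whole ring. Reading the cycle in one of its two orientations, each philosopher holds his left chopstick and waits for his right one, which forces every philosopher to have picked up left before right, i.e.\ $g=0$ for all; the other orientation forces $g=1$ for all. Either case contradicts the SB guarantee that both group values occur, so no such execution exists, i.e.\ Algorithm~\ref{SB2DP} is deadlock-free.

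For the $O(n)$ time bound, note that once the group bits are fixed the acquisition rule can never induce a wait-for cycle (same reasoning as above), so at every moment the wait-for relation restricted to the philosophers currently holding exactly one chopstick is acyclic. If at some moment some philosopher holds a single chopstick, that relation is nonempty, finite, and acyclic, hence has a vertex of out-degree $0$ --- a philosopher whose missing chopstick is held by nobody, who can therefore take it on his next scheduled step and eat. And as soon as every hungry philosopher has attempted his first grab, which costs at most $n$ atomic actions, some philosopher does hold a chopstick (a failed grab means a neighbor holds the contested chopstick). So some philosopher eats within $O(n)$ atomic actions of the dining stage starting; adding the $O(n)$-action wake-up propagation gives the claimed $O(n)$ time overhead over one call of SB.

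I expect the main obstacle to be the bookkeeping of the symmetry-breaking stage --- ensuring that exactly one SB instance is executed even when several philosophers become hungry concurrently, that no philosopher ever acts on the default or a stale group bit, and that the asynchronous, possibly adversarial scheduler cannot disrupt this --- together with turning the slogan ``the symmetry is broken, so the ring cannot deadlock'' into a rigorous statement in the asynchronous, repeated-episode model, which is exactly what the wait-for-graph argument above is designed to supply.
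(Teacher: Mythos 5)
Your proposal is correct and takes essentially the same route as the paper: the same algorithm (a one-time, message-triggered run of SB followed by a pick-first-chopstick-according-to-group rule with sleep/wake-up), the same key contradiction (a deadlock would force every philosopher into the same group, violating the SB guarantee), and the same $O(1)$/$O(n)$/$O(n)$ accounting; your wait-for-graph cycle argument is just the paper's induction around the ring in graph language. The only patch needed is to explicitly rule out a directed $2$-cycle $P\to Q\to P$ (impossible since the unique chopstick shared by two neighbors cannot be held by both of them at once), because undirected acyclicity of proper induced subgraphs of the ring does not by itself exclude it, and only after that can you conclude the cycle spans the whole ring.
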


\begin{proof}
Let us first look at the protocol for a hungry philosopher $P_j$, as defined in Algorithm \ref{SB2DP}.

\begin{algorithm}
\caption{DP algorithm given SB}
\label{SB2DP}
\begin{algorithmic}[1]
\STATE initial a bit doneSB = False (did the philosophers ever break the symmetry)
\LOOP
	\WHILE {\NOT hungry}
		\STATE think
		\IF {\NOT doneSB \AND receive message doSB}
			\STATE exit loop
		\ENDIF
	\ENDWHILE
	\IF {\NOT doneSB}
		\STATE send message doSB to the right neighbor $P_{j-1}$
		\STATE run SB and set $g_j$ to be the group bit (SB output)
		\STATE set doneSB = True
	\ENDIF
	\WHILE {hungry}
		\REPEAT
			\IF {$g_j=0$ (the philosopher belongs to the $0$-group)}
				\STATE try to lift the left chopstick
			\ELSE
				\STATE try to lift the right chopstick
			\ENDIF
			\IF {failed to lift the chopstick}
				\STATE enter sleep-mode
			\ENDIF
		\UNTIL {succeeded in lifting the chopstick}
		\REPEAT
			\STATE try to lift the second chopstick
			\IF {failed to lift the chopstick}
				\STATE enter sleep-mode
			\ENDIF
		\UNTIL {succeeded in lifting the second chopstick}
		\REPEAT
			\STATE eat
		\UNTIL {\NOT hungry}
		\STATE put down both chopsticks
		\STATE send wake-up message to neighbors in sleep-mode
	\ENDWHILE
\ENDLOOP
\end{algorithmic}
\end{algorithm}

The philosophers are thinking as long as they don't become hungry. Whenever a philosopher gets hungry, he exits the thinking loop, and 
sends a message to all philosophers to run the SB algorithm.
Given that the symmetry was broken, the philosophers are now divided into two
non-trivial groups - the $0$ group, and the $1$ group (defined by the 
group bit $g_j$).
Each hungry philosopher now tries to lift a chopstick, according to his group bit.

Note that whenever a philosopher holds a chopstick, he does not put it down,
until he gets to eat.

Let us prove that this protocol is deadlock-free. 
Assume in contradiction that we have a deadlock 
Let us consider a point in time after which no  
philosopher can eat, and consider the execution at a later time, 
when some philosopher $p_{1}$
has a chopstick in his hand. WLOG say that $p_{1}$ has a chopstick in his 
left hand. Since we have a deadlock, he has no chopstick in his right hand. 
Then he must belong to the
$0$ groups. Since we have a deadlock, and we know 
he cannot lift his right chopstick, this means that his neighbor $p_{2}$
must hold it at that point. 
Since we are in deadlock, $p_{2}$ cannot hold $2$ chopsticks,
hence he lifted his left chopstick first - so he is also in group $0$.
By induction we must have that all the philosophers must be in group
$0$ which contradicts the symmetry breaking assumption. 
(Note that since we force every philosopher who fails to lift any chopstick to enter a sleep-mode,
 it actually means that all the philosophers are in sleep mode.) 

Recall that we measure time complexity from the moment any philosopher becomes
hungry, until the moment any philosopher gets to eat.
Note that if a philosopher fails to lift his chopstick, he enters sleep-mode.
And a neighbor wakes him up, only after the neighbor has already eaten, hence 
after the time measuring is over. 
Each philosopher $P_j$ only uses $O(1)$ atomic actions, in addition to the SB protocol, 
from the moment any philosopher gets hungry, until $P_j$ gets to eat, or enters sleep-mode.
Summing it all up, we get a total of $O\left(n\right)$ atomic actions over all philosophers
(in addition to the SB).
We only used one bit of memory per philosopher, to define the different groups.
The protocol uses $O(n)$ bits of communication to initiate the SB protocol, 
and no communication until the first philosopher had eaten (the only 
communication is via wake-up calls, and a philosopher only sends a wake-up 
call after he had eaten). 
Hence over all, we get $O\left(n\right)$ time complexity, $O\left(n\right)$ total classical communication complexity, 
and $O\left(1\right)$ classical memory per philosopher. 
\end{proof}

\begin{rem}
If we consider the synchronous case, then we will only have $O\left(1\right)$ rounds: 
In the first round, every philosopher will try to lift his appropriate chopstick (according to his group).
In the second round, one philosopher will manage to lift his second chopstick (by the proof we just saw).
\end{rem}

Note that from that proof we can get a nice and important corollary:
\begin{cor}
\label{cor:no fail}Using the above algorithm, there is at least one
philosopher that will not fail to lift any chopstick.
\end{cor}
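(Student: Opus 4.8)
The plan is to read the corollary off the proof of Lemma~\ref{lem:exist} almost for free, using deadlock-freeness together with the precise sleep/wake-up bookkeeping of Algorithm~\ref{SB2DP}. I would fix any admissible execution in which at least one philosopher becomes hungry (if none does, the statement is vacuous, since no chopstick is ever lifted). By Lemma~\ref{lem:exist} the protocol is deadlock-free, so in this execution some philosopher eventually eats; since the scheduler serializes atomic actions there is a well-defined \emph{first} philosopher $P$ to hold both of his chopsticks simultaneously (equivalently, the first to perform the ``eat'' action). I would then show that this $P$ never fails a lift on the way to that first meal.

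The one real observation needed is: in Algorithm~\ref{SB2DP} a hungry philosopher enters sleep-mode \emph{only} when he fails to lift a chopstick (his first or his second), and the only thing that can wake a hungry philosopher is a wake-up message, which is sent exclusively on the line immediately after ``put down both chopsticks'', i.e.\ only by a philosopher who has \emph{already eaten}. Hence, in the interval from when $P$ becomes hungry until $P$ eats, no wake-up message has been sent at all, because $P$ is the first philosopher to eat. Consequently, if $P$ had ever entered sleep-mode in that interval he would have stayed asleep at least until some philosopher ate --- in particular until $P$ himself ate --- which is impossible. So $P$ is never in sleep-mode before eating, and since a failed lift is the only thing that could have sent him to sleep, $P$ never fails to lift a chopstick: both of his lift attempts succeed and he proceeds to eat. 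This is exactly the corollary.

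I expect the main (and essentially only) obstacle to be making the bookkeeping airtight rather than any mathematical difficulty: one must confirm that ``first philosopher to eat'' is well defined (true because atomic actions are serialized; in the synchronous variant one simply picks any philosopher who finishes first), that Lemma~\ref{lem:exist} really does guarantee a meal in every admissible execution in which someone is hungry, and --- the genuinely load-bearing point --- that within the relevant time window the only sleep trigger is a failed lift and the only wake-up trigger is a neighbor who has just finished eating. I would verify the latter by inspecting Algorithm~\ref{SB2DP} directly: every ``wake-up'' message is emitted strictly after an ``eat'', and the only other messages appearing before the first meal, the ``doSB'' messages, act only on philosophers sitting in the thinking loop and never on a sleeping philosopher. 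With that verified, no further argument is required.
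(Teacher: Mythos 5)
Your proposal is correct and follows essentially the same argument as the paper: the paper also takes the first philosopher to eat and observes that a failed lift forces sleep-mode, while waking requires a wake-up message sent only by a philosopher who has already eaten, yielding the same contradiction. Your version merely phrases it directly (the first eater never fails) rather than as the paper's contradiction of "all philosophers fail at least once," which is an inessential difference.
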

\begin{proof}
Every philosopher that fails, enters a sleep-mode. Suppose all philosophers 
were to fail at some point at least once, before the first one of them ate.
Let $P_j$ be the philosopher who ate first. 
This means that also $P_j$ was asleep before he ate; for that he needed 
to wake up, but $P_j$ could only have waken up if another 
philosopher $P_{j'}$ had eaten before him, and sent him a wake up call. 
This is a contradiction to the fact that $P_j$ was the first to eat. 
\end{proof} 

We will use this corollary to get efficient time complexity in Theorem  
\ref{cor:better LE}.

Lemma \ref{lem:exist} only guarantees a DP protocol with no deadlocks. 
\cite{key-2} gave a randomized protocol that solves
the DP problem with no deadlocks, but possibly with lockouts. They
then defined the courteous condition, which helps to make their 
protocol also lockout free. This condition says: 

\begin{definition}
{\bf The courteous condition}: ``If one of my
neighbors hasn't eaten since the last time I ate - I won't try to
lift any chopstick''. 
\end{definition} 

\cite{key-2} then proved (Theorem $5$ in \cite{key-2})
that their (randomized) protocol which is deadlock free, 
with the addition of the courteous
condition, guarantees no lockouts.
We adopt their strategy and modify our protocol to 
enforce the 
courteous condition, which enables us to prove the following theorem: 

\begin{lem-}
{\bfseries \ref{lem:Symmetry_breaking}.} [restated]\label{lemma proof-1} 

Given any symmetry breaking protocol, there is a deterministic protocol,
lockout free, truly distributive, anonymous solution to the exact DP problem,
with $O\left(1\right)$ classical memory per philosopher, 
$O\left(n\right)$ total classical communication complexity 
and $O\left(n\right)$ time complexity, s.t. there
are no lockouts (and thus also no deadlocks). 
\end{lem-}

\begin{proof} 
We will adjust our protocol in a similar way
\cite{key-2} adjusted their protocol. We will add some registers
for each philosopher: 
\begin{itemize}
\item A bit (readable by both his neighbors) indicating if he is hungry
or not, initialized to $0$ (not hungry). He will set it to $1$ whenever
he becomes hungry.
\item A register (of two bits) between any two neighboring philosophers
indicating who ate last (left/right/neutral), initialized to neutral.
A philosopher who gets to eat, will update both his left and right
registers appropriately.
\end{itemize}
When a $0$-group philosopher $P$ gets hungry, he will first check:
if his left neighbor is hungry and the last to eat was $P$, then
he will not try to eat, and instead he will enter a sleep-mode (until the left neighbor
will finish eating, and will send him a wake-up message).

We will now explain why this protocol is lock-out free.
From the previous lemma, we know that there is a 
philosopher $P_j$ who got to eat.
It is enough to show that his neighbors can also eat at some point, in order to prove that there 
are no lock-outs.
Due to symmetry, we will only show that his left neighbor $P_{j+1}$ will also get to eat.
After $P_j$ will finish eating, he will put down both his chopsticks, and not attempt to lift them,
until $P_{j+1}$ will eat. Hence, if $P_{j+1}$ is hungry but not eating, it is because his left neighbor 
$P_{j+2}$ has their joint chopstick. If $P_{j+2}$ will eat at some point, he will put down both 
his chopstick upon finishing, and $P_{j+1}$ will get to eat. Otherwise, $P_{j+2}$ holds only his
right chopstick, and therefore he is in the $1$ group. Since not all philosophers are in the $1$ 
group, there will be a first left neighbor $P_{j+t}$, who is in the $0$ group.
Either $P_t$ does not hold his right chopstick, or he is holding both chopsticks, and can eat.
Either way, at some point he will put down his right chopstick, and his right neighbor $P_{j+t-1}$ can eat.
After which $P_{j+t-2}$ can eat, and so on, until $P_{j+1}$ can eat.

The cost of enforcing this condition, will be adding $O\left(1\right)$
bits for each philosopher, and the time complexity remains the same
(recall we only measure until someone is eating). 
Also note, that each philosopher only perform $O\left(1\right)$ 
local atomic actions from the moment he is hungry, until he himself gets to eat 
(again, because upon failing, he enters sleep-mode).
Also, if we consider the synchronous case, then we will again only have $O\left(1\right)$ rounds, 
from the same reason.

\end{proof}

We can now restate and prove lemma \ref{lem:LE2DP}:

\begin{lem-}
{\bfseries \ref{lem:LE2DP}.} [restated] 
The existence of a LE protocol implies
that of a lockout free protocol for DP - 
inheriting its properties {[}exact / random{]}, with
an addition $O\left(n\right)$ time, $O\left(1\right)$ classical memory, 
and $O\left(1\right)$ classical communication complexity.
\end{lem-}
\begin{proof}
Assume that the philosophers could have elected a leader (possible
using QLE). This implies that the symmetry of the problem was broken
(we divided the philosophers to the leader, and all the rest), and
this alone is sufficient for solving the DP problem, as we proved in
lemma \ref{lem:Symmetry_breaking}. 

\end{proof}

This lemma enables us to easily prove Theorem \ref{thm:Exist_a_solution}:

\begin{thm-}
{\bfseries \ref{thm:Exist_a_solution}.} [restated] There exists a 
quantum protocol which provides a lockout free, truly distributed, anonymous 
solution to the DP problem, in the asynchronous 
setting in which $n$ is known or at least
a bound on it is known. 
\end{thm-}

\begin{proof}
From \cite{Tani} we know there exists a LE protocol in this setting, 
hence we can get from lemma 
 \ref{lem:LE2DP} the existence of a DP protocol.
\end{proof}

\section{\label{sec:Idea-of-proof}proof for Theorem \ref{thm:efficient solution}}

\subsection{Overview of main idea of proof for Theorem \ref{thm:efficient solution}}

In Section \ref{sec:Proof-of-lemma1} we saw a way to solve the DP
problem given any type of symmetry breaking, as stated in lemma \ref{lem:Symmetry_breaking}. Symmetry breaking was done using the exact quantum LE 
of Tani et. al (the QLE). This SB protocol needs $O(n)$ quantum memory; 
other complexity parameters are also not optimal. 
In order to prove theorem \ref{thm:efficient solution}, we 
will construct a more efficient 
symmetry breaking protocol. 
The basic idea is to observe that a modified and simplified version of 
just {\it one} out of the $n$ stages of the protocol of
\cite{Tani}'s, already suffices  
to break the symmetry; this can be done with only $O(1)$ memory, 
and much more efficiently. 

\subsubsection*{Main idea of the proof of theorem \ref{thm:efficient solution}:
breaking symmetry.}

Our protocol will use an idea similar to a one round with $k$ players
from \cite{Tani}: Each player creates $|0\rangle+|1\rangle$
(note: throughout the paper we will omit the normalization factor),
so the system of all $n$ qubits (of all philosophers together) is
in superposition of all possible $n$ bit string. The players then
collectively measure it to see if it is in $sp\left(|0^{n}\rangle,|1^{n}\rangle\right)$
or not. If it is not (this is the \emph{good scenario) -} then each
philosopher measures his own qubit, and some (at least one, but not
all) will get $0$ and some will get $1$. Hence the symmetry was
broken. Otherwise the state was projected to $|0^{n}\rangle+|1^{n}\rangle$,
then each player applies some local unitary matrix $U_{k}$ (
``\emph{the magic unitary}'' as described Appendix
\ref{sec:The-magic-unitaries}) that transforms the system into some
other state, which has zero projection on the $|0^{n}\rangle,\,|1^{n}\rangle$
sub-space (this means that when measuring this state, the outcomes divide 
the players to two 
non-trivial groups, some $0$ and some $1$, meaning the symmetry was 
broken). Hence we return to the first (good) scenario.

The problem is that the philosophers cannot do the first measurement locally,
so they must communicate for the matter.

Our protocol checks between all neighbors-pairs, if the ``classical 
values'' namely, the values in the computational basis, of their respective 
qubits, are identical. Then the philosophers broadcast 
the classical bit results (In the sense
that each philosopher sends a message ``equal''/''different''
(to his right neighbor) indicating his result, and each philosopher
who receives such messages (from his left neighbor), sends it to his
right neighbor). If there is some pair with different qubits (each
philosopher counts $n$ messages), then we are in the good scenario
(the symmetry was broken). Otherwise we will need to apply the magic
unitary. Altogether there are essentially two stages, and each philosopher
needs to send $O\left(n\right)$ (classical) messages (each consisting
of $O\left(1\right)$ bits) to one of his neighbors (say the right
one), and also O(1) qubits.

\subsection{Efficient solution\label{sec:Efficient-solution}}
Let us first describe the algorithm assuming the philosophers know 
$n$, the number of philosophers, exactly.  
\begin{thm}
Algorithm \ref{SBAlgo} is a deterministic quantum protocol, which is a lockout
free, truly distributed, anonymous solution to the SB problem, when
$n$ is known, and uses $O\left(1\right)$ quantum memory and $O\left(\log n\right)$
classical memory per philosopher, with $O\left(n^2\right)$  classical bit communication complexity, 
$O\left(n\right)$ quantum communication complexity over all parties.
The (asynchronous) time complexity is $O\left(n^2\right)$.
\end{thm}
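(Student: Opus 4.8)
The plan is to present Algorithm \ref{SBAlgo} explicitly as a two-phase protocol and then argue correctness and complexity separately. In Phase~1, every philosopher $P_j$ prepares a single qubit in the state $|0\rangle+|1\rangle$ (in register $R_0^j$); the $n$ qubits of all philosophers together are then in an equal superposition over all $2^n$ classical strings. The philosophers must now decide \emph{collectively} whether this joint state lies in $sp(|0^n\rangle,|1^n\rangle)$ or not, but without collapsing the superposition prematurely. To do this, each philosopher coherently compares his qubit with that of his right neighbor (a controlled computation writing the XOR into an ancilla), then broadcasts around the ring — via $O(n)$ nearest-neighbor classical-looking messages — the coherent ``equal/different'' bit; actually, since we want this to remain coherent, the OR of all the pairwise-difference bits is computed along the ring into a one-qubit register $q^j$ held by each party. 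After measuring $q^j$, by the structure of the computation exactly one of two cases occurs (as in \cite{Tani}): either (1) $q^j=0$ for all $j$ and the $R_0$ qubits collapse to a superposition of strings each of which is \emph{not} constant, or (2) $q^j=1$ for all $j$ and the $R_0$ qubits collapse to the cat state $|0^n\rangle+|1^n\rangle$. All the garbage from the comparison and the OR-computation is uncomputed by running the communication in reverse, restoring $O(1)$ qubits per party.

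In case~(1) — the \emph{good scenario} — every philosopher measures $R_0^j$ in the computational basis and sets $g_j$ to the outcome; since the surviving string is non-constant, at least one $g_j=0$ and at least one $g_j=1$, so the symmetry is broken and we are done. In case~(2), each philosopher applies the magic unitary $U_n$ (Appendix~\ref{sec:The-magic-unitaries}) to $R_0^j$; by the defining property of $U_n$, the global state $|0^n\rangle+|1^n\rangle$ is mapped to a state with zero support on $sp(|0^n\rangle,|1^n\rangle)$. The philosophers then repeat the coherent check of Phase~1 once more: this time it is guaranteed to land in the good scenario, and a computational-basis measurement of the $R_0^j$'s again yields a non-constant string, breaking the symmetry. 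Thus after at most two rounds of the check-and-measure procedure the protocol terminates with the group bits $g_j$ not all equal, which is exactly Definition~\ref{def:SB}. Note that knowledge of $n$ is used in two essential places: the magic unitary is $U_n$, and each philosopher must count exactly $n$ messages passing by in order to know the broadcast/OR-computation is complete.

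For the complexity: each philosopher holds $R_0^j$, one ancilla $q^j$, and $O(1)$ scratch qubits for the comparison — so $O(1)$ quantum memory; he needs an $O(\log n)$ classical counter to count the $n$ messages going around the ring — so $O(\log n)$ classical memory. The only quantum communication is the comparison with a neighbor and its uncomputation, a constant number of qubits per edge, hence $O(n)$ quantum communication over all $n$ edges; this happens a constant number of times (at most twice). The classical-looking broadcast/OR-computation costs $O(n)$ messages per philosopher per round, times $n$ philosophers, times $O(1)$ rounds, giving $O(n^2)$ classical bits of communication. In the asynchronous model, each of the $n$ philosophers performs $O(n)$ atomic actions (sending/forwarding messages, plus $O(1)$ quantum gates) per round and there are $O(1)$ rounds, so the time complexity is $O(n^2)$. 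I expect the main obstacle to be the bookkeeping of the coherent check: one must verify that the comparison-and-OR circuit, run around a ring in the asynchronous scheduler model, genuinely computes a function that produces the clean dichotomy (1)/(2) after measuring $q^j$ — in particular that the uncomputation really does disentangle and erase all scratch registers regardless of the scheduler's choices — so that the post-measurement state of the $R_0$ register is exactly as claimed. This is essentially a careful re-derivation of one phase of \cite{Tani}'s analysis adapted to nearest-neighbor asynchronous communication, and it is the step where the details must be checked rather than waved through.
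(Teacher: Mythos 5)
Your proposal does not follow the paper's route, and as written it contains a genuine gap that undermines exactly the complexity bounds the theorem claims. The paper's Algorithm \ref{SBAlgo} deliberately avoids any coherent collective check: each philosopher locally prepares a Bell pair $\frac{1}{\sqrt{2}}(|00\rangle+|11\rangle)$ and sends one half to his neighbor, so that the ``equal/different'' comparison becomes a purely \emph{local} two-qubit measurement with a classical outcome $x_j$. Only these classical bits are then passed around the ring and counted; if all $x_j=0$ the global state has provably collapsed to the cat state $|0^{2n}\rangle+|1^{2n}\rangle$, a local relabeling $|11\rangle\mapsto|10\rangle$ reduces it to $(|0^n\rangle+|1^n\rangle)\otimes|0^n\rangle$, and the magic unitary $U_n$ is applied; if some $x_j=1$, the parities already force the final measured group bits to be non-constant. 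This is what yields $O(n)$ quantum communication (one qubit sent per philosopher, ever), no uncomputation, and no coherent messages in Phase two.

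Your version instead re-derives one phase of \cite{Tani}: a cross-neighbor coherent XOR, a coherent OR of all difference bits accumulated ``along the ring'' into a register $q^j$ at \emph{every} party, a measurement of $q^j$, and an uncomputation of the garbage. The gap is twofold. First, if the equality information must stay coherent until $q^j$ is measured (as your correctness argument requires), then the $O(n)$ messages per philosopher carrying it around the ring are \emph{quantum} messages, and so is their reversal in the uncomputation; that is $\Theta(n^2)$ quantum communication, contradicting the theorem's $O(n)$ quantum bound, and your own accounting (``a constant number of qubits per edge'') silently drops this cost while simultaneously calling the same messages ``classical-looking.'' Second, you explicitly leave unverified the claim that the asynchronous comparison-and-OR circuit produces the clean dichotomy and that the uncomputation disentangles all scratch registers under an arbitrary scheduler; but this is the heart of the proof, not a detail. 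The resolution is to notice that the pairwise difference bits may simply be \emph{measured} immediately: measuring all $d_j=s_j\oplus s_{j+1}$ collapses the $R_0$ state to $|s\rangle+|\bar{s}\rangle$ with $s$ determined up to complement, which is exactly the needed case split (cat state iff all $d_j=0$), makes the ring broadcast genuinely classical, and removes the OR/uncomputation machinery altogether. Once you do that, the remaining issue is how to realize the neighbor comparison with only $O(1)$ qubits of communication per party, and the paper's Bell-pair trick is precisely the device that does so; without it (or the immediate-measurement simplification), your proof does not establish the stated $O(n)$ quantum communication. Also, your extra repetition of the coherent check after applying $U_n$ is unnecessary: the post-$U_n$ state has zero support on $sp(|0^n\rangle,|1^n\rangle)$, so a direct computational-basis measurement already breaks the symmetry.
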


\begin{proof}
The SB protocol will be described in Algorithm \ref{SBAlgo}, for every philosopher $P_j$.
\footnote{Lines $1-3$ in Algorithm \ref{SBAlgo} are there only for the sake of applying this algorithm when proving 
Theorem \ref{thm:LEUnknown} in the case only a bound on $n$ is known. } 

\begin{algorithm}
\caption{SB algorithm (input: $n$, output: bit $g_j$)}
\label{SBAlgo}
\begin{algorithmic}[1]
\IF [sanity check -- only one party] {$n=1$}
	\RETURN $1$
\ENDIF 
\STATE \emph{Phase one:}
\STATE create $|\psi_{j}\rangle=\frac{1}{\sqrt{2}}\left(|00\rangle+|11\rangle\right)$. 
\STATE send the right qubit to the right neighbor, $P_{j+1}$ with wake-up message 
\COMMENT {always the $\pm$ is modulu $n$}
\STATE sleep
\STATE receive qubit from the left neighbor $P_{j-1}$
\STATE check whether the states of the two qubits you hold are equal or not, 
in the computational basis 
\STATE save the result in $x_{j}$ \COMMENT {$0$ for equal, and $1$ for different}
\STATE \emph{Phase two:}
\STATE  send the bit $x_{j}$ to the left neighbor with wake-up message
$P_{j-1}$
\STATE initialize a local counter $c_{j}=1$ (this will be
used to count the messages).
\REPEAT
\STATE sleep
\STATE receive some bit $x$ from the right neighbor $P_{j+1}$
\STATE increase the counter $c_{j}=c_{j}+1$
\STATE send $x$ to the right neighbor $P_{j-1}$ with wake-up message
\IF {$x=1$}
\STATE exit loop (he knows someone broke the symmetry, hence the philosopher moves to Phase four)
\ENDIF
\UNTIL {$c_j=n$}
\IF {$c_j=n$}
\STATE \emph{Phase three:}
\STATE  apply on the $2$ qubits the unitary
that maps $|00\rangle\longmapsto|00\rangle$ and $|11\rangle\longmapsto|10\rangle$.
\STATE apply the magic unitary $U_{n}$ from \cite{Tani} on the first qubit. 
\ENDIF
\STATE \emph{Phase four:}
\STATE  measure the qubit 
in the standard basis
\STATE save the result in bit $g_{j}$ which will be the group bit
\RETURN $g_j$
\end{algorithmic}
\end{algorithm}

In Phase one of the protocol, the philosophers create their
initial states, and send half of them clockwise.
Each philosopher then holds two qubits (his original 
and another qubit from his right neighbor)
and he can check whether the two qubits are equal or not.
Then, in Phase two, the philosophers send
bit message indicating if someone broke the symmetry.
Note that they now send it counter-clockwise, just to differ it from the first phase.
If a philosopher enters the if condition on line $17$, then he (and all philosophers)
will skip phase three. And he need not send any more x-messages (because
any philosopher that will get that $x=1$ message will also skip Phase three).
If a philosopher enters Phase three (either all enter it, or none),
that is, all the $x$ messages he got were
$x=0$, he knows that all the philosophers got $0$ ($\forall j,x_{j}=0$),
i.e. the symmetry was not broken, meaning we are in the scenario in
which the state of the entire system is $|0^{2n}\rangle+|1^{2n}\rangle$.
Line $23$ transform the system to be 
in $\left(|0^{n}\rangle+|1^{n}\rangle\right)\otimes|0^{n}\rangle$), 
The philosophers will then use \cite{Tani}'s magic unitary to break the symmetry.
Since $n$ is known to all philosophers, by claim \ref{thm:tani} in the Appendix, the unitary $U_{n}$ will
indeed transform it to a different state with zero support on $sp\left(|0^{n}\rangle,|1^{n}\rangle\right)$.
Therefore, the symmetry will be broken. When entering Phase four, each of the philosophers only need to measure 
his own qubit, to indicate his bit-group.

Let us check the complexity. Each philosopher creates only two qubits
(in phase one), and sends only one qubit (in phase one). Hence the
quantum memory needed is indeed $O\left(1\right)$ per philosopher.
Each philosopher creates in phase two $O\left(1\right)$ classical messages, each consists of
$O\left(1\right)$ classical bits, and passes $O\left(n\right)$ other such messages, 
meaning $O\left(n\right)$ bits of communication per
philosopher, hence total $O\left(n^2\right)$ bits of communication, 
plus $O(n)$ qubits of communication.
Each philosopher also needs $O\left(\log n\right)$ register to count messages
 \footnote{It is also possible to replace it with a public $O\left(\log n\right)$
register accessible to all, when each philosopher getting $0$ increases
the count, and if he is getting $1$ he sets it to be $-1$ as a flag}
(in phase two), hence $O\left(\log n\right)$
classical memory.
The time complexity of phase one is $O\left(n\right)$ (each philosopher creates two qubits, sends one qubit, receives one qubit and applies a constant circuit.
The time complexity of phase two is $O\left(n^2\right)$  (each philosopher sends his own bit, and passes the other bits while counting). 
The time complexities of phase three and four are $O\left(n\right)$. Hence a total of $O\left(n^2\right)$. 

This proves the theorem.
\end{proof}

\begin{rem} \label{Rem:DPSync}
In the synchronous case, phases one, three and four in the protocol, will consist of $O\left(1\right)$ rounds.
Phase two (the counting) might consist of $O\left(n\right)$ rounds.
Altogether we get that in the synchronous case the time complexity is $O\left(n\right)$ rounds if $n$ is known.
\end{rem}

What changes if the philosophers know only an upper bound $N$ on $n$
(the number of philosophers)? 
Then for $n<N$, the philosophers might
not break the symmetry even after applying $U_{N}$. We can still prove 
a similar theorem, except with a slight modification to the protocol:

\begin{thm}
There exists a deterministic quantum protocol, which is a lockout free,
truly distributed, anonymous solution to the DP problem, when a bound
$n \leq N$ is known. Which uses $O\left(1\right)$ quantum memory, 
$O\left(n\right)$ total quantum bits communication complexity
 and $O\left(\log N\right)$
classical memory, $O\left(N^2\cdot n\right)$ time complexity and $O(N^2\cdot n)$ classical bits communication complexity.
\end{thm}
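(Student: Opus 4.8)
The plan is to reuse Algorithm \ref{SBAlgo} essentially unchanged, but to run it repeatedly with a guessed ring-size parameter that grows, so that eventually the guess matches $n$ and the magic unitary succeeds. Concretely, since the philosophers only know $N\geq n$, I would have them attempt the symmetry-breaking protocol of Algorithm \ref{SBAlgo} with the parameter $k$ set successively to $N, N-1, N-2, \dots$; for each trial value of $k$ the counting in Phase two now runs up to $k$ rather than $n$ (this is exactly what lines $1$--$3$ of Algorithm \ref{SBAlgo} and the footnote are there to support -- when $k < n$ the count never reaches $k$ among the $n$ real parties, or rather, I would set the loop to count actual received messages and compare against the guessed value, aborting the round cleanly when the parameters are inconsistent). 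The key point, guaranteed by Claim \ref{thm:tani} (the ``magic unitary'' $U_n$ property), is that $U_k$ on the cat state $\ket{0^n}+\ket{1^n}$ breaks the symmetry \emph{precisely} when $k=n$; for $k>n$ the round may fail to break symmetry, in which case the philosophers detect this (again via the collective equality check -- after applying $U_k$ and measuring, they rerun Phase one/two to check whether the measured string is constant), decrement $k$, and try again. After at most $N-n+1 \le N$ trials one reaches $k=n$ and symmetry is broken for good.

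First I would describe precisely the modified loop: an outer loop over $k = N$ down to $2$, each iteration of which invokes (a guessed-$n$ version of) Algorithm \ref{SBAlgo}'s Phases one--four, followed by a verification sub-phase that is itself just Phases one--two run on the current $R_0$ qubits to test whether they are now in a non-constant superposition. If the verification says ``broken'', every philosopher exits with its measured group bit and we are done; if it says ``not broken'' (equivalently, the collective check returns ``all equal''), everyone decrements $k$ and repeats. I would note that all philosophers stay synchronized on the value of $k$ because the verification sub-phase is a global check whose single-bit outcome is broadcast around the ring, exactly as in Phase two. Then I would feed this SB protocol into Lemma \ref{lem:Symmetry_breaking} (or rather Lemma \ref{lem:exist} together with the courteous condition) to obtain the claimed lockout-free DP protocol, with the stated $O(1)$ classical and $O(n)$ classical communication overhead on top of the SB cost.

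For the complexity bookkeeping: one invocation of the SB round with parameter $k$ costs, as in the known-$n$ analysis, $O(1)$ qubits of quantum communication per philosopher and $O(k) = O(N)$ classical bits per philosopher for the counting/broadcast phases, i.e. $O(Nn)$ classical bits and $O(n)$ quantum bits total, with $O(Nn)$ time (actually $O(kn)\le O(Nn)$, since the counting phase dominates with $n$ parties each relaying $O(k)$ messages -- wait, I should be careful: each of the $n$ parties relays $O(k)$ messages, giving $O(kn)$ total messages and hence $O(kn)$ time, and $O(k)$ per party for memory purposes). The magic unitary and the two-qubit gate are $O(1)$. Summing over the at most $N$ values of $k$ gives total time $\sum_{k\le N} O(kn) = O(N^2 n)$, total classical communication $O(N^2 n)$, total quantum communication $\sum_{k\le N} O(n) = O(Nn)$ -- but the theorem claims only $O(n)$ quantum communication, so here I must be more careful: I should observe that the cat state is created \emph{once} (Phase one happens once), and only the verification checks repeat; the verification check is purely classical (equality tests on computational-basis values plus bit broadcasts) and the only genuinely quantum operations in each subsequent round are the $O(1)$ local unitaries $U_k$ and the one two-qubit map, which involve \emph{no} qubit communication -- the qubits stay put with their holders after Phase one. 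Hence quantum communication remains $O(n)$ total. Memory stays $O(1)$ quantum and $O(\log N)$ classical (the counter ranges up to $N$, and $k$ itself fits in $O(\log N)$ bits).

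The main obstacle I anticipate is the verification sub-phase and keeping the qubits correctly paired across rounds: after a failed round each philosopher has applied a local unitary to \emph{its own} $R_0$ qubit, but the equality check in Phase one/two of Algorithm \ref{SBAlgo} was designed around each philosopher temporarily \emph{holding its right neighbor's qubit}; so for the verification I would instead send a \emph{copy-controlled} comparison, or more simply, re-run the send-right / compare / send-back-and-uncompute pattern so that at the end of verification every qubit is restored to its owner and the only net effect is learning the single bit ``was the string constant''. Arguing that this coherent check-and-uncompute leaves the global state either in a symmetry-broken superposition or exactly back in $\ket{0^n}+\ket{1^n}$ (so that the next $U_{k-1}$ acts on a clean cat state) is the delicate part, and it is precisely the argument that already underlies \cite{Tani}'s single phase; I would invoke it by reduction to their analysis rather than redo it. Everything else -- the decrement loop, the synchronization via broadcast, the complexity sums -- is routine.
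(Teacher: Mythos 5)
Your overall strategy---decrement the guessed bound from $N$ downwards, rerun symmetry breaking, detect a failed round by a ring-wide count, and feed the resulting SB protocol into Lemma \ref{lem:Symmetry_breaking}---is exactly the paper's (the paper adds a purely classical fifth phase, Algorithm \ref{DP5}, after the measurement, and upon detecting ``all group bits equal'' simply calls SB$(N-1)$ from scratch, i.e.\ re-creates the Bell pairs and resends qubits). The genuine gap is in the one place you deviate in order to keep the quantum communication at $O(n)$: reusing the post-$U_k$ state instead of re-preparing it. If the round-$k$ check is classical (measure, then compare bits), then on failure the global state has collapsed to $\ket{g}^{\otimes n}$ and there is no cat state left for $U_{k-1}$ to act on; Phase one must be rerun, costing $\Theta(n)$ qubits of communication per retry. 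If instead the check is coherent (send--compare--uncompute), it is not communication-free either, and, more importantly, it does not leave the state ``exactly back in $\ket{0^n}+\ket{1^n}$'': projecting $U_k^{\otimes n}\left(\ket{0^n}+\ket{1^n}\right)$ with $k\neq n$ onto $\mathrm{span}\left\{\ket{0^n},\ket{1^n}\right\}$ generally yields an unbalanced state $\alpha\ket{0^n}+\beta\ket{1^n}$ with $\alpha\neq\beta$ and extra relative phases, and for such a state Claim \ref{thm:tani} no longer applies: e.g.\ for even $n$ one has $\bra{0^n}U_n^{\otimes n}\left(\alpha\ket{0^n}+\beta\ket{1^n}\right)=2^{-n/2}(\alpha-\beta)\neq 0$, so even the final round with the correct guess $k=n$ is no longer exact. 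You cannot ``invoke Tani's analysis'' here, because Tani's protocol prepares fresh $\ket{+}$ states at the start of every phase precisely so that the projection onto the constant subspace is the balanced cat state; it never reuses the post-unitary residual state.

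So as written your scheme is either non-exact or fails the quantum-communication bound you designed it to meet, and the fix is the paper's: rerun the whole SB (Phases one through four) with the decremented bound, detect failure classically after measuring by counting up to the current bound, and pay $O(n)$ qubits of communication per retry. (Your instinct that the quantum communication needs an argument is reasonable---the paper's own proof resends one qubit per philosopher in every retry while asserting $O(n)$ total, and its Theorem \ref{thm:LEUnknown} budgets $O(N^2)$ for exactly this---but reusing the collapsed or projected state is not a valid way to obtain it.) The rest of your bookkeeping (time and classical communication $\sum_{k\le N} O(kn)=O(N^2\cdot n)$, $O(\log N)$ classical and $O(1)$ quantum memory per philosopher, detection that the guess is too large via the counting phase, and the appeal to Lemma \ref{lem:Symmetry_breaking} for lockout freedom) matches the paper.
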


\begin{proof}
The philosophers will have to add phase five (the first four phases remains, except there will be no return in phase four (line $31$), 
and in line $2$ we will have: return $(1,1)$), 
but the input will be $N$ instead of $n$, and the output will be a pair of the group bit, and a new bound $(g_j,N^\prime)$. 
This phase is essentially just counting how many philosophers got the same bit.
The protocol is described for every hungry philosopher $P_{j}$, in Algorithm \ref{DP5}.
\footnote{The Algorithm \ref{DP5} returns (line $16$) a pair and not just a bit, only for the sake of applying this algorithm when proving 
Theorem \ref{thm:LEUnknown} in the case only a bound on $n$ is known. Where we will use the returned value for new bound of number of eligible parties} 

\begin{algorithm}
\caption{phase five for upper bound scenario}
\label{DP5}
\begin{algorithmic}[1]
\STATE send the bit $g_{j}$ to the right neighbor $P_{j-1}$ with wake-up message
\STATE initialize a counter $v_j:=1$ 
\REPEAT
\STATE sleep
\STATE receive a bit $g$ from his left neighbor $P_{j+1}$
\STATE send $g$ to the right neighbor with wake-up message
\IF {$g=g_j$}
\STATE the philosopher increases his counter $v_j:=v_j+1$.
\ELSE
\STATE exit the loop (the philosopher knows the the symmetry was broken)
\ENDIF
\UNTIL{$v_j = N$}
\IF [the philosopher knows that the bound $N>n$] {$v_j = N$}
	\RETURN $SB(N-1)$ \COMMENT{this is the same as: GOTO Phase one with a new bound $N=N-1$}
\ELSE
	\RETURN $(g_j,N)$
\ENDIF
\end{algorithmic}
\end{algorithm}

Note that if some philosopher got $v_j=N$, then all philosopher will get the same result soon, and they all will repeat the protocol with the new bound.

Ultimately they will break the symmetry, in the worst case, only when
reaching $N=n$. Hence it adds a multiplicity of $N$ to the time
complexity, while the philosophers will have to count until $N$ instead of $n$ (line $22$),
reaching time complexity of  $O(N^2\cdot n)$ 
and classical communication complexity of $O(N^2\cdot n)$.
and $O\left(\log N\right)$ local classical memory.
The quantum complexity remains the same, because each philosopher still only sends
one qubit to his neighbor.

\end{proof}

Combining those two theorems, we get Theorem \ref{thm:efficient solution}.
Note that in the synchronous case the time complexity is $O\left(N\right)$ rounds, 
if only a bound $n\leq N$ is known.

\section{Leader election via dining 
philosophers\label{sec:Leader-election-via}}
We proceed to prove Theorem \ref{cor:better LE} 
which provides an improved exact LE algorithm. 
We do this in two stages. First, we prove the first part of the theorem, 
which considers the case in which $n$ is known. 
In this case, it is possible to prove a very general lemma - 
\ref{lem:DPtoLE} which tells us that any exact DP algorithm can be used 
to derive an exact LE algorithm. 
This lemma can then be applied using our DP protocol 
(Algorithm \ref{SB2DP} with Algorithm \ref{SBAlgo}) to derive the first part of the theorem;  
this is done in the first and main subsection \ref{sec:DP2LEKnown}. 
In Subsection \ref{sec:DP2LEUnknown} we show how our specific DP protocol 
(Algorithm \ref{SB2DP} with Algorithm \ref{SBAlgo}) 
can also be used to derive an exact LE protocol when only a bound on $n$ 
is known; in this case, some subtle issue prevents us from deriving a more 
general reduction from DP to LE. 

\subsection{from DP to LE when $n$ is known}  \label{sec:DP2LEKnown}
Here we assume that $n$ is known in advance to all processors. 
We are given a solution 
to the DP problem and we will slightly modify it. 
The idea is to use $log(n)$ phases, where in each phase we apply the 
DP protocol, s.t. in the 
beginning of each phase - the
hungry philosophers will be those who are still eligible to be elected,
and those who get to eat in the end of the phase, will move to the next phase. 
We will start with all philosophers hungry, and then their number will be 
cut by at least half every phase. This is because  
every two neighbors
share a chopstick, so they cannot eat simultaneously. 
In the end of each phase, each philosopher sends a bit indicating whether 
he has eaten or no. After counting $n$ messages, all the philosophers
know the number of eligible parties for the next phase.
Note that all philosopher will have the exact same number of eligible philosophers, 
because $n$ is known in advance, so each philosopher will receive each other philosopher 
bit exactly once.

A complication arises since already in the second phase, 
we cannot simply run the same DP protocol again as is, because
it is possible that the philosophers who will eat will be the same set of philosophers that ate in the previous phase; for example, it could be that 
the philosophers who got to eat where exactly those who sit in the even 
places, and they can all eat in the next phase as well - meaning, in the 
next phase, there will not be a decrease in the number of eligible parties.

Hence, we must treat the eligible parties, as they (and only they) are sitting around the table. Since they cannot communicate directly with each other as they 
are not neighbors (and they 
have too many chopsticks around them), they will use the eliminated
parties to pass messages and lift chopsticks, in the following way:
The only valid chopsticks for the round, are those who are the right
of an eligible philosopher. Hence the number of chopsticks equals
the number of eligible philosophers, as expected. If an eligible philosopher
wants to lift his right chopstick, he will try to do it himself. But
if philosopher $P_j$ wants to lift the left chopstick and has a left
eliminated neighbor $P_{j+1}$ (meaning that this chopstick is not valid),
$P_j$ will ask his eliminated left neighbor $P_{j+1}$ to lift 
{\it his} left chopstick. If that chopstick is not valid, then $P_{j+1}$ 
will ask his left neighbor $P_{j+2}$ to lift his left chopstick, and so on, 
until they reach a philosopher $P_k$ whose left neighbor $P_{k+1}$ is not eliminated, and so the chopstick between $P_k$ and $P_{k+1}$ is valid. So the
only philosophers who are allowed to lift a left chopstick are those
that have a left eligible neighbor. After an eliminated philosopher with 
a left eligible neighbor 
tries to lift a left chopstick, he then 
sends a responding message to the right, 
announcing if the lift was successful or not; 
That message propagates until it reaches the original eligible 
philosopher who wanted to lift a left chopstick but his was a non valid 
chopstick. This simulates the situation in which the appropriate philosopher
tried to lift that left chopstick himself (and either 
succeeded or failed). Note that every philosopher can
lift a left chopstick for only one (specific) hungry philosopher.
In this way we enforce only one chopstick between every pair of ``adjacent''
eligible philosophers. See Figure \ref{DPPrimeFig} for illustration.
All the messages will be sent in the clock direction (to left neighbors).
Whenever a philosopher gets to hold two chopsticks, he sends termination message
announcing that the current phase can be terminated. Whoever get this message,
pass it on and can exit the routine (he knows that someone can eat). 
\begin{figure} \label{DPPrimeFig}

\caption{$DP^\prime$}

\includegraphics[scale=0.4]{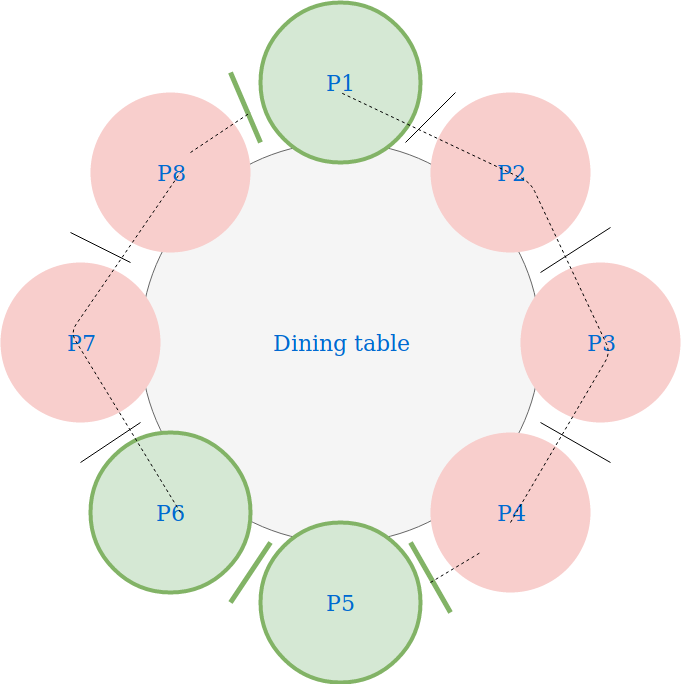}
The green philosophers ($P_1,P_5,P_6$) are eligible. 
The dotted lines indicate how certain philosopher use other philosophers to 
pick up their chopsticks for them
(e.g. $P_1$ uses $P_4$ to lift his left chopstick, i.e. $P_1$ is 
the master of $P_2,P_3,P_4$).
The highlighted chopsticks are the valid ones.
\end{figure}

We start by describing how to modify any DP algorithm into a 
new DP algorithm, denoted $DP^\prime$, which applies the above simulation.  
To do this, we use the following three algorithms. 
The algorithms use few additional important 
registers: a bit $l_{j}$ indicating if the philosopher is hungry or not
(initialized outside the $DP^\prime$ protocol -- line $1$ 
in Algorithm \ref{DP2LEAlgo}. At first, all philosophers 
are initialized to hungry (=eligible), 
but later these are updated).   
Another additional register contains the value $h_j$ which will be used 
to count the total number of hungry philosophers remaining.

The first algorithm (Algorithm \ref{DPPrimeAlgoPrime}) 
is the main algorithm in $DP'$.

\begin{algorithm}
\caption{$DP^\prime$ for philosopher $P_j$ (input: $l_j$, output: $(l_j,h_j)$)}
\label{DPPrimeAlgoPrime}
\begin{algorithmic}[1]
\IF {$l_j=0$}
	\STATE run Algorithm $DP^\prime-eliminate(0)$ \COMMENT {Algorithm \ref{DPPrimeAlgoEliminate}}
\ELSE
	\STATE run Algorithm $DP^\prime-eligible$, set $l_j$ to its output
\ENDIF
\STATE initialize a counter $h_j=0$ \COMMENT {counts eligible philosophers}
\STATE initialize a counter $f_j=0$ \COMMENT {counts eliminated philosophers} 
\IF {$l_j=1$}
	\STATE increase the counter $h_j=1$ \COMMENT {$P_j$ is still eligible}
\ELSE
	\STATE increase the counter $f_j=1$ \COMMENT {$P_j$ is eliminated}
\ENDIF
\STATE send to the left neighbor $P_{j+1}$ the bit $l_j$ with wake-up and termination message
\REPEAT
	\IF {received bit $l$ from the right neighbor $P_{j-1}$}
		\IF {$l=0$}
			\STATE increase the counter $f_j=f_j+1$ \COMMENT {eliminated philosopher}
		\ELSE
			\STATE increase the counter $h_j=h_j+1$ \COMMENT {eligible philosopher}
		\ENDIF
	\ELSE
		\STATE sleep
	\ENDIF
\UNTIL {$h_j+f_j=n$} \COMMENT {until $P_j$ saw all other philosophers bits}
\RETURN $(l_j,h_j)$
\end{algorithmic}
\end{algorithm}

The first lines (lines $1-5$) in this algorithm 
call either $DP'-eliminate$ or 
$DP'-eligible$ depending on whether $P_j$ is hungry or not. 
Once this call has ended, the run of $DP'$ was essentially done. 
The remainder of the algorithm simply counts the number of 
eligible philosophers at the end of the protocol. 

Algorithm \ref{DPPrimeAlgoEligible} is the algorithm which an eligible 
philosopher should run. It is described as 
a modification of the given DP protocol. 
There are two modifications: if the philosopher's left neighbor is eliminated, 
then he asks the neighbor to lift his chopstick, instead of lifting it himself, and then wait for reply,
and act according to the reply. The second modification is that when an eligible philosopher
can eat (is in possession of two chopsticks), he sends termination message to his left neighbor
and exits the routine with output $1$ (stays eligible for next phase).
An eligible philosopher that gets such termination message (and is not in the eating stage),
will pass the message and eliminate himself (exit the routine with output $0$), without trying to eat.

\begin{algorithm}
\caption{$DP^\prime$-eligible (output: $l_j$)}
\label{DPPrimeAlgoEligible}
\begin{algorithmic}
\STATE run the DP protocol with the following adjustment:
\STATE
\STATE whenever a philosopher is supposed to lift the left chopstick, replace it by the following procedure:
\end{algorithmic}
\begin{algorithmic}[1]
\IF {left neighbor $P_{j+1}$ is eligible}
	\STATE try to lift the left chopstick
\ELSE
	\STATE send the left neighbor $P_{j+1}$ a lift-left message
	\LOOP [will exit after receiving message]
		\IF {received success/failure from the left neighbor $P_{j+1}$}
			\STATE act as if you succeed or failed to lift that chopstick 
			\COMMENT {for the DP run}
			\STATE exit the loop
		\ELSIF {received terminate message from the right neighbor $P_{j-1}$}
			\STATE send terminate message with wake-up to the left neighbor $P_{j+1}$.
			\RETURN $0$  \COMMENT {eliminated}
		\ELSE [wait for respond]
			\STATE sleep
		\ENDIF
	\ENDLOOP
\ENDIF
\end{algorithmic}

\begin{algorithmic}
\STATE if at some point in the DP protocol you got to eat, then:
\end{algorithmic}

\begin{algorithmic}[1]
\STATE send terminate message with wake-up to the left neighbor $P_{j+1}$.
\RETURN $1$ \COMMENT {do not put down the chopsticks}
\end{algorithmic}

\begin{algorithmic}
\STATE if at the end of the DP protocol you failed to eat:
\end{algorithmic}

\begin{algorithmic}[1]
\RETURN $0$  \COMMENT {eliminated}
\end{algorithmic}
\end{algorithm}

The algorithm $DP'$-eliminate (Algorithm \ref{DPPrimeAlgoEliminate}) 
is run by the eliminated 
philosopher and makes sure this philosopher
simply passes messages back and forth, and 
passed on the requests to lift left chopsticks (or lift himself), as well 
as updates on their success/failure, as described in the above 
simulation description. 

\begin{algorithm}
\caption{$DP^\prime$-eliminate}
\label{DPPrimeAlgoEliminate}
\begin{algorithmic}[1]
\LOOP [will exit only after some philosopher has eaten]
	\IF {received lift-left message from right neighbor $P_{j-1}$}
		\IF {left neighbor is eligible}
			\STATE try to lift left chopstick
			\IF {lift was successful}
				\STATE send back to the right neighbor $P_{j-1}$ success message with wake-up message
			\ELSE [lift failed]
				\STATE send back to the right neighbor $P_{j-1}$ failure message with wake-up message
			\ENDIF
		\ELSE [not allowed to lift the left chopstick himself -- invalid chopstick]
			\STATE send lift-left message to left neighbor $P_{j+1}$ with wake-up message
		\ENDIF
	\ELSIF {received success/failure message from left neighbor $P_{j+1}$}
		\STATE send the success/failure message to right neighbor	$P_{j-1}$
	\ELSIF {received terminate message from the right neighbor $P_{j-1}$}
		\STATE pass the message to the left neighbor $P_{j+1}$
		\RETURN
	\ELSE [waiting for messages]
		\STATE sleep
	\ENDIF
\ENDLOOP
\end{algorithmic}
\end{algorithm}

Note the following simple observation of the above simulation:
\begin{itemize}
\item \label{Obs1} Each eliminated philosopher $P_j$ receives lift-left messages from exactly one eligible philosopher, which is the philosopher $p_{j-k}$ s.t. $p_{j-k}$ is eligible for minimal $k\geq1$.
We shall call that eligible philosopher $P_{j-k}$ the master of $P_{j+t}$ for every $0\leq t<k$, 
and vise versa we will call each such $P_{j+t}$ the slave of $P_{j-k}$.
\end{itemize}

Therefore, when an eliminated philosopher gets a terminate message from his right neighbor, 
he knows that his master got it first (or he produced it), hence he can safely terminate.

Note that a philosopher in $DP^\prime$ who got 
terminating message from a neighbor, will not try to eat himself, even though he might have been able to eat,
if he would have continued as in our original DP.
Also note that we made each eligible philosopher to use eliminated neighbors
in order to lift left chopsticks, while lifting himself his right chopstick.
Of course the asymmetry is arbitrary (we can switch the roles of left and right), 
but essential (cannot make it symmetric), because we need each eligible
philosopher to use exactly one eliminated philosopher, and vise versa that an eliminated
philosopher will only be used by one eligible philosopher.

We now prove that indeed, $DP'$ simulates $DP$ correctly, and 
its outputs are correct: 

\begin{claim} \label{cl:DP2LE}
Running algorithm DP' on an initial set of $n$ philosophers, with more 
than one eligible party, namely at least two philosophers having $l_j=1$, results in at least one eligible party
(namely, at least one philosopher with $l_j=1$), and at least half of
the eligible parties got eliminated. Moreover, the output $h_j$ of all philosophers at the end 
will be the same, and will be equal to the correct number of eligible parties at the end of the protocol. 
\end{claim}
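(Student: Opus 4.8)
The plan is to verify the claim in three stages, corresponding to the three assertions: (i) the simulation $DP'$ faithfully reproduces a legal execution of $DP$ restricted to the eligible parties; (ii) as a consequence, at least one eligible party survives and at least half get eliminated; (iii) the final counters $h_j$ all agree and equal the true number of survivors. First I would set up the simulation correspondence precisely. Using the observation about masters and slaves, each eligible philosopher $P_{j-k}$ together with its chain of slaves $P_{j-k+1},\dots,P_j$ (up to but not including the next eligible philosopher) forms a ``virtual philosopher'' whose left chopstick is the valid chopstick between $P_j$ and its eligible left neighbor. I would argue that the message-passing in $DP'$-eliminate simply relays lift-left requests down the slave chain and success/failure replies back up, so from the point of view of the eligible philosopher the effect is exactly as if he had tried to lift that single valid chopstick himself; meanwhile he lifts his right chopstick directly. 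Hence the collection of eligible philosophers, with exactly one chopstick between each pair of consecutive ones, executes a legal run of the original $DP$ protocol on a ring of size equal to the number of eligible parties.

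Next, I would invoke correctness of the given $DP$ protocol on that virtual ring: since there is more than one eligible party, the virtual ring has size $\geq 2$, and the $DP$ protocol is deadlock-free (indeed lockout-free), so some virtual philosopher eventually gets to hold both his chopsticks, i.e. some eligible $P_j$ reaches the ``got to eat'' branch and returns $1$. That philosopher broadcasts a terminate message around the ring; every other philosopher (eligible or eliminated) who receives it passes it on and exits, eligible ones returning $0$. So at least one survivor exists. For the ``at least half eliminated'' part, I would note the key structural fact already highlighted in the text: two neighboring philosophers on the original ring share a chopstick and cannot both hold two chopsticks simultaneously; more relevantly, in $DP'$ exactly one philosopher reaches the eating stage before everyone else terminates — so in fact only one survives per phase in the worst case, but certainly no two consecutive eligible parties both survive, which already gives the factor-two reduction. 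I should be careful here: the claim only asserts $\leq$ half survive, and the cleanest argument is that only the unique philosopher who triggered termination returns $1$ (others get the terminate message before they themselves eat and return $0$), so actually exactly one survives; in any case $\geq 1$ and $\leq \lfloor |E|/2\rfloor$ where $|E|$ is the number of eligible parties, with $|E|\geq 2$.

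For the third assertion, I would analyze the counting loop at the end of Algorithm \ref{DPPrimeAlgoPrime}. Each philosopher sets his own contribution ($h_j=1$ or $f_j=1$), sends his bit $l_j$ leftward, and then relays every bit he receives from the right while incrementing $h_j$ or $f_j$, stopping once $h_j+f_j=n$. Since $n$ is known and the ring has exactly $n$ philosophers, each philosopher's bit travels all the way around and is seen exactly once by every other philosopher; thus every philosopher accumulates the same multiset of $n$ bits, so all $h_j$ are equal, and their common value is the number of philosophers with $l_j=1$ at the end of $DP'$, i.e. the correct number of survivors. I would also need to check termination of this loop: a philosopher sleeps until a bit arrives, and since every philosopher eventually sends its bit and wake-up messages propagate, no philosopher deadlocks in the counting phase.

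The main obstacle, I expect, is step (i): making rigorous the claim that the relayed lift-left/success/failure protocol is a \emph{faithful} simulation, in particular that the asynchronous scheduler on the original ring cannot cause a behavior that does not correspond to any legal asynchronous execution on the virtual ring, and conversely that every adversarial schedule on the virtual ring is realizable. One must check that a lift-left request and its reply form an atomic-enough unit (the eligible philosopher sleeps waiting for the reply, so he takes no other action meanwhile, matching the atomicity of lifting a chopstick), that the uniqueness of the master-slave assignment guarantees no chopstick is contended by two virtual philosophers, and that terminate messages do not race lift-left replies in a way that corrupts a survivor's state — which is why the text has the eligible philosopher, upon receiving terminate while waiting, immediately forward it and return $0$. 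I would handle this by a careful case analysis of which messages an eliminated philosopher can receive and in what order, leaning on the observation that all messages travel in one direction (leftward), which linearizes the relevant events along each slave chain.
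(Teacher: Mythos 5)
Your proposal is correct and follows essentially the same route as the paper: a one-to-one mapping of $DP'$ configurations onto a legal execution of the given DP protocol on the ``virtual ring'' of eligible parties, deadlock-freeness of DP to get at least one survivor, a chopstick-counting/adjacency argument (consecutive eligible parties share the unique valid chopstick between them, and survivors never release chopsticks during the phase) to bound the survivors by half, and the known-$n$ counting loop in Algorithm \ref{DPPrimeAlgoPrime} to show all $h_j$ agree with the true number of survivors; your extra attention to the fidelity of the master--slave relay is a welcome elaboration of what the paper only sketches. One aside in your argument is wrong, though it does not affect the stated claim: it is \emph{not} true that ``only the unique philosopher who triggered termination returns $1$'' --- in the asynchronous model several non-adjacent eligible parties can each acquire both of their valid chopsticks before any terminate message reaches them, and all of them return $1$; this is exactly why Algorithm \ref{DP2LEAlgo} needs $\Theta(\log n)$ iterations rather than one. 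Your fallback bound ($\geq 1$ survivor and at most $\left\lfloor L/2\right\rfloor$ survivors, hence at least $\left\lceil L/2\right\rceil$ eliminations) is the paper's own argument and is what should carry the proof.
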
 

\begin{proof}
First we will show that if all philosophers got the same input $l$ for $DP^\prime$ (Algorithm \ref{DPPrimeAlgoPrime}), then they will all return the same value $h_j$ in their output 
(the second of the pair in line $25$).
In line $6$ the eligible bit of all philosophers is set, and will not change within the rest of 
the algorithm (lines $6-25$). Let us denote the number of bits $l_j$ such that $l_j=1$ by $L$.
We claim that every philosopher will have $h_j=L$ in line $25$, meaning they all will
return the same value for the number of eligible parties as claimed.
We argue this as follows. 
Since $n$ is known, and $f_j+h_j=1$ before the repeat loop (line $14$), then 
the repeat loop will repeat itself exactly $n-1$ times, hence $P_j$ will receive exactly all the $l$ 
bits of all other philosophers. Each time he receives bit $l=1$ he increases the counter $h_j$ by one.
If $P_j$ himself was eliminated, then $l_j=0$ and he will receive $L$ bits with value $l=1$,
and $n-L-1$ bits of value $0$, hence ending the loop with $h_j=L, f_j=n-L$ (because he initialized
$f_j=1$ in line $11$).
Similarly If $P_j$ himself is eligible, then $l_j=1$ and he will receive $L-1$ bits with value $l=1$,
and $n-L$ bits of value $0$, hence ending the loop with $h_j=L, f_j=n-L$ (because he initialized
$f_j=1$ in line $11$).
We conclude that if all the philosophers got the correct input $l_j$ of number of eligible philosophers in $DP^\prime$,
then they will all output the correct output of eligible parties in the end of $DP^\prime$.

Note that if $l_j=0$ (meaning the philosopher $P_j$ was already eliminated) at line $1$  (hence the philosopher will run 
$DP^\prime-eliminate(0)$), he will have the same value $l_j=0$ throughout the algorithm.
Hence the number of philosophers that have $l_j=1$ (amount of eligible algorithms) is not increasing in a $DP^\prime$ iteration.

Assume that there are $L>1$ eligible philosophers (that have bit $l_j=1$) in the start of the $DP^\prime$ (Algorithm \ref{DPPrimeAlgoPrime}).
We will show that at the end of $DP^\prime$ at least one philosopher will still have $l_j=1$. 
One can easily create a one-to-one mapping from every configuration of the $L$ eligible philosophers in our $DP^\prime$ protocol
to the original DP protocol with $L$ parties, in the natural way, up to the point at which one philosopher has eaten.
Therefore we will get 
(since we know that the DP Algorithm is deadlock free), that here as well at least one philosopher will
manage to eat.

Assume that there are $L$ eligible philosophers in the start of the $DP^\prime$ protocol, then at least
$\left\lceil \frac{L}{2}\right\rceil $ of them will be eliminated in the $DP^\prime$ run.
This is because each philosophers that has two chopsticks, exits the routine, without putting down any chopstick. 
Since we also have $L$ chopsticks,
there could be at most $\left\lfloor \frac{L}{2}\right\rfloor$ philosophers that possess two chopsticks at the same time, 
hence at least $\left\lceil \frac{L}{2}\right\rceil $ won't have any chopstick.

\end{proof} 

%The LE protocol is described in Algorithm \ref{DP2LEAlgo}, 
%for the philosopher $P_{j}$. 
%The LE algorithm starts by making all philosophers hungry (meaning eligible), by setting the eligible bit $l_j$ to $1$, 
%and initializing the counter of eligible parties $h_j$ to $n$.
%Then run $DP^\prime$ in the a loop while there is more than one eligible party.
%Recall that also the eliminated parties keep running the $DP^\prime$  (as message passers and chopstick lifters).
%It then verifies that all philosophers have the same $h_j$ value (eligible parties),
%and in case of different values it takes the minimal value (line $11$). If $n$ is known in advance, 
%then all the philosophers will have the 
%same $h_j$ values, because each eligible philosopher created one 
%$h=1$ message (the rest are $h=0$ messages) and sent it (line $6$), and every philosopher
%passed each $h$ message exactly once (line $13$). But if only a bound is known, then some philosophers 
%will pass some $h$ messages twice (or more), 
%possibly while passing other $h$ messages only once. Hence the bigger value is definitely false.

\begin{algorithm}
\caption{DP to LE algorithm (input:$n$, output:$l_j$)}
\label{DP2LEAlgo}
\begin{algorithmic}[1]
\STATE initialize a bit $l_{j}=1$ (indicating eligible/hungry)
\STATE initialize a counter $h_{j}=n$ (counts the number of eligible parties)
\WHILE {$h_{j}>1$}
	\STATE run the $DP^\prime$ protocol, set $(l_j,h_j)$ to the output 
	\STATE run procedure to put down the chopsticks \COMMENT{in the obvious way, with announcing messages}
\ENDWHILE
\RETURN $l_j$
\end{algorithmic}
\end{algorithm}

We will prove the following:

\begin{lem-} 
{\bfseries \ref{lem:DPtoLE}.} Given a protocol that solves the
exact DP problem when $n$ is known, 
Algorithm \ref{DP2LEAlgo} solves the exact LE problem on
a ring of a known size $n$, using $O\left(\log n\right)$ rounds
of the DP protocol. 
\end{lem-} 

\begin{proof}
This follows from claim \ref{cl:DP2LE}, since at each iteration, at least half of the eligible philosophers get eliminated, 
and we initialized the philosophers with the correct amount $n$ of eligible parties.

\end{proof}

We can now restate Theorem \ref{cor:better LE} for the case where $n$ is known in advance, and prove it:
\begin{thm}
\label{ThLEA} Algorithm \ref{DP2LEAlgo} is a deterministic quantum LE algorithm 
on a ring of a known size $n$ with $O\left(n^2\log n\right)$
time, $O\left(1\right)$ quantum memory and 
$O\left(\log n\right)$ classical memory per philosopher, 
and total communication complexity of $O\left(n^{2}\log n\right)$,
using $O\left(\log n\right)$ rounds of the DP routine. 
\end{thm}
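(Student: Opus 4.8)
The plan is to instantiate Lemma~\ref{lem:DPtoLE} with the efficient DP protocol of Theorem~\ref{thm:efficient solution} and then carefully track the resources consumed. First, for correctness: our efficient DP protocol is a deterministic, lockout-free (hence deadlock-free) solution to the DP problem when $n$ is known, so it is in particular an \emph{exact} DP protocol, and Lemma~\ref{lem:DPtoLE} applies to it verbatim. Thus Algorithm~\ref{DP2LEAlgo}, run with this protocol plugged in as the DP routine, is a deterministic quantum protocol that, on a ring of known size $n$, always terminates with exactly one philosopher holding $l_j=1$. The number of iterations of the \texttt{while} loop is $O(\log n)$: by Claim~\ref{cl:DP2LE}, each call to $DP'$ with more than one eligible party eliminates at least half of the eligible parties, and all philosophers agree on the updated count $h_j$ (because $n$ is known, so in the counting loop of Algorithm~\ref{DPPrimeAlgoPrime} each philosopher receives every other philosopher's bit exactly once); starting from $h_j=n$, we therefore reach $h_j=1$ after $O(\log n)$ iterations.

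Next I would bound the per-iteration cost. An iteration is one run of $DP'$ followed by an $O(n)$-time, $O(n)$-communication ``put down the chopsticks'' procedure, so it suffices to bound $DP'$. By the one-to-one correspondence used in the proof of Claim~\ref{cl:DP2LE}, a run of $DP'$ with $L$ eligible philosophers faithfully simulates our efficient DP protocol on $L$ parties; the only overhead is that a logical message between two consecutive eligible philosophers is realized as a chain of physical messages of length equal to the gap between them along the ring, a delegated chopstick lift travels through such a gap, and the eliminated philosophers forward whatever they are handed (Algorithms~\ref{DPPrimeAlgoEligible} and \ref{DPPrimeAlgoEliminate}). The key bookkeeping observation is that the gaps between consecutive eligible philosophers sum to exactly $n$. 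Hence: (i) the quantum communication of $DP'$ is $O(n)$, since the underlying DP/SB protocol sends only $O(1)$ qubits per eligible philosopher, each forwarded across a single gap, and the gap lengths telescope to $n$; (ii) the classical communication is $O(n^2)$, the dominant contributions being the ``counting'' phase, in which each of the $L\le n$ eligible philosophers' bits is relayed once around the whole ring (i.e.\ $n$ hops), together with the final eligibility tally of Algorithm~\ref{DPPrimeAlgoPrime} in which each of the $n$ philosophers forwards $O(n)$ bits; (iii) the time complexity is likewise $O(n^2)$ atomic actions. The quantum memory stays $O(1)$ and the classical memory $O(\log n)$ (counters bounded by $n$), exactly as in Theorem~\ref{thm:efficient solution}, and nothing accumulates across iterations because each philosopher reinitializes its qubits and counters at the start of every $DP'$ call.

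Summing over the $O(\log n)$ iterations then yields total time $O(n^2\log n)$, total classical communication $O(n^2\log n)$, and total quantum communication $O(n\log n)$, with $O(1)$ quantum and $O(\log n)$ classical memory per philosopher, together with exactly the $O(\log n)$ invocations of the DP routine counted above; combined with the correctness statement this gives Theorem~\ref{cor:better LE} in the case $n$ known.

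The step I expect to be the main obstacle is parts (ii)--(iii) of the per-iteration analysis: making precise that the $DP'$ simulation inflates the resource usage of the underlying $L$-party DP protocol by at most a factor keeping every quantity at $O(n^2)$ per round (and $O(n)$ for qubits). One must argue that although a single delegated chopstick-lift or relayed bit can traverse a gap of size $\Theta(n)$, the simultaneous bound over all gaps and over all $O(L^2)$ logical messages of the DP/SB protocol does not exceed $O(n^2)$ --- which uses the telescoping of the gaps rather than a crude per-message worst case --- and that the auxiliary messages introduced by $DP'$ itself (lift-left, success/failure, terminate, and the final eligibility tally) contribute only terms of equal or lower order.
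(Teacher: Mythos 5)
Your proposal is correct and follows essentially the same route as the paper: invoke Lemma \ref{lem:DPtoLE} (via Claim \ref{cl:DP2LE}) to get correctness and $O(\log n)$ iterations, then bound each $DP'$ iteration at $O(n^2)$ time and classical communication and $O(n)$ quantum communication by the same telescoping-of-gaps argument, and multiply. The only cosmetic difference is that the paper explicitly appeals to Corollary \ref{cor:no fail} (plus a small modification so a philosopher who fails a lift does not retry) to keep the number of lift attempts constant per iteration, a point you fold implicitly into your simulation/telescoping bookkeeping.
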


\begin{proof}
In this case, after lemma \ref{lem:DPtoLE}, we only need to calculate the complexity
, in addition to that of the DP.
Recall that in our protocol, corollary \ref{cor:no fail} shows that
there is a hungry philosopher who will have no failures in chopsticks
lifting, and since we are the scheduler, we can make sure that every
philosopher who fails to lift his left chopstick, will not try again
(or alter the DP protocol to make the philosophers do so). This will
cause the amount of lift tries in the DP protocol to be constant.

Note that in each iteration, all philosophers have the same value $h_j$.
Since at least half of the philosophers get eliminated in each iteration $i$ 
($h_{i+1}\leq\left[\frac{h_{i}}{2}\right]$),
because every eating philosopher holds two chopsticks, and there are only
$h_{i}$ chopsticks at phase $i$, we will need at most $O\left(\log n\right)$
iterations. 

Let us consider the quantum communication in an iteration. The only quantum communication, is done in the SB protocol.
Where each philosopher passes one qubit to his neighbor. This amounts to a total of $O(n)$ quantum communication
complexity. Note that even if the eligible neighbor if far, and the qubit need to pass through the eliminated philosophers,
in total all the qubits of the system will travel a distance of $2\cdot n$,
hence in every iteration the quantum communication complexity remains the same -- $O(n)$.
A similar argument suggests that in each iteration the classical communication complexity will be $O(n^2)$.
Each iteration, every philosopher creates $O\left(1\right)$ messages,
and possibly a message for every left lift try, and passes $O\left(n\right)$
messages plus the left lift try messages. 
Every message consists of $O\left(1\right)$ bits. 
This and the counting will add a time complexity of $O\left(n^2\right)$ per DP run.
Each philosopher holds $O\left(\log n\right)$ local
classical memory, for the counting.

Hence throughout the entire LE protocol we have $O(n\cdot \log n)$ quantum communication complexity,
and a total of $O(n^2 \cdot \log n$) classical communication complexity.

If $n$ is known, then our DP protocol uses $O\left(1\right)$
quantum memory, $O\left(\log n\right)$ classical memory per philosopher
and $O\left(n^2\right)$ running time. Plugging it in
(to each DP iteration), will give us LE algorithm with $O\left(\log n\right)$
iterations of DP, and a total of $O\left(n^2\log n\right)$ running time,
$O\left(1\right)$ quantum memory and $O\left(\log n\right)$ classical
memory per philosopher, and a total classical communication complexity of $O\left(n^{2}\log n\right)$.
%\textendash{} each philosopher creates $O\left(1\right)$ messages
%per round, and passed $O\left(n\right)$ bits, same as in the DP.

\end{proof}

\begin{rem} \label{Rem:DP2LESync}
Recall that our DP protocol in the synchronous model used $O\left(n\right)$ rounds 
when $n$ is known
%or $O\left(N\right)$ when only a bound $n\leq N$ is known 
(see Remark \ref{Rem:DPSync}).
Hence this LE protocol in the synchronous model uses $O\left(n\cdot \log n\right)$ rounds.
\end{rem}

\subsection{From DP to LE when only a bound on $n$ is known} \label{sec:DP2LEUnknown}
We will now prove the corresponding Theorem,
when only a bound $n\leq N$ is known.
What happens when only a bound $n\leq N$ is known in our solution from 
the previous subsection?

It turns out that our transformation above from DP to LE has a subtle problem. 
If we try to run it as above, except for using $N$ instead of $n$, something interesting happens. 
Even though only the bound $N$ is known,
in effect, there are only $n$ philosophers, and every iteration at least half of the previous
eligible parties get eliminated (since effectively there are the same amount of chopsticks as philosophers).
Hence, after $O(\log n)$ iterations, there will be only one eligible party left, with one chopstick.
(One should modify a little the way the number of eligible philosophers are counted, 
but this can be done with some extra work - 
and the philosophers will keep having a correct upper bound on the number of eligible philosophers). 
However, the problem is, that the last eligible philosopher might not know he is the last. 
If he tries to lift two chopsticks, 
he will obviously fail to lift one of them, because he will in fact be trying to lift the same chopstick twice (there is only one valid chopstick when there 
is one eligible party...) 
Hence depending on how the DP algorithm is defined to handle such cases,
the DP protocol might get stuck, and the algorithm might enter deadlock (and no leader will be chosen).
However for some specific DP protocol, like our Algorithm \ref{DP5}, we can 
make any eligible party terminate and eliminate,
right after failing to lift any chopstick, 
and add a phase of check in the end of $DP^\prime$, to verify that at least 
one party is still eligible 
(in our protocol, we can bypass this problem also in the SB algorithm,
where the parties break the symmetry, and if the symmetry does not break,
they know the bound is too high. So the last eligible party will repeat the SB 
until getting a bound $N=1$, and then he will declare himself as the leader).
And if there are no eligible parties left, 
then the eligible party from the last iteration (must be 
only one such, because if there were more than one, we know 
that at least one will remain by claim \ref{cl:DP2LE}) will be the chosen leader. 
We will rewrite the algorithm for the eligible philosopher (Algorithm \ref{DPPrimeAlgoEligible}),
 and the DP to LE main routine (Algorithm \ref{DP2LEAlgo}).
The new eligible algorithm (Algorithm \ref{DPPrimeAlgoEligibleB}) is basically plugging our own DP algorithm
(Algorithm \ref {SB2DP}) instead of the DP in Algorithm \ref{DPPrimeAlgoEligible},
while forcing eligible parties who failed to lift a chopstick to become immediately eliminated.
Note that they do not send termination message in that case (only if they receive termination message),
because we don't know at that stage (upon failing to lift a chopstick) that someone can eat.
Algorithm \ref{DP2LEAlgoB} is the same as Algorithm \ref{DP2LEAlgo}, only after the run of $DP^\prime$, 
the philosophers compare their $h_j$ values, and take the minimal one as the next bound
on the amount of eligible parties (since both of the values are valid upper bound).

\begin{algorithm}
\caption{$DP^\prime$-eligible for bound $N$ (output: $l_j$)}
\label{DPPrimeAlgoEligibleB}
\begin{algorithmic}[1]
\STATE apply Algorithm SB$(N$) \COMMENT {Algorithm \ref{SBAlgo} + \ref{DP5}}
\FOR [loop to lift two chopsticks] {$\alpha=0$ \TO $1$ (include)}
	\IF [the philosopher belongs to the $\alpha$-group] {$g_j=\alpha$}
		\IF [left neighbor is eligible] {$l_{j+1}=1$}
			\STATE try to lift the left chopstick
		\ELSE
			\STATE send the left neighbor $P_{j+1}$  lift-left message
			\STATE sleep
			\IF {received fail message from left neighbor $P_{j+1}$}
				\STATE sleep \COMMENT{will be waken up with terminate message}				
			\ELSIF {receive terminate message from the right neighbor $P_{j-1}$}
				\STATE send terminate message to the left neighbor $P_{j+1}$ with wake-up message
				\RETURN $0$ \COMMENT {got eliminated}
			\ENDIF
		\ENDIF
	\ELSE
		\STATE try to lift the right chopstick
		\IF [eliminated - but does not send termination message] {failed to lift the chopstick}
			\STATE sleep \COMMENT{will be waken up with terminate message}
			\STATE send terminate message to the left neighbor $P_{j+1}$ with wake-up message
			\RETURN $0$
		\ENDIF
	\ENDIF
\ENDFOR \COMMENT {in possession of both chopsticks}
\STATE send terminate message to the left neighbor $P_{j+1}$ with wake-up message
\RETURN $1$ \COMMENT {still eligible}	
\end{algorithmic}
\end{algorithm}

Recall that for $n<N$ it is possible that the SB will not work, hence giving us a new bound $N-1$.
Since the SB is the expensive part in the DP algorithm, the worst case scenario is that the SB will 
not work $N-n$ times (because if it does work, then our next iteration value for eligible parties will 
be halved, and not reduced by one).
Therefore the complexity will be $O(N^2\cdot n)$ in time and classical communication,
$O(N^2)$ quantum communication complexity, and local classical memory of $O(\log N)$,
quantum memory of $O(1)$, per philosopher.
This proves the second half of Theorem  \ref{cor:better LE}.

\begin{algorithm}
\caption{DP to LE algorithm (input:$N$, output:$l_j$)}
\label{DP2LEAlgoB}
\begin{algorithmic}[1]
\STATE initialize a bit $l_{j}=1$ \COMMENT {indicating eligible/hungry}
\STATE initialize a counter $h_{j}=N$ \COMMENT {counts the number of eligible parties}
\WHILE {$h_{j}>1$}
	\STATE run the $DP^\prime$ protocol, set $(l_j,h_j)$ to the output 
	\STATE run procedure to put down the chopsticks \COMMENT{in the obvious way, with announcing messages}
	\STATE initialize a counter $c_j=1$ \COMMENT {count messages - to compare $h$ values}
	\STATE send $h_j$ to the left neighbor $P_{j+1}$ with a wake-up message
	\WHILE [didn't get $n$ messages] {$c_j<n$}
		\IF {received $h$ from right neighbor $P_{j-1}$}
			\STATE increase the counter $c_j=c_j+1$
			\IF [different value for amount of eligible parties] {$h<h_j$}
				\STATE set $h_j=h$ \COMMENT {take the minimal value}
			\ENDIF
			\STATE send $h_j$ to the left neighbor $P_{j+1}$ with a wake-up message
		\ELSE
			\STATE sleep
		\ENDIF
	\ENDWHILE 
\ENDWHILE
\RETURN $l_j$
\end{algorithmic}
\end{algorithm}

\begin{thm} \label{thm:LEUnknown}
\label{ThLEB} Algorithm \ref{DP2LEAlgoB} is a deterministic quantum LE algorithm 
on a ring, where
only a bound $N$ on $n$ is known, which uses $O\left(N^2\cdot n\right)$ time complexity,
 $O\left(1\right)$ quantum memory and 
$O\left(\log N\right)$ classical memory per philosopher,
and total quantum bit communication complexity of $O\left(N^2\right)$  and classical bits communication
complexity of $O\left(N^2\cdot n\right)$.
\end{thm}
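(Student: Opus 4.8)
The plan is to analyze Algorithm \ref{DP2LEAlgoB} by combining the correctness argument for the known-$n$ case (Lemma \ref{lem:DPtoLE} and Claim \ref{cl:DP2LE}) with the bound-reduction mechanism already built into Algorithm \ref{DP5} and the modified eligible routine Algorithm \ref{DPPrimeAlgoEligibleB}. First I would establish \emph{correctness}: I must show the protocol never elects two leaders, and always elects exactly one. The no-two-leaders part follows from Claim \ref{cl:DP2LE}: whenever at least two parties are eligible at the start of a $DP'$ iteration, the simulation of the original DP on the eligible parties is faithful up to the first philosopher eating, so at least one eligible party survives and at least half get eliminated; hence the number of eligible parties strictly decreases but never hits zero while it is $\ge 2$. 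The termination-with-one-leader part is where the bound issue bites: once a single eligible party remains, it may not realize it is alone (it would be trying to lift the same chopstick twice). Here I would invoke the specific structure of Algorithm \ref{DPPrimeAlgoEligibleB}, which calls SB$(N)$ (Algorithm \ref{SBAlgo} + \ref{DP5}): with one party the SB sanity check ($n=1$ on line 1 of Algorithm \ref{SBAlgo}) is not triggered since it holds the \emph{bound} $N$, but the counting phase (Algorithm \ref{DP5}) will reach $v_j = N$ and return $SB(N-1)$, decrementing the bound; iterating, the lone party eventually runs $SB(1)$, hits the sanity check, declares itself leader. I would also note that while $h_j \ge 2$, the comparison-and-minimize step in Algorithm \ref{DP2LEAlgoB} (lines 6--18) keeps every philosopher's $h_j$ a valid common upper bound on the number of remaining eligible parties, so the outer \texttt{WHILE} loop is driven consistently across all parties.

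Next I would bound the \emph{complexity}, arguing as in the proof of Theorem \ref{ThLEA} but accounting for the $N$ versus $n$ gap. The outer loop runs until $h_j = 1$; since each successful $DP'$ iteration at least halves the true number of eligible parties (effectively there are as many chopsticks as eligible parties, so at most $\lfloor L/2\rfloor$ can hold two), there are $O(\log n)$ \emph{productive} iterations. The extra cost comes from SB failures: as observed just before the theorem, SB$(N')$ can fail to break symmetry only when $N' > n$, and each failure decrements the bound by one, so across the whole run there are at most $N - n = O(N)$ such wasted SB calls. Each SB call with bound $N'$ costs $O(N'^2 \cdot n)$ time and classical communication by Theorem on Algorithm \ref{DP5} (counting phase is the dominant $O(N^2 \cdot n)$ term), $O(1)$ quantum memory, $O(\log N)$ classical memory, and only $O(n)$ qubits of quantum communication (each philosopher still sends one qubit; even when the eligible neighbour is far, the total qubit travel distance around the ring is $O(n)$ per iteration, so $O(N^2)$ quantum communication over all $O(N)$ iterations). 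The message-passing, counting, and $h_j$-comparison overhead of Algorithms \ref{DPPrimeAlgoPrime}, \ref{DPPrimeAlgoEligibleB}, \ref{DPPrimeAlgoEliminate} and \ref{DP2LEAlgoB} adds only $O(n)$ messages per philosopher per iteration, i.e. $O(n^2)$ classical communication per iteration, subsumed by the $O(N^2 \cdot n)$ bound. Summing the at most $O(N)$ iterations gives $O(N^2 \cdot n)$ time and classical communication, $O(N^2)$ quantum communication, $O(1)$ quantum memory and $O(\log N)$ classical memory per philosopher, as claimed.

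I expect the main obstacle to be the correctness argument in the degenerate regime where the number of eligible parties is $1$ but the known bound is larger, since this is precisely the case Claim \ref{cl:DP2LE} does not cover (it assumes $L > 1$) and where a naive DP simulation would deadlock. The key is to verify carefully that Algorithm \ref{DPPrimeAlgoEligibleB} together with the SB bound-halving mechanism handles this: the single surviving party must not be eliminated by a failed chopstick lift without a leader being declared, and must eventually detect that its bound has shrunk to $1$. I would also need to check the edge interactions when $h_j$ values differ transiently between parties — i.e. that the $h_j$-minimization loop in Algorithm \ref{DP2LEAlgoB} converges so that all parties enter and exit the outer \texttt{WHILE} together — and that an eliminated party who receives a terminate message can safely terminate (which follows from the master/slave observation after Algorithm \ref{DPPrimeAlgoEliminate}). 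Once these corner cases are settled, the complexity bookkeeping is routine and parallels Theorem \ref{ThLEA}.
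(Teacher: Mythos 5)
Your correctness argument follows the paper's own route: no-two-leaders via Claim \ref{cl:DP2LE} in the regime of at least two eligible parties, and the degenerate one-survivor regime handled exactly as the paper handles it, through the bound-decrement mechanism of Algorithm \ref{DP5} (a failed symmetry breaking lowers the bound by one until the lone survivor reaches bound $1$ and the sanity check of Algorithm \ref{SBAlgo} declares him leader), together with elimination-upon-failed-lift in Algorithm \ref{DPPrimeAlgoEligibleB} and the $h_j$-minimization in Algorithm \ref{DP2LEAlgoB}. On that side you and the paper coincide, including the corner cases you flag (termination via the master/slave observation, consistency of the $h_j$ values).

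The complexity bookkeeping, however, is inconsistent as written. You charge ``each SB call with bound $N'$'' a cost of $O(N'^{2}\cdot n)$ and then sum over $O(N)$ calls, yet conclude $O(N^{2}\cdot n)$; taken literally that product is $O(N^{3}\cdot n)$. The figure $O(N^{2}\cdot n)$ quoted from the DP-with-bound theorem is already the aggregate over all internal retries $SB(N),SB(N-1),\dots$ inside one DP run, not the cost of a single attempt. A single SB attempt with current bound $N'$ costs only $O(N'\cdot n)\le O(N\cdot n)$ time and classical communication (the counting phases run up to the bound, with the $n$ philosophers each relaying a message once) plus $O(n)$ qubits. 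The correct amortization, which is the paper's, is: the bound is monotonically non-increasing and each failed attempt decrements it by one, so there are $O(N)$ failed attempts over the entire execution (note the bound can drop below $n$ after eliminations, so ``at most $N-n$ failures'' is also slightly off, though still $O(N)$), plus $O(\log n)$ successful attempts; multiplying $O(N)$ attempts by $O(N\cdot n)$ time per attempt gives $O(N^{2}\cdot n)$ time and classical communication, and $O(N)\cdot O(n)\le O(N^{2})$ qubits. With the per-attempt cost corrected, your argument matches the paper's and yields the stated bounds.
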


\section{Open questions \label{sec:Open-questions}}

During this work a few interesting questions have arose.

The first obvious question is to achieve exact DP with only constant memory: 
\begin{ques}
Is it possible to solve the DP problem
with constant quantum memory (per philosopher) and with constant classical
memory as well (hence the algorithm cannot depend in any way on the 
number of parties).
\end{ques} 

If not, then maybe it is possible to improve our $\log n$ classical
memory to something better (like $\log\log n$)? Also, perhaps a lower
bound can be proved, if an $O(1)$ memory protocol indeed does not exist? 

This leads to a more general question:

\begin{ques}
Is there a constant depth, translation invariant, quantum circuit 
over $n$ qubits set on a circle, which is a symmetry breaking protocol? 
(i.e.
generates a state with zero support on 
$sp\left(|0^{n}\rangle,|1^{n}\rangle\right)$)
\end{ques} 

A quantum circuit acting on qubits $q_{1},q_{2},\ldots,q_{n-1},q_{n}$
is called \emph{translation invariant} if it acts the same on the
shifted qubits $q_{d},q_{d+1},\ldots,q_{n},q_{1},q_{2},\ldots,q_{d-1}$
for every $d\leq n$ .

\begin{figure}[H]
\caption{circuit}

\includegraphics[scale=0.3]{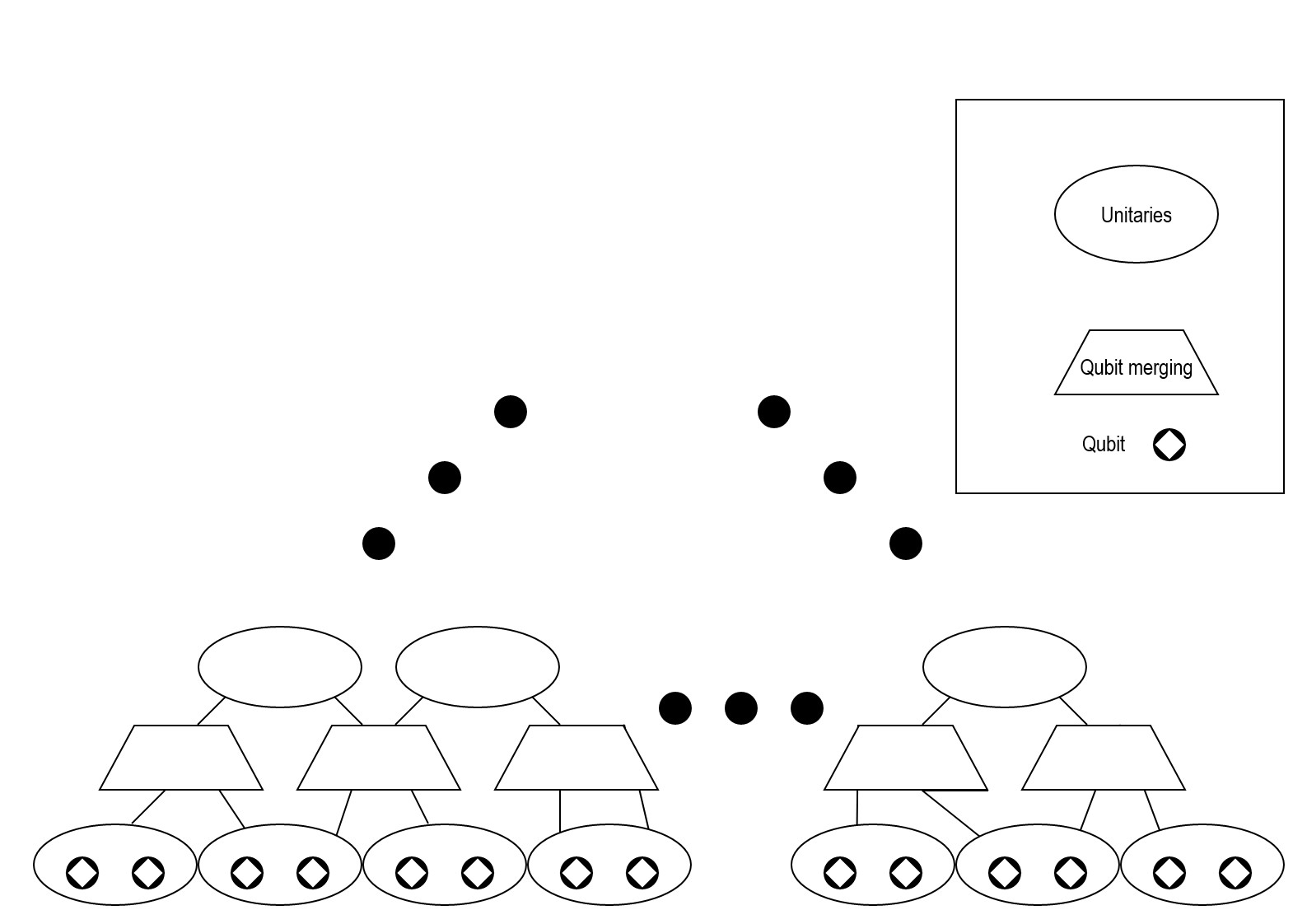}
\end{figure}

If the answer to this question is positive, that would of course provide
a positive answer to the first question, but perhaps this is too strong 
a requirement; it is conceivable that
a translation invariant quantum circuit (involving measurements) can
break the symmetry by acting for an {\it unbounded} amount of time, without
counting. However, we conjecture that the answer to both questions
is negative, based on the preliminary intuition that a constant depth circuit
cannot create entanglement between far away qubits 
(and also from our failures
to find a protocol).

\begin{rem*}
Notice that the notion of a translation invariant quantum circuit,
requires some care. Since a unitary acts on a set of defined qubits,
and a qubit cannot participate simultaneously in several unitaries,
the parties must know in advance at least their parity location. Knowing
the parity, already breaks the symmetry. However, this can be overcome
if we let each party have a left qubit and a right qubit.
\end{rem*}

\begin{ques} {\bf Local counting} 
In all our algorithms, we use counting. That seems to require 
logarithmically many bits of memory. 
Is it possible to solve the exact DP problem without counting? 
Alternatively, 
can one count to some bound on $n$, namely, raise a flag when so many time 
steps have passed, using $o\left(\log n\right)$ bits (and/or qubits)? 
\end{ques}

\begin{ques}
Attempting to clarify the arrows of implications in table \ref{tableUnknown}:
\begin{itemize}
\item Is it possible to solve the exact DP problem without 
knowing the size of the ring (or an upper bound on it)? Or does 
exact DP imply an algorithm for the ring size problem? 
\item Is it possible to discover the size of the ring (under the same settings)
without electing a leader? 
The reduction from ring size to LE in table \ref{tableUnknown} use registers of logarithmic size. But perhaps Ring size can be solved in less than that memory, and then LE does not follow. This of course is related to the question of whether RS can be done with constant memory. 
\end{itemize}
\end{ques}

\begin{ques} {\bf The ring size problem} (extending Question 
\ref{QExact}) Replace the ? inside the table in \ref{tableUnknown}; in 
particular, discover exact protocols for LE, DP or Ring size problem in 
the case of $n$ being unknown. 
\end{ques}

\begin{ques} {\bf Other exact quantum distributed algorithms} 
Of course, it is interesting to see if other exact 
quantum distributed algorithms can be derived when classical deterministic 
algorithms are not known to exist, or are far less efficient. 
An intriguing specific open question concerns the complexity of 
graph problems in the well-studied LOCAL model of distributed computing, 
introduced by Linial \cite{Linial87}.
Quoting the recent result of \cite{G17} in this area, it is widely known 
that for many of the classic distributed graph problems (including maximal independent set (MIS) and ($\Delta+1$)-vertex coloring), the randomized complexity is at most polylogarithmic in the size $n$ of the network, 
while the best deterministic complexity is typically $2^{O( \sqrt{log n})}$ . 
Understanding and potentially narrowing down this exponential gap is considered to be one of the central long-standing open questions in the area of distributed graph algorithms.
Following our work, it is intriguing to ask whether efficient 
{\it deterministic} algorithms are possible to achieve in the quantum 
setting for these 
problems. 
\end{ques} 

\bibliographystyle{amsalpha}
\bibliography{dp}

\appendix

\section{The magic unitaries\label{sec:The-magic-unitaries}}

We will now describe (for the completeness of the paper) \cite{Tani}'s
magic unitaries. This is taken from \cite{Tani} - proofs and further
details can be found in \cite{Tani}.

There are different unitaries for even $n$, and for odd $n$.

If $n$ is even then apply $U_{n}$ on your qubit $R_{0}$:

\[
U_{n}=\frac{1}{\sqrt{2}}\left(\begin{array}{cc}
1 & e^{-i\frac{\pi}{n}}\\
-e^{-i\frac{\pi}{n}} & 1
\end{array}\right)
\]
If $n$ is odd then apply $CNOT$ on the qubit $R_{0}$ with another
qubit $R_{1}=|0\rangle$ and apply $V_{n}$ on them: 
\[
V_{n}=\frac{1}{\sqrt{L_{k+1}}}\left(\begin{array}{cccc}
\sqrt{2}^{-1} & 0 & \sqrt{L_{n}} & \frac{e^{i\frac{\pi}{n}}}{\sqrt{2}}\\
\sqrt{2}^{-1} & 0 & -\sqrt{L_{n}}e^{-i\frac{\pi}{n}} & \frac{e^{-i\frac{\pi}{n}}}{\sqrt{2}}\\
\sqrt{L_{n}} & 0 & \frac{e^{-i\frac{\pi}{2n}}I_{n}}{i\sqrt{2}L_{2n}} & -\sqrt{L_{n}}\\
0 & \sqrt{L_{n}+1} & 0 & 0
\end{array}\right)
\]
where $L_{n}$ and $I_{n}$ are the real and imaginary parts of $e^{i\frac{\pi}{n}}$.
\begin{claim}\label{thm:tani}
Applying the magic unitaries on the state $\ket{0^{n}}+\ket{1^{n}}$
shared by the $n$ eligible parties, transforms it to a state with
zero support on $\text{span}\left\{ \ket{0^{n}},\ket{1^{n}}\right\} $
if $n$ is even, and zero support on $\text{span}\left\{ \ket{\left(00\right)^{n}},\ket{\left(01\right)^{n}},\ket{\left(10\right)^{n}},\ket{\left(11\right)^{n}}\right\} $
if $n$ is odd.

For proof see \cite{Tani}.
\end{claim}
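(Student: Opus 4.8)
The plan is to prove each zero-support statement by a direct amplitude computation; both cases then reduce to a single root-of-unity cancellation, and the only lengthy ingredient --- unitarity of the $4\times 4$ gate --- I would import from \cite{Tani}. For even $n$, write $U=U_n$; since $U^{\otimes n}\ket{0^n}=(U\ket 0)^{\otimes n}$ and $U^{\otimes n}\ket{1^n}=(U\ket 1)^{\otimes n}$, the amplitude of $\ket{0^n}$ in $U^{\otimes n}(\ket{0^n}+\ket{1^n})$ equals $(U_{00})^{n}+(U_{01})^{n}$ and the amplitude of $\ket{1^n}$ equals $(U_{10})^{n}+(U_{11})^{n}$, where the $U_{ab}$ are the entries of $U$ in the computational basis. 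Reading these off the displayed matrix and pulling out the factor $2^{-n/2}$, each amplitude has the form $2^{-n/2}\bigl(1+(\pm e^{\pm i\pi/n})^{n}\bigr)=2^{-n/2}\bigl(1-(\pm 1)^{n}\bigr)$ after using $(e^{\pm i\pi/n})^{n}=e^{\pm i\pi}=-1$; since $n$ is even this vanishes. Along the way I would record the one-line check that $U_n$ is unitary (unit-norm columns with vanishing inner product), so it is a legitimate single-qubit gate, and note that $U_n^{\otimes n}(\ket{0^n}+\ket{1^n})$ is nonzero since $U_n^{\otimes n}$ is unitary. Hence this state lies in the orthogonal complement of $\mathrm{span}\{\ket{0^n},\ket{1^n}\}$, which is exactly the claim for even $n$.

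For odd $n$, the preprocessing reduces the problem to the same pattern. The CNOT controlled on $R_0^{j}$ with fresh target $R_1^{j}=\ket 0$ turns $\ket{0^n}+\ket{1^n}$ into $\ket{(00)^n}+\ket{(11)^n}$, written in the basis of the $n$ pairs $(R_0^{j},R_1^{j})$. Applying $V_n^{\otimes n}$ pairwise, the amplitude of a candidate ``bad'' string $\ket{(ab)^n}$ with $ab\in\{00,01,10,11\}$ equals $(V_n)_{ab,00}^{\,n}+(V_n)_{ab,11}^{\,n}$ up to a common normalization, so only the two columns of $V_n$ indexed by the inputs $\ket{00}$ and $\ket{11}$ enter. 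Reading those two columns off the displayed matrix (with $L_m=\cos(\pi/m)$ and $I_m=\sin(\pi/m)$ the real and imaginary parts of $e^{i\pi/m}$), I would dispatch the four cases: for $ab=00$ the amplitude is proportional to $1+(e^{i\pi/n})^{n}=1+e^{i\pi}=0$; for $ab=01$ it is proportional to $1+(e^{-i\pi/n})^{n}=0$; for $ab=10$ the two entries are $\sqrt{L_n}$ and $-\sqrt{L_n}$, so the amplitude is proportional to $L_n^{\,n/2}\bigl(1+(-1)^{n}\bigr)=0$ because $n$ is odd; and for $ab=11$ both entries in that row are $0$. Hence the $2n$-qubit output has zero support on $\mathrm{span}\{\ket{(00)^n},\ket{(01)^n},\ket{(10)^n},\ket{(11)^n}\}$, as claimed.

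The step I expect to be the main obstacle is not the amplitude computation above but the background fact that $V_n$ is genuinely a unitary: checking orthonormality of all four of its columns is what pins down the entries $\sqrt{L_n}$, $e^{-i\pi/(2n)}I_n/(i\sqrt 2\,L_{2n})$, and so on, together with the normalization (which appears to be $1/\sqrt{L_n+1}$; the $L_{k+1}$ in the displayed matrix looks like a typographic slip), and it requires the double-angle relations among $L_n$, $L_{2n}$, $I_n$ and the identity $L_m^{2}+I_m^{2}=1$. I would cite \cite{Tani} for this verification rather than reproduce it. Finally I would add a short remark on why the ancilla is unavoidable for odd $n$: if a single-qubit unitary $\left(\begin{smallmatrix}a & b \\ c & d\end{smallmatrix}\right)$ satisfied $a^{n}+b^{n}=c^{n}+d^{n}=0$, then $|a|=|b|$ and $|c|=|d|$, column-orthogonality would force $|a|^{2}=|c|^{2}=\tfrac12$ and $b/a=-(d/c)$, whence $(b/a)^{n}=(-1)^{n}(d/c)^{n}$, which combined with $(b/a)^{n}=(d/c)^{n}=-1$ gives $(-1)^{n}=1$ --- impossible for odd $n$. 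The two sign-flipped entries in the $\ket{10}$-row of $V_n$ are precisely what allows the $4\times 4$ construction to sidestep this parity obstruction.
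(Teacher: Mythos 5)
Your proposal is correct, and it actually does more than the paper, which offers no argument of its own for this claim and simply defers wholesale to \cite{Tani}. Your route --- expanding $U_n^{\otimes n}(\ket{0^n}+\ket{1^n})$ (resp.\ $V_n^{\otimes n}(\ket{(00)^n}+\ket{(11)^n})$ after the CNOTs) and computing the amplitude of each candidate bad string as a sum of two $n$-th powers of matrix entries, which cancels via $(e^{\pm i\pi/n})^n=-1$ together with the parity of $n$ --- is exactly the kind of direct verification that makes the appendix self-contained, and your four case checks for $ab\in\{00,01,10,11\}$ read off the correct columns of $V_n$. Two small points to flag if you write this up: (i) your computation tacitly assumes the rows and columns of $V_n$ are ordered $\ket{00},\ket{01},\ket{10},\ket{11}$, which matches \cite{Tani} but should be stated, since the cancellation pattern depends on which row is which; (ii) you are right that the displayed normalization $1/\sqrt{L_{k+1}}$ should read $1/\sqrt{L_n+1}$ (the column norms confirm this), and deferring the full unitarity check of $V_n$ to \cite{Tani} is reasonable --- it is the only ingredient you import, whereas the paper imports the entire statement. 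Your closing remark explaining why an ancilla is unavoidable for odd $n$ (the parity obstruction $(-1)^n=1$ forced by column orthogonality if a single-qubit gate killed both $\ket{0^n}$ and $\ket{1^n}$) is a correct and genuinely illuminating addition not present in the paper, though it is not needed for the claim itself.
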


\section*{Acknowledgment}

We would like to thank Prof. Michael Ben-Or for his support and advice
with this paper.
And to Dr. Or Sattath for his comments and remarks on earlier drafts of the paper.
\end{document}